\newcommand{\Ind}{\mathds{1}}
\newcommand{\ZZ}{\mathbb{Z}}
\newcommand{\RR}{\mathbb{R}}
\newcommand{\Var}{\mathrm{Var}}
\newcommand{\Ber}{\mathop{\mathrm{Ber}}}
\newcommand{\Unif}{\mathrm{Uniform}}
\newcommand{\Bin}{\mathrm{Binomial}}
\newcommand\indep{\protect\mathpalette{\protect\independenT}{\perp}}
\def\independenT#1#2{\mathrel{\rlap{$#1#2$}\mkern2mu{#1#2}}}
\newcommand{\m}{\mathcal}
\newtheorem{theorem}{Theorem}
\newtheorem{proposition}{Proposition}
\newtheorem{corollary}{Corollary}
\newtheorem{definition}{Definition}
\newtheorem{lemma}{Lemma}
\newenvironment{proof}[1][Proof]{\noindent\textbf{#1.} }{\ \rule{0.5em}{0.5em}}
\begin{document}
\title{A Lower Bound on the Expected Distortion of Joint Source-Channel Coding}

\author{Yuval Kochman, Or Ordentlich and Yury Polyanskiy
	\thanks{}
	\thanks{Y. Kochman and O. Ordentlich are with the School of Computer Science and Engineering, Hebrew University of Jerusalem, Israel (emails: yuval.kochman@mail.huji.ac.il, or.ordentlich@mail.huji.ac.il).  Y. Polyanskiy is with the Massachusetts Institute of Technology, MA, USA (email: yp@mit.edu)}
	%\thanks{}
	\thanks{The work of Y. Kochman was supported by the ISF under Grant 1555/18 and by the HUJI Cyber Security Research Center in conjunction with the Israel National Cyber Bureau in the Prime Minister's Office. The work of O. Ordentlich was supported by the ISF under Grant 1791/17. The work of Y. Polyanskiy was supported, in part, by the  Center for Science of Information (CSoI),
		and the NSF Science and Technology Center, under grant agreement CCF-09-39370.
		The material in this paper was presented in part at the 2018 and 2019 International Symposium on Information Theory~\cite{kop18,kop19isit}.}
	\thanks{}}

\date{}

\maketitle

\begin{abstract}
We consider the classic joint source-channel coding problem of transmitting a memoryless source over a memoryless channel. The focus of this work is on the long-standing open problem of finding the rate of convergence of the smallest attainable expected distortion to its asymptotic value, as a function of blocklength $n$. Our main result is that in general the convergence rate is not faster than $n^{-1/2}$. In particular, we show that for the problem of transmitting i.i.d uniform bits over a binary symmetric channels with Hamming distortion, the smallest attainable distortion (bit error rate) is at least $\Omega(n^{-1/2})$ above the asymptotic value, if the ``bandwidth expansion ratio'' is above $1$. 
\end{abstract}

\section{Introduction}
\label{sec: intro}

Over the last decade there has been a great progress in understanding the rate of convergence to the asymptotic fundamental limits in various communication and compression setups. Yet, there remain some setups where although the asymptotic limits are known, the rate of convergence to those is not known. Two prominent examples are to show that the
Gaussian-type $1/\sqrt{n}$ backoff, common to many settings, also arises in the  joint source-channel coding (JSCC) setup under expected
distortion, and in the multiple access channel (MAC). The main difficulty in the JSCC setup is
that due to averaging, the Gaussian variations in source/channel
quality may possibly be canceled out (as in fact happens when the `not to code' conditions of~\cite{gastpar03} are met). The
fundamental issue in the MAC setup~\cite{ap15} is that the multi-user
interference, unless randomly-coded, should not in general satisfy
the central limit theorem and hence there is no reason to believe the back-off should
be of $1/\sqrt{n}$ order. This paper makes progress on the first of these open problems.

Specifically, we consider the classical point-to-point joint source-channel coding problem, depicted in  Figure~\ref{fig:jsccgen}. In this setup, an encoder observes a sequence $S^m=(S_1,\ldots,S_m)$ of $m$ i.i.d. samples generated according to the distribution $P_S$, and would like to send this sequence through $n$ channel uses of the memoryless channel $Q_{Y|X}$. To that end, the encoder maps the source sequence $S^m$ to the channel input 
$X^n$ using an encoding function $\m{E}:\m{S}^m\to\m{X}^n$. The channel input $X^n$ is transmitted through the channel $Q^{\otimes n}_{Y^n|X^n}(y^n|x^n)=\prod_{i=1}^n Q_{Y|X}(y_i|x_i)$ and the decoder that observes the channel output $Y^n$, generates an estimate $\hat{S}^m=(\hat{S}_1,\ldots,\hat{S}_m)$ of the source sequence, using a decoding function $\m{D}:\m{Y}^n\to \hat{\m{S}}^m$. Let $d:\m{S}\times\hat{\m{S}}\to\RR$ be some distortion measure, and define
\begin{align}
d(S^m,\hat{S}^m)=\sum_{i=1}^m d(S_i,\hat{S}_i).
\end{align}

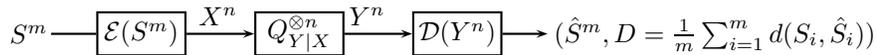
\begin{figure}[t]
	\begin{center}
		\psset{unit=0.6mm}
		\begin{pspicture}(0,50)(250,70)
		\rput(25,0){
			\rput(0,60){$S^m$}\psline(5,60)(15,60)\psframe(15,65)(35,55)\rput(25,60){$\m{E}(S^m)$}
			\psline{->}(35,60)(50,60)\rput(42,64){$X^n$}\psframe(50,65)(70,55)\rput(60,60){$Q^{\otimes n}_{Y|X}$}
			\psline{->}(70,60)(85,60)\rput(75,64){$Y^n$}\psframe(85,65)(105,55)\rput(95,60){$\m{D}(Y^n)$}\psline{->}(105,60)(115,60)\rput(152,60){$(\hat{S}^m,D=\frac{1}{m}\sum_{i=1}^md(S_i,\hat{S}_i))$}
		}
		\end{pspicture}
	\end{center}
\caption{The $(n,\rho,P_S,Q_{Y|X})$ joint source channel coding problem. It is assumed that $n=\rho m$.}
\label{fig:jsccgen}
\end{figure}

For a given source-channel pair, one is interested in the statistics of the distortion $d(S^m,\hat{S}^m)$ that may be obtained, as a function of the blocklengths $m$ and $n$. It is convenient to think of the \emph{bandwidth expansion ratio} $\rho = n/m$ as fixed (ignoring rounding effects), and then consider the performance as a function of $n$. As the full statistics of the distortion are complicated, usually one of two figures of merit is considered: the \emph{expected} distortion \[ D=\frac{1}{m}\mathbb{E} d(S^m,\hat{S}^m), \] or the excess-distortion probability, which for any threshold $D$ is given by
\[ \epsilon(D) = \Pr \left[\frac{1}{m} d(S^m,\hat{S}^m) > D \right] . \]

The focus of this work is the expected distortion, for which we define the fundamental limit for the JSCC problem by the function
\begin{align}
D^*_n = D^*(n,\rho,P_S,Q_{Y|X})\triangleq \frac{1}{m}\min_{\substack{{\m{E}:\m{S}^m\to\m{X}^n}\\{\m{D}:\m{Y}^n\to\hat{\m{S}}^m}}}\mathbb{E}d(S^m,\hat{S}^m).
\end{align}
By the separation principle \cite{coverthomas}, we have that
\begin{align}
D^*_\infty =D_{\infty}^*(\rho,P_S,Q_{Y|X})\triangleq \lim_{n\to\infty}D^*(n,\rho,P_S,Q_{Y|X})= D_{P_S}(\rho C(Q_{Y|X})),
\end{align}
where $D_{P_S}(R)$ is the distortion-rate function of a source with 
distribution $P_S$, and $C(Q_{Y|X})$ is the capacity of the channel $Q_{Y|X}$. We study the convergence of the expected distortion to its asymptotic value. To that end, we define and study the quantity
\begin{align}
\Delta^*_n=\Delta^*(n,\rho,P_S,Q_{Y|X})\triangleq D^*_n-D_\infty^*.
\end{align}
%and our main result is that in general $\Delta^*_n\geq \Omega\left(\frac{1}{\sqrt{n}}\right)$.

In terms of excess distortion, many facts are known about the convergence of the distortion to the infinite-blocklength limit. For any threshold $D > D^*_\infty$ we have that $\epsilon(D)$ is exponentially small, with upper and lower bounds on the exponents (which agree for low enough $D$) given in~\cite{CsiszarJointExponent}. To the contrary, for any $D <  D^*_\infty$ it holds that $1-\epsilon(D)$ is exponentially small \cite{JSCC_StrongConverse}. The dispersion, i.e., convergence of $D$ to $D^*_\infty$ for fixed excess-distortion probability, as well as finite-blocklength bounds, were derived in~\cite{JSCC_Disp_Allerton,KostinaVerduJSCC}.

%This work, on the contrary, studies the convergence of the expected distortion, which is a much less understood quantity. To that end, we define
%\begin{align}
%\Delta^*_n=\Delta^*(n,\rho,P_S,Q_{Y|X})\triangleq D^*_n-D_\infty^*.
%\end{align}

%For the average distortion case, however, one may argue that $\Delta_n^*=o(1/\sqrt{n})$ as follows. 
In problems where
the error criterion is defined as a hard-constrained $0/1$-loss
(such as excess-distortion above), the Gaussian variations in channel
quality cannot be leveraged and the fundamental limit experiences a
$1/\sqrt{n}$ back-off from its asymptotic values. However, in 
problems with averaging, the dispersion term disappears, for example~\cite{ppv11,ppv11b,ydkp14}, because these
(mean-zero) Gaussian variations can be averaged out. If indeed the variations canceled out in a similar way for average distortion, it would suggest
that the true behavior of the fundamental limit $\Delta_n^*$ should
indeed be $o(1/\sqrt{n})$.  

As further evidence that $\Delta_n^*$ may be $o(1/\sqrt{n})$, consider the special case of the JSCC where $Q_{Y|X}$ is a clean bit-pipe of rate $R$, for which the problem reduces to lossy source coding. In this case, it is known~\cite{ZhangYangWei97,YuSpeedCovering} that for any discrete source and rate $R>0$,
\[ \Delta^*_n = \m{O}\left(\frac{\log n}{n} \right). \]
%It is therefore natural to suspect that this holds in general for the JSCC problem.

Furthermore, for some source-channel pairs the optimal asymptotic distortion is already achievable using a scalar scheme. See~\cite{gastpar03} for necessary and sufficient conditions. For example, this is the case for the problem of sending a binary symmetric source (BSS) over a binary symmetric channel (BSC) under expected Hamming distortion with $\rho=1$.
In light of this, one might hope that a low redundancy is possible in general. 

Despite all this evidence for sub-$\sqrt{n}$ convergence, this work proves that it is not the case, by showing that there exist cases where $\Delta^*_n=\Omega\left(\frac{1}{\sqrt{n}}\right)$. More concretely, we study the very same symmetric binary-Hamming problem mentioned above, but with $\rho>1$, and derive a lower bound on $D^*_n$. %We note that it is still not clear whether this is the correct scaling in general. %A straightforward separation approach only suffices to show that $\Delta^*_n=\m{O}\left(\sqrt{\frac{\log{n}}{n}}\right)$, and to the best of our knowledge no general JSCC coding scheme that achieves better scaling in general is known. See~\cite{kop19} for a more comprehensive discussion.

%We answer this question in the negative, by showing that there exist distributions $P_S$ and channels $Q_{Y|X}$ for which \[ \Delta^*_n=\Omega\left(\frac{1}{\sqrt{n}}\right). \] In particular, we will study the problem of sending a binary symmetric source (BSS) over a binary symmetric channel (BSC), under expected Hamming distortion, and derive a lower bound on $D^*(n,\rho,P_S,Q_{Y|X})$ for $\rho>1$.

Our approach to proving this result goes through a reduction to a JSCC broadcast problem. Let $\hat{Q}_{Y|X}$ be the empirical channel realization in the point-to-point JSCC problem. The main observation in our distortion lower bound is that a good JSCC code must achieve distortions close to $D^*_{\infty}(\rho,P_S,\hat{Q}_{Y|X})$ simultaneously for all ``probable'' channel realizations. To show that this is impossible, we reduce the problem to that of broadcasting a source to two users with different channel conditions, corresponding to one empirical channel that is better than $Q_{Y|X}$, and one that is worse. An outer bound on the distortions, in the infinite-blocklength limit, was derived in~\cite{rfz06} for the quadratic-Gaussian case, and was recently generalized by Khezeli and Chen~\cite{kc15,kc16}. Here, we generalize these bounds for our scenario of interest, and show that for the binary Hamming case, with bandwidth expansion ratio $\rho>1$, it is impossible to design a JSCC code that will be optimal simultaneously for both channel conditions. Unfortunately, for the case $\rho<1$ this technique falls short of providing similar bounds, mainly because in this regime $D^*_{\infty}$ changes too slowly with the crossover probability of the BSC.

We note that it is still not clear whether the $n^{-1/2}$ scaling of $\Delta^*_n$ is achievable in general. In particular, a separation-based coding scheme can only achieve $\Delta_n=D_n-D^{*}_{\infty}= \Omega\left(\sqrt{\frac{\log n}{n}}\right)$, and to the best of our knowledge no JSCC coding scheme that achieves better scaling in general is known. See Section~\ref{sec:discuss}. Thus, despite the progress made in this work, the exact correct scaling of $\Delta^*_n$ remains an open question.

The structure of the paper is as follows. In Section~\ref{sec:results}, we define the binary instance of the JSCC problem we analyze, state the main result, and give a high-level sketch of proof. Section~\ref{sec:JSCCbounds} develops outer bounds on the distortions that can be achieved when sending a source over a broadcast channel. The bounds from Section~\ref{sec:JSCCbounds} are then specialized in Section~\ref{sec:sphericalBC}, for the problem of sending a binary source over a binary additive spherical noise, i.e., noise uniform on an Hamming sphere, broadcast channel. The full proof of our main result is given in Section~\ref{sec:proofmain}. Some auxiliary results for the problem of sending a source over a broadcast channel are given in Section~\ref{sec:aux}. The paper concludes with a discussion in Section~\ref{sec:discuss}.

\section{Main Result and Main Technical Contribution}
\label{sec:results}

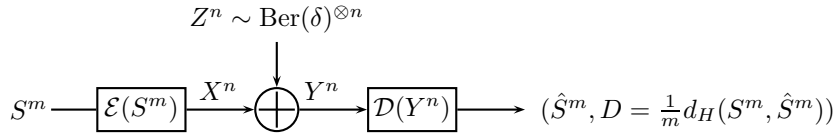
\begin{figure}[t]
	\begin{center}
		\psset{unit=0.6mm}
		\begin{pspicture}(0,50)(250,85)
			\rput(25,0){
				\rput(0,60){$S^m$}\psline(5,60)(15,60)\psframe(15,65)(35,55)\rput(25,60){$\m{E}(S^m)$}
				\psline{->}(35,60)(50,60)\rput(42,64){$X^n$}\pscircle(55,60){5}\psline(55,64)(55,56)\psline(51,60)(59,60)
				\psline{->}(55,75)(55,65)\rput(55,80){$Z^n\sim\Ber(\delta)^{\otimes n}$}
				\psline{->}(60,60)(75,60)\rput(65,64){$Y^n$}\psframe(75,65)(95,55)\rput(85,60){$\m{D}(Y^n)$}\psline{->}(95,60)(110,60)\rput(146,60){$(\hat{S}^m,D=\frac{1}{m}d_H(S^m,\hat{S}^m))$}
			}
		\end{pspicture}
	\end{center}
\caption{The $n(\rho,\delta)$ binary joint source-channel coding problem. it is assumed that $n=\rho m$.}
\label{fig:binaryJSCC}
\end{figure}

We study the binary symmetric joint source-channel coding problem, depicted in Figure~\ref{fig:binaryJSCC},
which is a special case of the general problem introduced above.  The source is binary symmetric $S\sim\Ber(1/2)$, the channel $Q_{Y|X}$ is $\mathrm{BSC}(\delta)$ with $\delta < 1/2$, the reconstruction alphabet is $\hat{\m{S}}=\{0,1\}$ and the distortion measure is Hamming, i.e., $d(S,\hat{S})=\Ind_{\{S\neq \hat{S}\}}$,  such that \[ d_H(S^m,\hat{S}^m)=\sum_{i=1}^m \Ind_{\{S_i\neq\hat{S}_i\}}. \] We use the binary entropy function\footnote{Throughout, logarithms are taken to the natural base.} \[ h_b(x)=-x\log x-(1-x)\log(1-x) \] and its inverse restricted to the interval $[0,1/2]$ as $h_b^{-1}(\cdot)$. It will be convenient to extend the domain of the function $h_b^{-1}(\cdot)$ to $(-\infty,\log 2]$, such that $h_b^{-1}(t)=0$ for all $t\leq 0$. For $0\leq a,b\leq 1$ we also define the binary convolution \[ a*b=a(1-b)+b(1-a). \] The expected distortion for this problem is formally defined below. For two binary variables, or vectors, the notation $+$ is to be understood as addition modulo-$2$.

\begin{definition} \label{def:binary_JSCC}
	Let $S^m\sim\Ber(1/2)^{\otimes m}$, $Y^n = X^n + Z^n$ with $Z^n\sim\Ber(\delta)^{\otimes n}$ independent of $X^n$, and $\rho=n/m$. The minimum expected Hamming distortion for transmitting $S^m$ over the channel from $X^n$ to $Y^n$ is defined as
	\begin{align}
	D^*(n,\rho,\delta)\triangleq \frac{1}{m}
	\min_{\substack{{\m{E}:\{0,1\}^m\to\{0,1\}^n}\\{\m{D}:\{0,1\}^n\to\{0,1\}^m}}} \mathbb{E}d_H(S^m,\m{D}(\m{E}(S^m)+Z^n)).
	\end{align}
	and its convergence rate function is
	\begin{align}
	\Delta^*_n=\Delta^*(n,\rho,\delta)\triangleq D^*(n,\rho,\delta)-D(\rho,\delta),
	\end{align}
	where
	\begin{align}
	D(\rho,\delta)&\triangleq h_b^{-1}(\log2-\rho(\log2-h_b(\delta))),\label{eq:DdeltaDef}
	\end{align}
	is the asymptotic value of $D^*(n,\rho,\delta)$.
\end{definition}

In this problem, it is well known that $\Delta^*_n = 0$ for all $n$, when $\rho=1$. Here, we will give a non-trivial lower bound for $\rho > 1$. We will express our result in terms of the following functions:
\begin{subequations}
\begin{align}
\Phi(\delta)&\triangleq \frac{2}{(1-2\delta)\log\left(\frac{1-\delta}{\delta}\right)}+\frac{1}{\delta(1-\delta)\log^2\left(\frac{1-\delta}{\delta}\right)}\label{eq:PhiDef}\\
f(\rho,\delta)&\triangleq \frac{1}{\rho}\frac{\Phi(\delta)}{\Phi(D(\rho,\delta))},\label{eq:fdef}\\
\eta(\rho,\delta)&\triangleq 2\rho\frac{\log\left(\frac{1-\delta}{\delta}\right)}{\log\left(\frac{1-D(\rho,\delta)}{D(\rho,\delta)}\right)}\cdot\frac{D(\rho,\delta)\left(1-f(\rho,\delta)\right)^2}{2f(\rho,\delta)+4D(\rho,\delta)\left(1-f(\rho,\delta)\right)}\cdot\frac{1+f(\rho,\delta)}{f(\rho,\delta)}\label{eq:etaDef}.
\end{align}
\end{subequations}
Our main result is the following. 
\begin{theorem}
	In the binary JSCC problem of Definition~\ref{def:binary_JSCC}, for all $\rho>1$ we have that
	\begin{align}
	\Delta^*_n \triangleq D^*(n,\rho,\delta) - D(\rho,\delta) \geq \sqrt \frac {\delta(1-\delta)} {2\pi n}  \eta(\rho,\delta) +\m{O}(n^{-3/4}\log{n}) , \label{eq:fdec}
	\end{align}
	where $\eta(\rho,\delta)$ is as defined in~\eqref{eq:etaDef} and is strictly positive for $D(\rho,\delta)>0$.
	\label{thm:main}
\end{theorem}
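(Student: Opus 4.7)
The plan is to exploit the variability of the channel noise Hamming weight in order to reduce the single-user JSCC problem to a two-user JSCC broadcast problem, and then invoke a broadcast outer bound (developed in Sections~\ref{sec:JSCCbounds}--\ref{sec:sphericalBC}) forbidding simultaneous asymptotic optimality at two different noise levels.

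First I would condition on the Hamming weight $W=|Z^n|$. Since $W\sim\Bin(n,\delta)$ concentrates around $n\delta$ with standard deviation $\sigma=\sqrt{n\delta(1-\delta)}$, and conditional on $W=w$ the vector $Z^n$ is uniform on the Hamming sphere of radius $w$, the original BSC becomes a binary additive spherical-noise channel (BASNC) of weight $w$. Any encoder/decoder pair $(\m{E},\m{D})$ therefore induces a family of conditional distortions $D_w(\m{E},\m{D})$, with
\begin{align*}
\frac{1}{m}\mathbb{E}\,d_H(S^m,\m{D}(\m{E}(S^m)+Z^n))=\sum_{w=0}^{n}\Pr[W=w]\,D_w(\m{E},\m{D}).
\end{align*}
I would then single out two weights $w_-$ and $w_+$ sitting on opposite sides of $n\delta$ at distance $\Theta(\sigma)$ (for example $w_\pm=n\delta\pm c\sigma$ with an optimizable constant $c>0$) and treat $(\m{E},\m{D})$ as a broadcast code that must simultaneously serve the two virtual receivers $\mathrm{BASNC}(n,w_\pm)$.

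The next step would be to apply the JSCC broadcast outer bound derived in Section~\ref{sec:JSCCbounds}, which generalizes the bounds of~\cite{rfz06} and~\cite{kc15,kc16}, specialized to binary additive spherical noise in Section~\ref{sec:sphericalBC}. This produces a two-point trade-off of the form
\begin{align*}
D_{w_+}(\m{E},\m{D})\geq G_n\!\left(D_{w_-}(\m{E},\m{D})\right),
\end{align*}
where $G_n$ is, to leading order, the line tangent to the pair of single-user optima at $w_\pm$ whose slope is governed by the ratio $-\Phi(\delta)/\Phi(D(\rho,\delta))$, with $\Phi$ as in~\eqref{eq:PhiDef}. Writing $D_{w_\pm}=D(\rho,w_\pm/n)+\eps_\pm$, this bound precludes driving both $\eps_-$ and $\eps_+$ below $o(1/\sqrt{n})$ simultaneously; an explicit positive combination $\alpha_-\eps_-+\alpha_+\eps_+$ must be bounded away from $0$ on the scale $1/\sqrt{n}$.

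Finally I would aggregate over $w$. Convexity of $\delta\mapsto D(\rho,\delta)$ makes the pure Jensen gap $\tfrac{1}{2}(D(\rho,w_-/n)+D(\rho,w_+/n))-D(\rho,\delta)$ only $\m{O}(1/n)$ and therefore negligible, so the dominant term in $\Delta_n^*$ is the broadcast penalty above. Using the de~Moivre--Laplace local-CLT approximation to $\Pr[W=w_\pm]$ and summing across the relevant band of weights produces the prefactor $\sqrt{\delta(1-\delta)/(2\pi n)}$, while optimizing the placement of $w_\pm$ and substituting the local expansions of $h_b$ and $h_b^{-1}$ at $\delta$ and $D(\rho,\delta)$ respectively reproduces the explicit constant $\eta(\rho,\delta)$ in~\eqref{eq:etaDef}; the $\m{O}(n^{-3/4}\log n)$ slack absorbs the CLT error and the gap between the finite-$n$ BASNC distortion-rate function and its asymptote. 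The main obstacle will be calibrating the broadcast outer bound sharply enough at finite blocklength to obtain the announced leading constant rather than only an $\Omega(1/\sqrt{n})$ order-of-magnitude bound; this requires a careful second-order analysis of the spherical-noise distortion-rate function in a $1/\sqrt{n}$ neighborhood of $\delta$. The restriction $\rho>1$ enters because only then is $D(\rho,\delta)$ sufficiently sensitive to $\delta$ for $\eta(\rho,\delta)$ to be strictly positive.
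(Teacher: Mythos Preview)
Your proposal is essentially the paper's own proof: condition on the noise Hamming weight, pair each symmetric deviation $\pm k_0$ into a spherical-noise broadcast instance, lower-bound the resulting sum distortion via the outer bound of Section~\ref{sec:sphericalBC} (Theorem~\ref{thm:Tradeoff}), and then integrate over $|K|/\sqrt{n}$ using the CLT to produce the prefactor $\sqrt{\delta(1-\delta)/(2\pi n)}$; this is exactly the argument outlined in Section~\ref{subsec:pfoutline} and carried out in Section~\ref{sec:proofmain}. The only inaccuracy is your explanation of the restriction $\rho>1$: it is not that $D(\rho,\delta)$ becomes ``sufficiently sensitive to $\delta$'' (its derivative is positive for all $\rho$), but rather that $f(\rho,\delta)<1$ holds precisely when $\rho>1$ (Lemma~\ref{lem:fdec}), and this curvature comparison is what makes $\eta(\rho,\delta)>0$.
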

In particular, this binary symmetric example serves to show that there exists a choice of parameters $(\rho,P_S,Q_{Y|X})$, such that
\[  \Delta^*(n,\rho,P_S,Q_{Y|X})=\Omega\left(\frac{1}{\sqrt{n}}\right) . \]

The lower bound~\eqref{eq:fdec} is valid as long as $f(\rho,\delta)<1$, which is needed in order to justify~\eqref{eq:rhoconst} in the derivation below. In Lemma~\ref{lem:fdec} we show that for $\rho>1$ this is indeed the case, and it therefore suffices to require that $\rho>1$ in the statement of Theorem~\ref{thm:main}. Furthermore, $f(\rho,\delta)<1$ guarantees that $\eta(\rho,\delta)>0$ whenever $D(\rho,\delta)>0$, and consequently the bound is not trivial. In the regime $\rho\leq 1$, we have that $f(\rho,\delta)\geq 1$, and our bound is no longer valid.

\subsection{Outline of Reduction to JSCC Broadcast}
\label{subsec:pfoutline}

The proof of Theorem~\ref{thm:main} relies upon the reduction of the binary JSCC problem to the problem of sending a binary source over a broadcast channel, for which we then derive outer bounds on the achievable distortions region. We now outline this reduction. The details are straightforward but cumbersome, thus they are relegated to Section~\ref{sec:proofmain}. Here we use approximated equality or inequality, to say that the correction terms will be below the $1/\sqrt{n}$ order of interest. 

We restrict our attention to $\delta$ such that $\delta n$ is an integer; this reduction is insignificant in our scale of interest. We define the integer-valued random variable $K = w_H(Z^n) - \delta n$, where  $w_H(\cdot)$ is the Hamming weight of a vector. Let $\m{S}_{x,n}$ be the set of all length-$n$ binary sequences with Hamming weight $0\leq x\leq n$.
For a given encoder/decoder pair $(\m{E},\m{D})$ we define 
\begin{align}
\Psi(k) =\Psi_{\m{E},\m{D}}(k)&\triangleq \mathbb{E}\left[\frac{1}{m}d_H\left(S^m,\m{D}\left(\m{E}(S^m)+Z^n\right)\right) \ \bigg| \ K=k\right] \nonumber\\
&= \mathbb{E}\left[\frac{1}{m}d_H\left(S^m,\m{D}\left(\m{E}(S^m)+U^n\right)\right)\right] , \label{eq:PsiDef}
\end{align}
where $U^n\sim\Unif(\m{S}_{\delta n+k,n})$, although its dependence on $k$ is not made explicit.
We can then use iterated expectation to assert
\begin{align}
\mathbb{E}\left[\frac{1}{m}d_H\left(S^m,\m{D}\left(\m{E}(S^m)+Z^n\right)\right)\right] &= \mathbb{E}\left[\Psi(K)\right] \nonumber \\
&= \mathbb{E}\left[\mathbb{E}\left[\Psi(K) \ \bigg| \ |K|\right]\right].
\label{eq:iterexpt}
\end{align}
We then show that it is enough to consider a range of $K$ around $0$ that scales no faster than $\sqrt{n}$, and that within this range, since the Binomial distribution is approximately symmetric close to its mean, we have
\begin{align*}
\mathbb{E}\left[\Psi(K) \ \bigg| \ |K| = k_0 \right]&=\Pr(K=k_0 \ | \ |K|=k_0)\Psi(k_0)+ \Pr(K=-k_0 \ | \ |K|=k_0)\Psi(-k_0)\\
&\approx \frac{1}{2} \Psi(k_0) + \frac{1}{2} \Psi(-k_0).
\end{align*}
That is, we take $A=|K|/\sqrt{n}$ and consider bounded $A$. Thus,
\[ \mathbb{E}\left[\mathbb{E}\left[\Psi(K) \ \bigg| \ |K|\right]\right] \gtrapprox \frac{1}{2}\mathbb{E}\left[\mathbb{E}\left[\frac{1}{m}d_H\left(S^m,\m{D}\left(\m{E}(S^m)+U_{1}^n\right)\right) \right]+\mathbb{E}\left[ \frac{1}{m} d_H\left(S^m,\m{D}\left(\m{E}(S^m)+U_{2}^n\right)\right) \right]\right], \] where $U_{1}^n\sim\Unif\left(\m{S}_{n \left(\delta-\frac{A}{\sqrt{n}}\right),n}\right)$ and $U_{2}^n\sim\Unif\left(\m{S}_{n \left(\delta+\frac{A}{\sqrt{n}}\right),n}\right)$. The outer expectation in the right hand side is taken over $A$. 

\begin{figure}[]
	\begin{center}
		\psset{unit=0.6mm}
		\begin{pspicture}(0,0)(250,70)
		\rput(25,0){
			\rput(0,40){$S^m$}\psline{->}(5,40)(15,40)\psframe(15,45)(35,35)\rput(25,40){$\m{E}(S^m)$}
			\psline(35,40)(50,40)\rput(42,44){$X^n$}
			\psline{->}(50,40)(50,50)(70,50)\pscircle(75,50){5}
			\psline(75,54)(75,46)\psline(71,50)(79,50)
			\psline{->}(75,60)(75,55)\rput(75,65){$U^n_{1}$}
			\psline{->}(80,50)(95,50)\rput(87,54){$Y_1^n$}\psframe(95,55)(115,45)\rput(105,50){$\m{D}_1(Y_1^n)$}
			\psline{->}(115,50)(125,50)\rput(165,50){$\left(\hat{S}_1^m,D_1=\frac{1}{m}\mathbb{E}d_H(S^m,\hat{S}^m_1)\right)$}
			
			\psline{->}(50,40)(50,30)(70,30)\pscircle(75,30){5}\psline(75,34)(75,26)\psline(71,30)(79,30)
			\psline{->}(75,20)(75,25)\rput(75,15){$U^n_{2}$}
			\psline{->}(80,30)(95,30)\rput(87,34){$Y_2^n$}\psframe(95,35)(115,25)\rput(105,30){$\m{D}_2(Y_2^n)$}
			\psline{->}(115,30)(125,30)\rput(165,30){$\left(\hat{S}_2^m,D_2=\frac{1}{m}\mathbb{E}d_H(S^m,\hat{S}^m_2)\right)$}
		}
		\end{pspicture}
	\end{center}
\caption{Binary spherical noise JSCC broadcast problem.}
\label{fig:sphericalBC}
\end{figure}
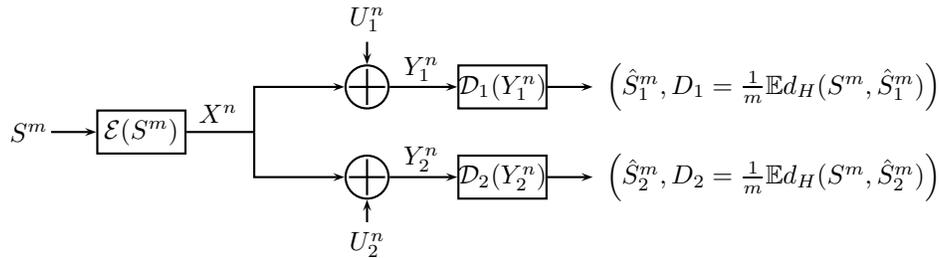

In the point-to-point JSCC setup considered, the encoder $\mathcal{E}$ and the decoder $\mathcal{D}$ are not aware of $A$, and cannot adapt to it. In particular, the same decoder is used whether $K$ is negative or positive. Nevertheless, we continue the analysis by making the following relaxations: we let the encoder and decoder vary as a function of $A$, and further for any value of $A$ we allow different decoders for the channel with additive noise $U_1^n$ (negative $K$) and the channel with additive noise $U_2^n$ (positive $K$). Thus we have that
\[ D^*(n,\rho,\delta) \gtrapprox \frac{1}{2}\mathbb{E} \left[ \underline{D}(A) \right], \]
where
\begin{align}
\underline{D}(a)\triangleq\frac{1}{m}\min\left(\mathbb{E}\left[d_H\left(S^m,\m{D}_1\left(\m{E}(S^m)+U_{1}^n\right)\right) \right]+\mathbb{E}\left[d_H\left(S^m,\m{D}_2\left(\m{E}(S^m)+U_{2}^n\right)\right) \right]\right),\label{eq:DaDef}
\end{align}
and the minimization is over all encoders $\m{E}:\{0,1\}^m\to\{0,1\}^n$ and decoders  $\m{D}_1:\{0,1\}^n\to\{0,1\}^m$, $\m{D}_2:\{0,1\}^n\to\{0,1\}^m$.
This last quantity is the sum-distortion in a problem of JSCC broadcast with spherical noise, see Figure~\ref{fig:sphericalBC}. We will next consider JSCC broadcast, and prove in Theorem~\ref{thm:Tradeoff}  that
\begin{align}
\underline{D}(a) \gtrapprox 2D(\rho,\delta)+\frac{a}{\sqrt{n}}\eta(\rho,\delta).
\label{eq:Dlb}
\end{align}
Thus, 
\[ D^*(n,\rho,\delta) \gtrapprox  D(\rho,\delta)+\frac{\mathbb E[A]}{2\sqrt{n}}\eta(\rho,\delta) . \]
The result of the theorem follows since $A$ is approximately the absolute value of a normal variable with zero mean and variance $\delta(1-\delta)$, thus
\[ \mathbb E[A] \approx \sqrt\frac{2\delta(1-\delta)}{\pi}, \]
which gives the stated result.

\subsection{Sending a Source Over a Broadcast Channel}
\label{sec:broadcast_outline}

We now outline the derivation of \eqref{eq:Dlb}, stating the steps which we will prove in Section~\ref{sec:sphericalBC}.
We consider the problem of sending a source over a broadcast channel, or simply, the \emph{JSCC broadcast} problem, as follows, See Figure~\ref{fig:JSCC_BS_gen_fig}.

\begin{figure}[]
	\begin{center}
		\psset{unit=0.6mm}
		\begin{pspicture}(0,50)(250,85)
		\rput(25,20){
			\rput(0,40){$S^m$}\psline(5,40)(15,40)\psframe(15,45)(35,35)\rput(25,40){$\m{E}(S^m)$}
			\psline(35,40)(50,40)\rput(42,44){$X^n$}\psframe(50,47)(80,33)\rput(65,40){$Q_{Y^n_1,Y^n_2|X^n}$}
			\psline(80,40)(90,40)
			\psline{->}(90,40)(90,50)(110,50)\rput(102,54){$Y_1^n$}\psframe(110,55)(130,45)\rput(120,50){$\m{D}_1(Y_1^n)$}
			\psline{->}(130,50)(140,50)\rput(155,50){$(\hat{S}_1^m,D_1)$}
			\psline{->}(90,40)(90,30)(110,30)\rput(102,34){$Y_2^n$}\psframe(110,35)(130,25)\rput(120,30){$\m{D}_2(Y_2^n)$}
			\psline{->}(130,30)(140,30)\rput(155,30){$(\hat{S}_2^m,D_2)$}
		}
		\end{pspicture}
	\end{center}
	\caption{JSCC over a Broadcast Channel.}
\label{fig:JSCC_BS_gen_fig}
\end{figure}
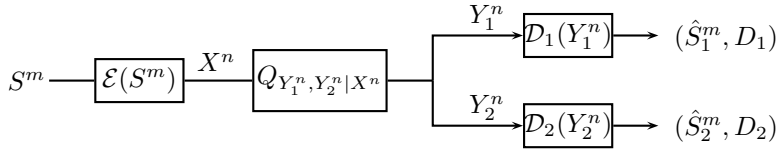

\begin{definition} \label{def:JSCC_BC_gen}
In the JSCC broadcast problem, an encoder observes a sequence $S^m=(S_1,\ldots,S_m)$ of $m$ i.i.d. samples generated according to the distribution $P_S$, and would like to send this sequence through the broadcast channel $Q_{Y^n_1,Y_2^n|X^n}$, which may not be memoryless, nor degraded. To that end, the encoder maps the source sequence $S^m$ to the channel input 
$X^n$ using an encoding function $\m{E}:\m{S}^m\to\m{X}^n$. The channel input $X^n$ is transmitted through the channel and the first receiver, which observes the channel output $Y_1^n$, generates an estimate $\hat{S}_1^m$ of the source sequence, using a decoding function $\m{D}_1:\m{Y}_1^n\to \hat{\m{S}}^m$, whereas the second receiver, which observes the channel output $Y_2^n$, generates an estimate $\hat{S}_2^m$ of the source sequence, using a decoding function $\m{D}_2:\m{Y}_2^n\to \hat{\m{S}}^m$. We assume that the reconstruction alphabets $\hat{\m{S}}_1,\hat{\m{S}_2}$ are identical, and that the quality of the two reconstructions are measured with respect to the same separable distortion measure $d:\m{S}\times\hat{\m{S}}\to \RR$:\footnote{These assumptions are made for the sake of simplicity only, results easily go through to the more general case as well.}  
\begin{align}
D_i=\frac{1}{m}\mathbb{E}d(S^m,\hat{S}_i^m), \ i=1,2,
\end{align}    
\end{definition}

\vspace{2mm}

For given $m$, $n$, $P_S$, and $Q_{Y^n_1,Y_2^n|X^n}$, a distortion pair $(D_1,D_2)$ is said to be achievable if there exist $(\m{E},\m{D}_1,\m{D}_2)$ such that $\frac{1}{m}\mathbb{E}[d(S^m,\hat S_i^m)] \leq D_i$ for $i=1,2$. It will be convenient to express results using the \emph{bandwidth expansion factor} $\rho=n/m$. Our goal is to establish an outer bound on the achievable pairs $(D_1,D_2)$.

For our results, we need the following functions of the source. We define an auxiliary variable $U$ via a conditional distribution $P_{U|S}$. By combining with the given $P_S$ we obtain $P=P_{SU}$. With respect to this distribution, we define:
\begin{align} 
F_P(t)&\triangleq\min_{\substack{{V \ : \ U-S-V} \\ {I(S;V)\geq t} } } I(S;V|U)  , \label{eq:FpDef}\\
\bar{R}_P(D)&\triangleq \min_{\substack{{\hat{S} \ : \ U-S-\hat{S}} \\ {\mathbb{E}d(S,\hat{S})\leq D}}} I(U;\hat{S}).\label{eq:RpDef}
\end{align}
Notice that when $U=S$ the function $\bar{R}_P(D)$ reduces to the rate-distortion function $R_{P_S}(D)$ of the source $S$.\footnote{We note that the function $\bar{R}_P(D)$ also arises as an upper bound on the communication rate required in order to perform a certain coordination task, see~\cite{sr18}.} Furthermore, we define the following function of the channel 
$Q^n=Q_{Y^n_1,Y^n_2|X^n}(y_1^n,y_2^n|x^n)$, 
\begin{align} 
G_{Q^n}(t)& \triangleq\max_{\substack{ {W,X^n \ : \ W-X^n-(Y^n_1,Y^n_2)} \\ { I(X^n;Y^n_1|W)\geq t}      } }
I(Y^n_2;W) .
\label{eq:GqDef} \end{align}
Note the relation to the capacity region of the broadcast channel: If $nR_1$ and $nR_2$ bits can be communicated reliably to the receivers $Y_1^n$ and $Y_2^n$, respectively, then $n R_2\leq G_{Q^n}(nR_1)$~\cite[Chapter 5.4.1]{ElGamalKim} .

Let $R_{P_S}(D)$ be the rate-distortion function of the source $P_S$. We prove the following theorem.

\begin{theorem} \label{thm:JSCCgen}
	Consider the problem of transmitting $m$ realizations of the i.i.d. source $S\sim P_S$, over the $n$-letter broadcast channel $Q^n$. If $(D_1,D_2)$ is achievable, then for any $P_{U|S}$ we have that
	\begin{align}
	\bar R_P(D_2) \leq \frac{1}{m} \cdot G_{Q^n}\left(m F_P\left(R_{P_S}(D_1) \right) \right),\label{eq:JSCCbound}
	\end{align}
	where $P=P_S P_{U|S}$ is the joint distribution on $(S,U)$ induced by the choice of $P_{U|S}$.
\end{theorem}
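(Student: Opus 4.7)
The plan is to choose the auxiliary $W=U^m$ in the optimization defining $G_{Q^n}$, where $U^m$ is obtained by passing $S^m$ through the memoryless channel $P_{U|S}$ letter-by-letter, so that the pairs $(S_i,U_i)$ are i.i.d.\ with joint law $P$. The Markov chain $W-X^n-(Y_1^n,Y_2^n)$ holds for free, since $X^n=\m{E}(S^m)$ and the broadcast channel acts only on $X^n$. With this choice, it suffices to establish
\begin{align*}
I(Y_2^n;W)\;\geq\; m\,\bar R_P(D_2),\qquad I(X^n;Y_1^n\mid W)\;\geq\; m\,F_P\bigl(R_{P_S}(D_1)\bigr),
\end{align*}
because then $(W,X^n)$ is feasible for the maximization defining $G_{Q^n}\bigl(mF_P(R_{P_S}(D_1))\bigr)$, and dividing the resulting inequality by $m$ yields~\eqref{eq:JSCCbound}.

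For the $Y_2^n$ bound I would use the i.i.d.\ structure of $U^m$ and subadditivity of conditional entropy to get $I(U^m;Y_2^n)\geq \sum_i I(U_i;Y_2^n)$. Data processing along $U_i-X^n-Y_2^n-\hat S_{2,i}$ gives $I(U_i;Y_2^n)\geq I(U_i;\hat S_{2,i})$, and the letterwise Markov chain $U_i-S_i-\hat S_{2,i}$ (which holds because $U_i$ depends on the past only through $S_i$) together with $\mathbb{E}d(S_i,\hat S_{2,i})=D_{2,i}$ makes $\hat S_{2,i}$ feasible in the definition of $\bar R_P(D_{2,i})$. A standard time-sharing mixture of two candidate reconstructions shows that $\bar R_P(\cdot)$ is convex, which together with $\frac{1}{m}\sum_i D_{2,i}=D_2$ yields $\sum_i \bar R_P(D_{2,i})\geq m\bar R_P(D_2)$.

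For the $Y_1^n$ bound, the identity $I(X^n;Y_1^n\mid U^m)=I(S^m;Y_1^n\mid U^m)$ (which follows because $X^n$ is a function of $S^m$ and $Y_1^n-X^n-S^m$ given $U^m$) is the starting point. The chain rule together with the i.i.d.\ identity $H(S_i\mid U^m,S^{i-1})=H(S_i\mid U_i)$ and the entropy drop $H(S_i\mid Y_1^n,U^m,S^{i-1})\leq H(S_i\mid Y_1^n,U_i)$ leads to $I(S^m;Y_1^n\mid U^m)\geq \sum_i I(S_i;Y_1^n\mid U_i)\geq \sum_i I(S_i;\hat S_{1,i}\mid U_i)$, the last step by data processing. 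To convert this sum into $mF_P(R_{P_S}(D_1))$ I introduce a time-sharing variable $T$ uniform on $[m]$ independent of everything else and set $\tilde S=S_T$, $\tilde U=U_T$, $\tilde V=(T,\hat S_{1,T})$. Independence of $T$ from $(S^m,U^m)$ together with the letterwise Markov $U_i-S_i-\hat S_{1,i}$ gives $\tilde U-\tilde S-\tilde V$; moreover $I(\tilde S;\tilde V)=\frac{1}{m}\sum_i I(S_i;\hat S_{1,i})\geq \frac{1}{m}\sum_i R_{P_S}(D_{1,i})\geq R_{P_S}(D_1)$ by convexity of the rate-distortion function. Hence $\tilde V$ is feasible in the minimization defining $F_P$ at level $R_{P_S}(D_1)$, and monotonicity of $F_P$ yields $F_P(R_{P_S}(D_1))\leq I(\tilde S;\tilde V\mid \tilde U)=\frac{1}{m}\sum_i I(S_i;\hat S_{1,i}\mid U_i)$, as required.

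The main care-point is bookkeeping on conditional independence. The crucial Markov chain $U_i-S_i-\hat S_{j,i}$ relies on $U_i$ being a letterwise function of $S_i$ alone, while the time-sharing step needs independence of $T$ from the whole tuple $(S^m,U^m,\hat S_{1}^m)$; the correct monotonicity/convexity of $G_{Q^n}(\cdot)$ (non-increasing), $F_P(\cdot)$ (non-decreasing), $R_{P_S}(\cdot)$ (convex), and $\bar R_P(\cdot)$ (convex) must then be invoked at the right moments. No further machinery appears necessary.
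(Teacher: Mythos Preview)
Your proof is correct and follows essentially the same route as the paper: both choose $W=U^m$, apply data processing to link $I(U^m;Y_2^n)$ to $\bar R_P(D_2)$, and use the definition of $G_{Q^n}$ together with a single-letterization of the conditional information $I(S^m;Y_1^n\mid U^m)$ to reach $mF_P(R_{P_S}(D_1))$. The one noteworthy difference is in that single-letterization step: the paper packages it as a separate tensorization lemma $F_{P^{\otimes m}}(mt)=mF_P(t)$, proved by induction using the convexity of $F_P$, and then applies it with $V=Y_1^n$; you instead bypass tensorization by inserting a time-sharing variable $T$ and working directly with $\tilde V=(T,\hat S_{1,T})$. Your route is slightly more self-contained for this theorem, while the paper's route isolates tensorization of $F_P$ as a reusable lemma.
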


The proof of this bound, which is based on generalizing a technique developed by Reznic, Feder and Zamir~\cite{rfz06} for the Gaussian joint source-channel coding broadcast problem, will be given in Section~\ref{sec:JSCCbounds}. 
While the functions $F_P(t)$, $\bar{R}_P(D)$ and $R_{P_S}(D)$, involved in the evaluation of the bound from Theorem~\ref{thm:JSCCgen} only require solving a single-letter optimization, the function $G_{Q^n}$ requires solving, or bounding the solution of, an $n$-letter optimization, which is in general a challenging task. However, if the broadcast channel is memoryless and degraded then $G_{Q^n}$ single-letterizes; as a consequence the whole bound becomes single-letter as well, and in fact we will also show that for this case it can be obtained as a special case of~\cite[Theorem 5]{kc15}. 

We now speicialize Theorem~\ref{thm:JSCCgen} to the \emph{spherical-noise binary JSCC broadcast problem}. Recalling Figure~\ref{fig:sphericalBC}, it is defined as follows.
\begin{definition} \label{def:JSCC_BC_binary}
The spherical-noise binary JSCC broadcast problem is a JSCC broadcast problem (as in Definition~\ref{def:JSCC_BC_gen}) with
 a memoryless symmetric binary source $S^m$, channels $Y_1^n=X^n+U_1^n$, $Y_2^n=X^n+U_2^n$, where $U_1^n\sim\Unif\left(\m{S}_{n\delta_1,n}\right)$ and $U_2^n\sim\Unif\left(\m{S}_{n(\delta_1*\delta_2),n}\right)$, $(U_1^n,U_2^n)\indep X^n$, and Hamming distortion. \end{definition}

In this problem the broadcast channel 
is neither memoryless, nor degraded. 
Indeed, if we replaced the spherical noises by an i.i.d. noise with the same marginals, the channel would become a $\mathrm{BSC}(\delta_1,\delta_1*\delta_2)$ degraded memoryless broadcast channel. In that case, the corresponding channel function does single-letterize, and equals 
\begin{align}  \frac{1}{n} G_{Q^n}(nt) &= G_\mathrm{BSC}(t) \nonumber \\ &= \log 2-h_b\left(\delta_2*h_b^{-1}\left(h_b(\delta_1)+t\right)\right),\end{align} see Section~\ref{sec:aux}. The i.i.d. JSCC bound derived from this function would easily prove our result. Closing the gap between the i.i.d. case and the spherical-noise case, is one of the main technical challenges in this work.

For the spherical-noise broadcast channel, we prove that 
\[ \frac{1}{n} G_{Q^n}(nt)\leq G_\mathrm{BSC}(t) + \Gamma(n,\delta_2),
\]
where
\[
\Gamma(n,\delta_2)\triangleq\sqrt{\frac{\delta_2}{n}}\log\left(\frac{n}{\delta_2} \right)+\frac{\log{n}+1}{2n}.\]

The proof of this bound, which is given in Section~\ref{sec:sphericalBC}, is based on replacing $U_2^n$ with the noise $U_1^n+\tilde{Z}^n$, where $\tilde{Z}^n\sim\Ber(\delta_2)^{\otimes n}$, such that the obtained broadcast channel $\tilde{Q}^n$ is degraded (though not memoryless), and its corresponding function $G_{\tilde{Q}^n}$ can be computed using Mrs. Gerber's Lemma~\cite{wz73} (see below). The difference $|G_{\tilde{Q}^n}(nt)-G_{Q^n}(nt)|$ is essentially bounded by $\sup_{P_{X^n}}|H(X^n+U_1^n+\tilde{Z}^n)-H(X^n+U_2^n)|$, which can be bounded via a coupling argument introduced in~\cite{pw16} (see also~\cite{op18}). Using this technique, we show that $|G_{\tilde{Q}^n}(nt)-G_{Q^n}(nt)| \lessapprox n\cdot\Gamma(n,\delta_2)$. 

Thus, in order to apply Theorem~\ref{thm:JSCCgen} for obtaining an outer bound on the achievable $(D_1,D_2)$ pairs for the binary spherical noise JSCC broadcast problem, we need only choose an auxiliary channel $P_{U|S}$ and evaluate the functions $F_P(t)$, $\bar{R}_P(D)$ and $R_{P_S}(D)$. For the particular choice $U=S+N$, where $N\sim\Ber(q)$ is statistically independent of $X$, evaluating these functions becomes particularly simple, and we obtain the following Theorem, proved in Section~\ref{sec:sphericalBC}.

\begin{theorem}
	Consider the binary spherical noise JSCC broadcast problem. If $(D_1,D_2)$ is achievable, then for any $0<q<1/2$
	\begin{align}
	\log 2-h_b(q*D_2) &\leq \rho\left[\log 2-h_b\left(\delta_2*h_b^{-1}\left(h_b(\delta_1)+\frac{h_b(q*D_1)-h_b(D_1)}{\rho}\right)\right)\right]+\rho\Gamma(n,\delta_2),\label{eq:basicbound}
	\end{align}
	where $\Gamma(n,\delta_2)$ is as defined in~\eqref{eq:epsilonndef}.
	\label{thm:JSCC_Delta}
\end{theorem}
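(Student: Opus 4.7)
The plan is to apply Theorem~\ref{thm:JSCCgen} to the spherical-noise binary JSCC broadcast problem of Definition~\ref{def:JSCC_BC_binary} with the specific choice $U = S + N$, $N \sim \Ber(q)$ independent of $S$. Under this choice we must evaluate the three source-side quantities $R_{P_S}(D_1)$, $F_P(R_{P_S}(D_1))$, and $\bar R_P(D_2)$, and then combine them with the channel-side bound $\frac{1}{n}G_{Q^n}(nt) \leq G_\mathrm{BSC}(t) + \Gamma(n,\delta_2)$ announced in the overview of Section~\ref{sec:broadcast_outline}.

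On the source side, $R_{P_S}(D_1) = \log 2 - h_b(D_1)$ is the standard symmetric binary rate-distortion function. For $\bar R_P(D_2)$, achievability uses the backward test channel $\hat S \sim \Ber(1/2)$, $S = \hat S + W$ with $W \sim \Ber(D_2)$ independent of $\hat S$, so that $U = \hat S + W + N$ and $I(U;\hat S) = \log 2 - h_b(q*D_2)$. For the matching converse I use the fact that $N \indep (S,\hat S)$, which follows from $N \indep S$ together with the Markov chain $U-S-\hat S$ (since $N = U + S$ is a deterministic function of $U$ given $S$). Hence $P(U\neq \hat S) = q * P(S \neq \hat S) \leq q*D_2$, and binary Fano gives $H(U|\hat S) \leq h_b(P(U \neq \hat S)) \leq h_b(q*D_2)$, so $\bar R_P(D_2) = \log 2 - h_b(q*D_2)$.

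For $F_P$, the Markov chain $U-S-V$ implies $I(U;V|S) = 0$, so $I(S;V|U) = I(S;V) - I(U;V) = I(S;V) + H(U|V) - \log 2$, using $H(U) = \log 2$. Subject to $I(S;V)\geq R_{P_S}(D_1)$, i.e., $H(S|V) \leq h_b(D_1)$, Mrs. Gerber's lemma applied to $U = S+N$ yields $H(U|V) \geq h_b(q * \beta)$ where $\beta = h_b^{-1}(H(S|V)) \in [0,D_1]$. A direct check using $h_b'(x) = \log\frac{1-x}{x}$ shows that $\beta \mapsto h_b(q*\beta) - h_b(\beta)$ is decreasing on $[0,1/2]$, so the infimum is attained at $\beta = D_1$, giving $F_P(R_{P_S}(D_1)) = h_b(q*D_1) - h_b(D_1)$; achievability uses $V = S + W$ with $W \sim \Ber(D_1)$ independent of $N$.

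Plugging these evaluations into Theorem~\ref{thm:JSCCgen} and setting $t = (h_b(q*D_1) - h_b(D_1))/\rho$ so that $nt = m F_P(R_{P_S}(D_1))$, one obtains $\log 2 - h_b(q*D_2) \leq \frac{n}{m}\cdot\frac{1}{n}G_{Q^n}(nt) \leq \rho\bigl[G_\mathrm{BSC}(t) + \Gamma(n,\delta_2)\bigr]$, which after substituting $G_\mathrm{BSC}(t) = \log 2 - h_b(\delta_2 * h_b^{-1}(h_b(\delta_1) + t))$ is precisely \eqref{eq:basicbound}. The main obstacle is the Mrs. Gerber computation of $F_P$: it is the only step where the binary structure enters nontrivially, requiring both the application of the lemma and the verification of the monotonicity that forces the optimum to the active rate-distortion constraint.
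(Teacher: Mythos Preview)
Your proposal is correct and follows essentially the same approach as the paper: apply Theorem~\ref{thm:JSCCgen} with the auxiliary $U=S+N$, $N\sim\Ber(q)$, use Mrs.\ Gerber's Lemma to evaluate $F_P$ and a direct entropy argument for $\bar R_P$, and then invoke the channel bound of Lemma~\ref{lem:GsphericalBC}. The only cosmetic difference is that the paper packages the source-side computations as Propositions~\ref{prop:Fbinary} and~\ref{prop:RbarBin} (deriving $F_P(t)$ as a closed form in $t$ rather than arguing monotonicity of $\beta\mapsto h_b(q*\beta)-h_b(\beta)$), but the content is the same.
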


Recall now that our goal is to bound from $\underline{D}(a)$ \eqref{eq:DaDef}. This is nothing but the sum $D_1+D_2$ in the binary spherical noise JSCC broadcast problem, with appropriate $\delta_1$ and $\delta_2$. The required result is as follows. 

\begin{theorem} 
	Consider the binary spherical noise JSCC broadcast problem, with $\delta_1=\delta-\tfrac{a}{\sqrt{n}}$, $\delta_1*\delta_2=\delta+\tfrac{a}{\sqrt{n}}$, for some positive $a<\log^2(n)$. If $(D_1,D_2)$ is achievable, then
	\begin{align}
	D_1+D_2\geq 2D(\rho,\delta)+ \frac{a\eta(\rho,\delta)}{\sqrt{n}}+\m{O}\left(n^{-3/4}\log{n}\right),
	\end{align}
	where $\eta(\rho,\delta)$ is as defined in~\eqref{eq:etaDef}.
	\label{thm:Tradeoff}
\end{theorem}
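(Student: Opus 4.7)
The plan is a first-order perturbation of Theorem~\ref{thm:JSCC_Delta} combined with an optimization over the free parameter $q$. Write $D^\star := D(\rho,\delta)$ and $\Delta_i := D_i - D^\star$ for $i=1,2$. From $\delta_1=\delta-a/\sqrt{n}$ and $\delta_1*\delta_2 = \delta+a/\sqrt{n}$ one solves
$$\delta_2 = \frac{2a/\sqrt{n}}{1-2\delta+2a/\sqrt{n}} = \frac{2a}{(1-2\delta)\sqrt{n}} + \mathcal{O}\!\left(\frac{a^2}{n}\right),$$
so under the hypothesis $a<\log^2 n$ we have $\delta_2 = \mathcal{O}(\log^2 n/\sqrt{n})$ and hence $\rho\,\Gamma(n,\delta_2) = \mathcal{O}(n^{-3/4}\log n)$; this will be absorbed into the error term of the theorem.

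Using the defining identity $h_b(D^\star) = \log 2 - \rho(\log 2-h_b(\delta))$, one checks that for every $q\in(0,1/2)$ the two sides of~\eqref{eq:basicbound} agree at the nominal point $(a,\Delta_1,\Delta_2)=(0,0,0)$. A first-order Taylor expansion of~\eqref{eq:basicbound} in the three small parameters then produces a linear constraint of the form
$$\alpha(q)\,\Delta_1 + \beta(q)\,\Delta_2 + \gamma(q)\,\frac{a}{\sqrt{n}} \geq -\mathcal{O}(n^{-3/4}\log n),$$
whose coefficients are explicit rational combinations of $h_b'$ and $h_b''$ evaluated at $\delta$, $D^\star$, $q$ and $q*D^\star$. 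These come from chain-rule differentiation of the nested composition $h_b(\delta_2 * h_b^{-1}(\cdot))$: since $\delta_2$ is itself $\Theta(a/\sqrt{n})$, both the $\partial/\partial\delta_2$ and the $\partial/\partial\beta$ contributions enter at first order in $a/\sqrt{n}$. The ratio $h_b''/[h_b']^2$ arising from differentiating $h_b^{-1}$, combined with the $(1-2\delta_2)$ factor from $\delta_2*\beta$, reproduces precisely the expression $\Phi$ of~\eqref{eq:PhiDef}.

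From this linear inequality one solves for a lower bound $\Delta_1+\Delta_2 \geq \lambda(q)\,a/\sqrt{n} - \mathcal{O}(n^{-3/4}\log n)$ valid for every $q\in(0,1/2)$, and then maximises $\lambda(q)$ over $q$. I expect the first-order optimality condition to pin down an interior optimiser $q^\star$ (consistent with $f(\rho,\delta)<1$, as guaranteed by $\rho>1$ via Lemma~\ref{lem:fdec}); substituting back and collecting the factors $h_b'(\delta)/h_b'(D^\star)$, $D^\star(1-f)^2$, $2f+4D^\star(1-f)$, and $(1+f)/f$ should reduce the supremum $\sup_q \lambda(q)$ to the compact form $\eta(\rho,\delta)$ of~\eqref{eq:etaDef}.

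The main obstacle I anticipate is this final algebraic simplification: the many terms produced by the chain rule must regroup exactly into the structured form of $\eta(\rho,\delta)$, and matching the ingredients $f(\rho,\delta)$, $\Phi(\delta)$ and $\Phi(D^\star)$ demands careful bookkeeping of which contributions come from perturbing $D_1$, $D_2$, $\delta_1$ and $\delta_2$. All second-order corrections (quadratic in the perturbations, including the $\mathcal{O}(a^2/n)$ remainder in $\delta_2$) are dominated by the $\mathcal{O}(n^{-3/4}\log n)$ coming from $\rho\,\Gamma(n,\delta_2)$ under $a<\log^2 n$, matching the stated error term.
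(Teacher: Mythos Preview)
There is a genuine gap. A single linear constraint of the form $\alpha(q)\,\Delta_1 + \beta(q)\,\Delta_2 + \gamma(q)\,a/\sqrt{n}\ge -\mathcal{O}(\cdot)$ cannot by itself yield a lower bound on $\Delta_1+\Delta_2$ unless $\alpha(q)=\beta(q)$. Linearising~\eqref{eq:basicbound} at the nominal point gives (after moving everything to one side)
\[
h_b'(D^\star)\,\Delta_1 \;+\; (1-2q)h_b'(q*D^\star)\,(\Delta_2-\Delta_1)\;\ge\;-\gamma(q)\tfrac{a}{\sqrt{n}}-\mathcal{O}(\cdot),
\]
so the coefficients of $\Delta_1$ and $\Delta_2$ differ by $h_b'(D^\star)>0$ for \emph{every} $q\in(0,1/2)$. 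Hence your step ``from this linear inequality one solves for a lower bound $\Delta_1+\Delta_2\ge\lambda(q)\,a/\sqrt{n}$'' does not go through: for any such half-plane one can make $\Delta_1+\Delta_2$ arbitrarily negative while staying inside it.

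The paper supplies the missing second constraint via Proposition~\ref{prop:spheseparation} (the point-to-point separation bound for the spherical channel), which gives $D_2\ge D(\rho,\delta+a/\sqrt{n})+\mathcal{O}(\log n/n)$ and hence $\Delta_2\ge (a/\sqrt{n})D'(\rho,\delta)+\mathcal{O}(\cdot)$. One then writes $\Delta_1+\Delta_2=2\Delta_2-(D_2-D_1)$, lower-bounds $\Delta_2$ by separation, and upper-bounds $D_2-D_1$ from (the $q$-parametrised version of) the broadcast inequality; only the combination of both gives a bound on the sum, and the optimisation over $q$ (equivalently $\tau$) then produces $\eta(\rho,\delta)$. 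Proposition~\ref{prop:spheseparation} is also what justifies the Taylor expansion in the first place: without it you have no a~priori reason for $\Delta_1,\Delta_2$ to be small, so ``all second-order corrections are dominated by $\mathcal{O}(n^{-3/4}\log n)$'' is unfounded. The paper handles this by a case split (the WLOG assumptions~\eqref{eq:A1}--\eqref{eq:A2}), again relying on the individual point-to-point lower bounds.
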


The proof appears in Section~\ref{subsec:from4to7}. Given Theorem~\ref{thm:JSCC_Delta}, it is merely a matter of algebraic manipulations and local approximations. Note that since we are interested in the regime where $\delta_2=\Theta(n^{-1/2})$, the term $\Gamma(n,\delta_2)$ only contributes to the $\m{O}(n^{-3/4}\log{n})$ correction.

\vspace{2mm}

As we make extensive use of Mrs. Gerber's Lemma (MGL)~\cite{wz73}, for completeness, we end this section with its statement.

\begin{lemma}[Mrs. Gerber's Lemma~\cite{wz73}]
The MGL function $\varphi_{\delta}(t)=h_b(\delta*h_b^{-1}(t))$ is convex in $t$. Furthermore, for $Z^n\sim\Ber(\delta)^{\otimes n}$ and any $n$-dimensional binary vector $X^n$, statistically independent of $Z^n$, we have that
\begin{align*}
H(X^n+Z^n)\geq n \varphi_{\delta}\left(\frac{H(X^n)}{n} \right).
\end{align*}
\end{lemma}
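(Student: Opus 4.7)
The plan is to treat \eqref{eq:basicbound} as a one-parameter family (indexed by $q$) of local constraints around the asymptotic point $D_1 = D_2 = D(\rho,\delta)$, and to choose $q$ so as to extract a clean lower bound on the sum distortion. Write $\gamma = a/\sqrt n$ and $D_i = D(\rho,\delta) + \epsilon_i$; the goal is to show that any achievable $(\epsilon_1,\epsilon_2)$ must satisfy $\epsilon_1+\epsilon_2 \geq a\,\eta(\rho,\delta)/\sqrt n + \m{O}(n^{-3/4}\log n)$.

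First I would resolve the perturbation sizes. Solving $\delta_1 * \delta_2 = \delta + \gamma$ for $\delta_2$ with $\delta_1 = \delta - \gamma$ gives $\delta_2 = 2\gamma/(1-2\delta+2\gamma) = 2\gamma/(1-2\delta) + \m{O}(\gamma^2)$, so $\delta_2 = \Theta(n^{-1/2})$. A direct computation from the definition of $\Gamma(n,\delta_2)$ then shows $\rho\,\Gamma(n,\delta_2) = \m{O}(n^{-3/4}\log n)$, and $\gamma^2 = \m{O}(n^{-1}\log^4 n)$ is absorbed into $\m{O}(n^{-3/4}\log n)$ using the hypothesis $a<\log^2 n$.

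The key observation is that, for \emph{any} $q \in (0,1/2)$, the bound \eqref{eq:basicbound} is an equality at the central point $\epsilon_1=\epsilon_2=0$, $\gamma=0$ (modulo the $\rho\,\Gamma$ slack), because by the defining identity $\log 2 - h_b(D(\rho,\delta)) = \rho\,(\log 2 - h_b(\delta))$ both sides collapse to $\log 2 - h_b(q * D(\rho,\delta))$. Expanding both sides of \eqref{eq:basicbound} to first order in $(\epsilon_1,\epsilon_2,\gamma)$ and collecting terms yields a linear inequality of the shape
\[ (\mu - c_q)\,\epsilon_1 + c_q\,\epsilon_2 \;\geq\; \rho\gamma\,\beta_q \;-\; \rho\,\Gamma(n,\delta_2) \;+\; \m{O}(\gamma^2 + \epsilon_1^2 + \epsilon_2^2), \]
where $\mu = h_b'(D(\rho,\delta))$, $c_q = (1-2q)\,h_b'(q * D(\rho,\delta))$, and $\beta_q = 2\mu^*_q\,(1-2\delta^*_q)/(1-2\delta) - h_b'(\delta)$, with the inner point $\delta^*_q = h_b^{-1}\bigl(h_b(q * D(\rho,\delta))/\rho + (1-1/\rho)\log 2\bigr)$ and $\mu^*_q = h_b'(\delta^*_q)$ arising from the chain rule through $h_b^{-1}(\cdot)$ and $\delta_2 * (\cdot)$.

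To convert this into a bound on $\epsilon_1 + \epsilon_2$ I would choose $q = q^*(\rho,\delta)$ so that the two coefficients match, i.e.\ $c_{q^*} = \mu - c_{q^*} = \mu/2$; by monotonicity of $c_q$ on $(0,1/2)$ (with $c_0 = \mu$, $c_{1/2}=0$) this determines a unique $q^*$. The linear constraint then collapses to $(\mu/2)(\epsilon_1+\epsilon_2) \geq \rho\gamma\,\beta_{q^*} + \m{O}(n^{-3/4}\log n)$, and the theorem reduces to the algebraic identity $2\rho\,\beta_{q^*}/\mu = \eta(\rho,\delta)$. Verifying this identity is the main obstacle: one must use the equation defining $q^*$ together with $h_b(\delta^*_{q^*}) = h_b(q^* * D)/\rho + (1-1/\rho)\log 2$ to eliminate $q^*$ and $\delta^*_{q^*}$ in favor of $\delta$, $D(\rho,\delta)$ and the ratio $f(\rho,\delta) = \Phi(\delta)/(\rho\,\Phi(D(\rho,\delta)))$, ultimately recovering the closed form in \eqref{eq:etaDef}. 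Strict positivity of $\eta$, equivalently $f < 1$, holds under the stated hypothesis $\rho > 1$ via Lemma~\ref{lem:fdec}, which guarantees that both coefficients $c_{q^*}$ and $\mu-c_{q^*}$ are strictly positive and that the bound is nontrivial.
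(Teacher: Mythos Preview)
Your proposal does not address the stated lemma at all. The statement is Mrs.\ Gerber's Lemma: convexity of $\varphi_\delta(t)=h_b(\delta*h_b^{-1}(t))$ and the entropy inequality $H(X^n+Z^n)\ge n\,\varphi_\delta(H(X^n)/n)$. What you wrote is a proof sketch for Theorem~\ref{thm:Tradeoff} (the sum-distortion bound $D_1+D_2\ge 2D(\rho,\delta)+a\eta(\rho,\delta)/\sqrt n+\m{O}(n^{-3/4}\log n)$), starting from~\eqref{eq:basicbound}. These are entirely different claims. Note also that the paper does not prove Mrs.\ Gerber's Lemma; it is a classical result quoted from~\cite{wz73} for later use.

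Treating your text as an attempt at Theorem~\ref{thm:Tradeoff}, the strategy (linearize~\eqref{eq:basicbound} about the central point and pick $q$ so the $\epsilon_1,\epsilon_2$ coefficients match) is different from the paper's route. The paper does not balance coefficients: it first derives, via Lemma~\ref{lem:D1D2ineq} (a processed form of~\eqref{eq:basicbound} under the reparametrization $\tau=q/((1-2q)D_2)$), an upper bound on $D_2-D_1$, combines it with the one-sided separation bound $D_2\ge D(\rho,\delta)+(a/\sqrt n)D'(\rho,\delta)+\m{O}(\log n/n)$ from Proposition~\ref{prop:spheseparation}, and then optimizes $\tau$, obtaining the closed form $\eta(\rho,\delta)$ at $\tau=(1-f)/(2f)$. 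Your proposal also has two genuine gaps. First, the quadratic remainder $\m{O}(\epsilon_1^2+\epsilon_2^2)$ cannot be absorbed into $\m{O}(n^{-3/4}\log n)$ without an a~priori bound $\epsilon_i=\m{O}(n^{-1/2})$; the paper secures this via the reduction~\eqref{eq:A1}--\eqref{eq:A2} (if one $\epsilon_i$ is already $\ge a\eta/\sqrt n$, the claim follows trivially from Proposition~\ref{prop:spheseparation}), a step you omit. Second, you assert but do not verify the identity $2\rho\beta_{q^*}/\mu=\eta(\rho,\delta)$; since $\eta$ in~\eqref{eq:etaDef} arises from the paper's specific choice $\tau=(1-f)/(2f)$, and your $q^*$ is defined by a different equation ($c_{q^*}=\mu/2$), there is no reason to expect the resulting constant to coincide with $\eta(\rho,\delta)$ as written.
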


\section{Outer Bound for The General JSCC Broadcast Problem}
\label{sec:JSCCbounds}

In this section we address the general JSCC broadcast problem of Definition~\ref{def:JSCC_BC_gen}. We prove Theorem~\ref{thm:JSCCgen} by analyzing the source and channel functions  \eqref{eq:FpDef}-\eqref{eq:GqDef}. We note that the setting of Theorem~\ref{thm:JSCCgen} is asymmetric: the source is assumed to be i.i.d., while the channel is neither memoryless nor degraded; this is the most general form that we need for this work, and the extension for sources with memory will become quite obvious in the sequel.

\subsection{The Source Functions}

First, we prove two simple statements regarding properties of the functions $F_P$ \eqref{eq:FpDef} and $\bar{R}_P$ \eqref{eq:RpDef}. Recall that for a given choice of auxiliary conditional distribution $P_{U|S}$, we have defined $P=P_S P_{U|S}$. The source function $F_{P^{\otimes m}}$ corresponding to $m$ i.i.d. draws $(U^m,S^m)$ from $P$ is
\begin{align*}    
F_{P^{\otimes m}}(t)&\triangleq\min_{\substack{{V \ : \ U^m-S^m-V} \\ {I(S^m;V)\geq t} } }I(S^m;V|U^m).
\end{align*}
 
\begin{lemma} \label{lem:F_P}
	The function $F_P(t)$ is monotone non-decreasing and convex. Furthermore, it tensorizes, i.e.,
	\begin{align*}
	F_{P^{\otimes m}}(mt)  =mF_P (t).
	\end{align*}
\end{lemma}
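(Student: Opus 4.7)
My plan is to handle the three claims (monotonicity, convexity, tensorization) in that order, as each is logically independent, with the tensorization claim being the only non-trivial one.

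\textit{Monotonicity.} For $t_1 \leq t_2$, any $V$ feasible for the $t_2$-problem (i.e.\ satisfying $U-S-V$ and $I(S;V)\geq t_2$) is also feasible for the $t_1$-problem, so the minimum over the smaller feasible set is at least as large. This gives $F_P(t_1)\leq F_P(t_2)$.

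\textit{Convexity.} Standard time-sharing. Given $V_1, V_2$ achieving (or $\eps$-achieving) $F_P(t_1), F_P(t_2)$, let $T\sim\mathrm{Ber}(1-\lambda)$ be independent of $(S,U,V_1,V_2)$ and define $V=(V_T,T)$. The Markov chain $U-S-V$ is preserved. Because $T$ is independent of $S$, one gets $I(S;V)=\lambda I(S;V_1)+(1-\lambda)I(S;V_2)\geq \lambda t_1+(1-\lambda)t_2$ and similarly $I(S;V|U)=\lambda I(S;V_1|U)+(1-\lambda)I(S;V_2|U)$, from which convexity follows.

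\textit{Tensorization, easy direction $F_{P^{\otimes m}}(mt)\leq m F_P(t)$.} Let $V_0$ achieve $F_P(t)$ for the single-letter problem, and set $V=(V_{0,1},\ldots,V_{0,m})$ by drawing $V_{0,i}$ from the kernel $P_{V_0|S}$ applied to $S_i$, independently across $i$. The Markov chain $U^m-S^m-V$ holds coordinate-wise. By independence, $I(S^m;V)=mI(S;V_0)\geq mt$ and $I(S^m;V|U^m)=mI(S;V_0|U)=mF_P(t)$, proving the inequality.

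\textit{Tensorization, hard direction $F_{P^{\otimes m}}(mt)\geq mF_P(t)$.} This is the main step. Given any $V$ feasible for $F_{P^{\otimes m}}(mt)$, I will construct a single-letter auxiliary that certifies $F_P(t)\leq \frac{1}{m} I(S^m;V|U^m)$. The auxiliary of choice is
\[ V_i \triangleq (V,S^{i-1}), \qquad i=1,\ldots,m,\]
followed by a time-sharing variable $Q\sim\Unif\{1,\ldots,m\}$ independent of everything. Set $\tilde S=S_Q$, $\tilde U=U_Q$, $\tilde V=(V_Q,Q)$. Three things need to be checked, and these are exactly where all the work sits:
\begin{enumerate}
\item[(i)] \emph{Markov chain $\tilde U-\tilde S-\tilde V$.} Because $P_{U|S}$ acts coordinate-wise and independently of the $V$-kernel, conditional on $S_i$ the variable $U_i$ is independent of $(S^{i-1},V)$, so $U_i-S_i-V_i$ for every $i$; together with the independence of $Q$ this upgrades to $\tilde U-\tilde S-\tilde V$.
\item[(ii)] \emph{Rate side:} $I(\tilde S;\tilde V)=\frac{1}{m}\sum_i I(S_i;V,S^{i-1})=\frac{1}{m}\sum_i I(S_i;V\mid S^{i-1})=\frac{1}{m}I(S^m;V)\geq t$, using $S_i\indep S^{i-1}$ and the chain rule.
\item[(iii)] \emph{Cost side:} $I(\tilde S;\tilde V\mid\tilde U)=\frac{1}{m}\sum_i I(S_i;V\mid S^{i-1},U_i)$, again using $S_i\indep S^{i-1}\mid U_i$. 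The remaining step is the inequality $I(S_i;V\mid S^{i-1},U_i)\leq I(S_i;V\mid S^{i-1},U^m)$, which holds because $H(S_i\mid S^{i-1},U_i)=H(S_i\mid S^{i-1},U^m)=H(S_i\mid U_i)$ (other indices carry no information about $S_i$) while $H(S_i\mid S^{i-1},U_i,V)\geq H(S_i\mid S^{i-1},U^m,V)$ by ``conditioning reduces entropy.'' Summing in $i$ gives $I(\tilde S;\tilde V\mid\tilde U)\leq \frac{1}{m}I(S^m;V\mid U^m)$.
\end{enumerate}
Combining (i)--(iii), $(\tilde S,\tilde U)\sim P$ (since $Q$ is independent of the i.i.d.\ sequence), so $\tilde V$ is a feasible auxiliary for $F_P(t)$ and $F_P(t)\leq \frac{1}{m}I(S^m;V\mid U^m)$. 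Taking the infimum over feasible $V$ yields the desired bound, and together with the easy direction completes the tensorization claim.

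The main technical obstacle is the cost-side inequality in step (iii): choosing the single-letter auxiliary so that the rate constraint telescopes cleanly via $I(S_i;V\mid S^{i-1})$ while the cost bound tolerates only conditioning on $U_i$ (not all of $U^m$). The choice $V_i=(V,S^{i-1})$ is exactly tuned so that the extra $U_{\neq i}$ on the cost side can be absorbed using ``conditioning reduces entropy'' on the $H(\cdot\mid\cdot,V)$ term, while the corresponding $H(\cdot\mid\cdot)$ terms are unaffected by the i.i.d.\ structure of $(S^m,U^m)$.
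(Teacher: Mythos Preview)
Your proof is correct. All three parts are handled cleanly, and the hard direction of tensorization is the only place where anything substantive happens; there your verification of (i)--(iii) is sound.

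The paper takes a somewhat different route for the hard direction. Rather than building a single-letter auxiliary via an explicit time-sharing index $Q$, it argues by induction on $m$: it writes $I(S^m;V)=I(S^{m-1};V)+I(S_m;V,S^{m-1})$, applies convexity of $F_P$ to this convex combination (weights $\tfrac{m-1}{m}$ and $\tfrac{1}{m}$), invokes the induction hypothesis on the $(m-1)$-block, and then shows $I(S^{m-1};V|U^{m-1})+I(S_m;V,S^{m-1}|U_m)\leq I(S^m;V|U^m)$ using the same ``add the missing $U$'s via conditioning reduces entropy'' step that you use in (iii). So the two proofs share the same core auxiliary $(V,S^{i-1})$ and the same key inequality; the difference is packaging. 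Your approach is the standard ``single-letterization by time-sharing'' argument and has the advantage of not needing convexity of $F_P$ as a separate input (the time-sharing is built into the auxiliary rather than applied to the function). The paper's inductive argument, on the other hand, makes the role of convexity explicit and peels off one coordinate at a time, which some readers may find more transparent. Either is perfectly acceptable.
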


\begin{proof}%[Proof of Lemma~\ref{lem:F_P}]
	Monotonicity of the function $F_P(t)$ follows by definition. %Establishing convexity is straightforward, and the proof is omitted.
	For convexity, let $V_0$ and $V_1$ be random variables (induced by the channels $P_{V_i|S}$) that attain $F_P(t_0)$ and $F_P(t_1)$ with equality. Let $A\sim\Ber(\alpha)$ be statistically independent of $(U,S)$, and define $\tilde{V}=(A,V_A)$. We have that
	\begin{align}
	I(S;\tilde{V})&=I(S;V_A|A)\nonumber\\
	&=(1-\alpha) I(S;V_0)+\alpha I(S;V_1)\nonumber\\
	&=(1-\alpha) t_0+\alpha t_1\nonumber,
	\end{align}
	and by definition of $F_p(t)$,
	\begin{align}
	F_p((1-\alpha) t_0+\alpha t_1)&\leq I(S;\tilde{V}|U)\nonumber\\
	&=(1-\alpha) I(S;V_0|U)+\alpha I(S;V_1|U)\nonumber\\
	&=(1-\alpha) F_{P}(t_0)+\alpha F_P(t_1).
	\end{align}
	We prove tensorization by induction. Let $(U^m,S^m)\sim P^{\otimes m}$. For any $V$ that satisfies the Markov chain $U^m-S^m-V$, we have
	\begin{align}
	&F_P\left(\frac{I(S^m;V)}{m}\right)=F_P\left(\frac{I(S^{m-1};V)+I(S_m;V|S^{m-1})}{m}\right)\nonumber\\
	&=F_P\left(\frac{I(S^{m-1};V)+I(S_m;V,S^{m-1})}{m}\right)\nonumber\\
	&=F_P\left(\frac{m-1}{m}\frac{I(S^{m-1};V)}{m-1}+\frac{1}{m}I(S_m;V,S^{m-1})\right)\nonumber\\
	&\leq \frac{m-1}{m}F_P\left(\frac{I(S^{m-1};V)}{m-1}\right)+\frac{1}{m}F_P\left(I(S_m;V,S^{m-1})\right)\nonumber,
	\end{align}
	where we have used the convexity of $t\mapsto F_P(t)$ in the last inequality. Invoking the induction hypothesis, we have
	\begin{align}
	&F_P\left(\frac{I(S^m;V)}{m}\right)\nonumber\\
	&\leq \frac{1}{m}F_{P^{\otimes(m-1)}}\left(I(S^{m-1};V)\right)+\frac{1}{m}F_P\left(I(S_m;V,S^{m-1})\right)\nonumber\\
	&\leq \frac{1}{m}\left[I(S^{m-1};V|U^{m-1})+I(S_m;V,S^{m-1}|U_m)\right], \label{eq:F2bound}
	\end{align}
	where the last inequality follows by definition of $F_{P^{\otimes(m-1)}}$ and $F_P$ and the fact $U^{m-1}-S^{m-1}-V$ and $U_m-S_m-(V,S^{m-1})$ are indeed Markov chains. Noting that
	\begin{align}
	I(S^{m-1};V|U^{m-1})&\leq I(S^{m-1};V|U^{m}),\nonumber
	\end{align}
	and
	\begin{align}
	I(S_{m};V,S^{m-1}|U_m)&\leq I(S_m;V,S^{m-1}|U^{m})\nonumber\\
	&= I(S_m;V|S^{m-1},U^{m}),\nonumber
	\end{align}
	which both follow since $S^m$ is memoryless, we obtain
	\begin{align}
	I(S^{m-1};V|U^{m-1})+I(S_m;V,S^{m-1}|U_m)\leq I(S^m;V|U^m).\label{eq:MIchain}
	\end{align}
	Substituting~\eqref{eq:MIchain} into~\eqref{eq:F2bound}, gives
	\begin{align}
	I(S^m;V|U^m)\geq m F_P\left(\frac{I(S^m;V)}{m}\right).
	\end{align}
	Thus, we have shown that $F_{P^{\otimes m}}(mt)\geq m F_P(t)$. On the other hand, we have that
	\begin{align}
	    F_{P^{\otimes m}}(mt)&\triangleq\min_{\substack{{V \ : \ U^m-S^m-V} \\ {I(S^m;V)\geq mt} } }I(S^m;V|U^m)\nonumber\\
	    &\leq \min_{\substack{{V^m \ : \ U^m-S^m-V^m} \\ {I(S^m;V^m)\geq mt} } }I(S^m;V^m|U^m)\label{eq:miniid}\\
	    &=m F_P(t),\nonumber
	\end{align}
	where the minimum in~\eqref{eq:miniid} is taken over the memoryless channels $P_{V^m|S^m}=\prod_{i=1}^n P_{V|S}$.
\end{proof}

%The following gives an operational rate-distortion meaning to $\bar{R}_P(D)$. Notice that both from the operational and analytical points of view, when $U=S$ it reduces to the RDF of the source $S$.
\begin{lemma}\label{lem:Rbar}
	Let $(U^m,S^m)\sim P^{\otimes m}$, and let $\hat{S}^m$ be a random vector satisfying the Markov chain $U^m-S^m-\hat{S}^m$ and $\frac{1}{m}\mathbb{E} d(S^m, \hat S^m) \leq D$, then
	\begin{align}
	I(U^m;\hat{S}^m)\geq m\bar{R}_P(D).\nonumber
	\end{align}
\end{lemma}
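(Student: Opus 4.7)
The plan is to mimic the classic single-letterization proof of the rate-distortion function, adapted to $\bar{R}_P$. The argument splits into (i) reducing $I(U^m;\hat{S}^m)$ to a sum of per-letter mutual informations, (ii) verifying a per-letter Markov chain so that $\bar{R}_P$ applies, and (iii) invoking convexity and monotonicity of $\bar{R}_P$ to conclude via Jensen.

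For step (i), I would use the chain rule $I(U^m;\hat{S}^m) = \sum_{i=1}^m I(U_i;\hat{S}^m\mid U^{i-1})$. Since the $U_i$'s are i.i.d.\ (as marginals of the i.i.d.\ pairs $(S_i,U_i)$), one has $H(U_i\mid U^{i-1})=H(U_i)$; and since conditioning reduces entropy, $H(U_i\mid \hat{S}^m,U^{i-1})\leq H(U_i\mid \hat{S}_i)$, where $\hat{S}_i$ is simply the $i$th coordinate of $\hat{S}^m$. Combining these two observations yields $I(U^m;\hat{S}^m)\geq \sum_{i=1}^m I(U_i;\hat{S}_i)$.

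The main technical check is step (ii): the per-letter Markov chain $U_i - S_i - \hat{S}_i$ must hold in order to invoke the definition of $\bar{R}_P$ on each summand. This is not immediate from $U^m-S^m-\hat{S}^m$ alone; it uses the product structure of the source. Concretely, writing $P(\hat{s}_i\mid s_i,u_i)=\sum_{s^{\setminus i},u^{\setminus i}}P(s^{\setminus i},u^{\setminus i}\mid s_i,u_i)\,P(\hat{s}_i\mid s^m,u^m)$, the first factor equals $P(s^{\setminus i},u^{\setminus i})$ by independence across indices, and the second factor equals $P(\hat{s}_i\mid s^m)$ by the given Markov chain; thus $P(\hat{s}_i\mid s_i,u_i)$ depends only on $s_i$, which is the desired Markov property.

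Given the per-letter Markov chain, by definition $I(U_i;\hat{S}_i)\geq \bar{R}_P(\mathbb{E}\,d(S_i,\hat{S}_i))$. For step (iii) I would establish that $\bar{R}_P$ is non-increasing (immediate, since enlarging $D$ enlarges the feasible set) and convex (by a standard time-sharing argument: take minimizers $\hat{S}_0,\hat{S}_1$ attaining $\bar{R}_P(D_0),\bar{R}_P(D_1)$, introduce an independent $A\sim\mathrm{Ber}(\alpha)$, and set $\hat{S}=\hat{S}_A$; the Markov chain $U-S-\hat{S}$ is preserved, the distortion is the convex combination, and $I(U;\hat{S})\leq I(U;\hat{S},A)=(1-\alpha)I(U;\hat{S}_0)+\alpha I(U;\hat{S}_1)$). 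Then Jensen's inequality and monotonicity give
\begin{align*}
\sum_{i=1}^m I(U_i;\hat{S}_i)\;\geq\;\sum_{i=1}^m \bar{R}_P\!\bigl(\mathbb{E}\,d(S_i,\hat{S}_i)\bigr)\;\geq\; m\,\bar{R}_P\!\left(\tfrac{1}{m}\mathbb{E}\,d(S^m,\hat{S}^m)\right)\;\geq\; m\,\bar{R}_P(D),
\end{align*}
as claimed. The only nontrivial step is the Markov-chain verification; everything else is standard single-letterization, structurally identical to the proof of Lemma~\ref{lem:F_P}.
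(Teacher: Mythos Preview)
Your proof is correct and follows essentially the same approach as the paper's: single-letterize $I(U^m;\hat S^m)\geq\sum_i I(U_i;\hat S_i)$ using memorylessness of $U^m$, invoke the per-letter Markov chain $U_i-S_i-\hat S_i$, and then apply convexity of $\bar R_P$ together with Jensen. You supply more detail than the paper does (in particular, the explicit verification of the per-letter Markov chain and the time-sharing argument for convexity), but the structure is identical.
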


\begin{proof}%[Proof of Lemma~\ref{lem:Rbar}]
	Since $U^m$ is memoryless, we have that
	\begin{align}
	I(U^m;\hat{S}^m)\geq \sum_{i=1}^m I(U_i;\hat{S}_i).
	\end{align}
	Note that $\frac{1}{m}\sum_{i=1}^m \mathbb{E}d(S_i;\hat{S}_i)\leq D$ by separability of $d(S^m;\hat{S}^m)$, and that the Markov chain $U^m-S^m -\hat{S}^m$ implies that $U_i-S_i-\hat{S}_i$ is also a Markov chain. It is easy to see that the function $D\mapsto\bar{R}_P(D)$ is convex. Thus, letting $d_i=\mathbb{E}d(S_i;\hat{S}_i)$, we have that
	\begin{align}
	I(U^m;\hat{S}^m)\geq m\frac{1}{m}\sum_{i=1}^m \bar{R}_P(d_i)\geq m \bar{R}_P(D).
	\end{align}
	%We define the distortion function $\tilde{d}:\m{U}\times\hat{\m{S}}\mapsto\RR_+$\vspace{-3mm}
	%\begin{align}
	%\tilde{d}(u,\hat{s})=\mathbb{E}\left[d(S,\hat{s} |U=u\right].
	%\end{align}
	%We must therefore have that
	%\begin{align}
	%\mathbb{E}\frac{1}{m}\sum_{i=1}^m \tilde{d}(U_i,\hat{S}_i)&=\mathbb{E}\frac{1}{m}\sum_{i=1}^m \mathbb{E}[d(S_i,\hat{S}_i|U_i)]\nonumber\\
	%&=\mathbb{E}\frac{1}{m}\sum_{i=1}^m d(S_i,\hat{S}_i)\nonumber\\
	%&\leq D.\nonumber
	%\end{align}
	%Thus, we must have
	%\begin{align}
	%I(U^m;\hat{S}^m)&\geq \min_{\substack{{\hat{S}^m \ : \ U^m-S^m-\hat{S}^m}\\{\mathbb{E}\frac{1}{m}\sum_{i=1}^m d(S_i,\hat{S}_1)\leq D}}} I(U^m;\hat{S}^m)\nonumber\\
	%&\geq m \min_{\substack{{\hat{S} \ : \ U-S-\hat{S}}\\{\mathbb{E}d(S,\hat{S})\leq D}}} I(U;\hat{S})\nonumber\\
	%&=m\bar{R}(S,U,D),
	%\end{align}
	%where the last inequality follows from the same arguments as the converse to the rate-distortion theorem.
\end{proof}

\subsection{The Channel Function and Derivation of Theorem~\ref{thm:JSCCgen}} 

In the general (non-memoryless, non-degraded) case the channel function $G_{Q^n}$ \eqref{eq:GqDef} does not tenzorise. However, we have the following basic properties.

\begin{lemma} \label{lem:no_tensor}
	The function $G_{Q^n}(t)$ is monotone non-increasing and concave. 
\end{lemma}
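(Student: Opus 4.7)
The plan is to prove both properties directly from the variational definition in~\eqref{eq:GqDef}, without invoking any structure on $Q^n$ beyond the Markov chain.

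For monotonicity, I note that the constraint $I(X^n; Y_1^n \mid W) \geq t$ defines a nested family of feasible sets: a larger $t$ gives a smaller set of admissible $(W, X^n)$, so the maximum of $I(Y_2^n; W)$ can only shrink. This yields $G_{Q^n}(t') \leq G_{Q^n}(t)$ whenever $t \leq t'$, and is essentially a one-line containment argument.

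For concavity I would use a time-sharing construction, mirroring the convexity proof of $F_P$ in Lemma~\ref{lem:F_P}. Let $(W_0, X_0^n)$ and $(W_1, X_1^n)$ be (near-)maximizers of $G_{Q^n}(t_0)$ and $G_{Q^n}(t_1)$, so that $I(X_i^n; Y_{1,i}^n \mid W_i) \geq t_i$ and $I(W_i; Y_{2,i}^n) = G_{Q^n}(t_i)$, where $(Y_{1,i}^n, Y_{2,i}^n)$ is the output of the fixed broadcast channel $Q^n$ on input $X_i^n$. Introduce $A \sim \Ber(\alpha)$ independent of everything else, set $\tilde{X}^n = X_A^n$, $\tilde{W} = (A, W_A)$, and let $(\tilde{Y}_1^n, \tilde{Y}_2^n)$ be the result of passing $\tilde{X}^n$ through $Q^n$. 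The Markov chain $\tilde{W} - \tilde{X}^n - (\tilde{Y}_1^n, \tilde{Y}_2^n)$ is preserved by construction. Conditioning on $A$ then gives
$$I(\tilde{X}^n; \tilde{Y}_1^n \mid \tilde{W}) = (1-\alpha)\, I(X_0^n; Y_{1,0}^n \mid W_0) + \alpha\, I(X_1^n; Y_{1,1}^n \mid W_1) \geq (1-\alpha) t_0 + \alpha t_1,$$
so $(\tilde{W}, \tilde{X}^n)$ is feasible at level $(1-\alpha)t_0 + \alpha t_1$; while
$$I(\tilde{W}; \tilde{Y}_2^n) \geq I(W_A; \tilde{Y}_2^n \mid A) = (1-\alpha)\, G_{Q^n}(t_0) + \alpha\, G_{Q^n}(t_1),$$
which establishes $G_{Q^n}((1-\alpha)t_0 + \alpha t_1) \geq (1-\alpha) G_{Q^n}(t_0) + \alpha G_{Q^n}(t_1)$.

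I do not expect any real obstacle. The crucial point making this work for a non-memoryless, non-degraded channel is that time-sharing only convex-combines two joint laws on $(W, X^n, Y_1^n, Y_2^n)$ drawn through the same fixed $n$-letter broadcast channel $Q^n$, with no factorization across coordinates ever invoked. The only subtlety is notational: $A$ must be absorbed into the auxiliary variable $\tilde{W}$ (rather than into $\tilde{X}^n$) so that the Markov chain is kept intact, which is exactly what allows the conditioning to split as above.
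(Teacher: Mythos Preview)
Your proposal is correct and follows essentially the same route as the paper: monotonicity from the nestedness of the feasible sets, and concavity via a time-sharing auxiliary $A\sim\Ber(\alpha)$ absorbed into $\tilde{W}=(A,W_A)$, which is exactly the construction the paper uses. The only cosmetic difference is that you allow for near-maximizers, which is a harmless refinement.
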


\begin{proof}
	Monotonicity of $G_{Q^n}(t)$ follows by definition. %Establishing concavity is straightforward, and the proof is omitted.
	For concavity, let $(W_0,X^n_0)$ and $(W_1,X^n_1)$ be random variables that attain $G_{Q^n}(t_0)$ and $G_{Q^n}(t_1)$ with equality. Let $A\sim\Ber(\alpha)$ be statistically independent of $(W_0,X^n_0,W_1,X^n_1)$, and define $(\tilde{W},\tilde{X}^n)=((A,W_A),X^n_A)$. We have that
	\begin{align}
	I(\tilde{X}^n;Y^n_1|\tilde{W})&=I(X^n_A;Y^n_1|W_A,A)\nonumber\\
	&=(1-\alpha) I(X^n_0;Y^n_1|W_0)+\alpha I(X^n_1;Y^n_1|W_1)\nonumber\\
	&= (1-\alpha)t_0+\alpha t_1,
	\end{align}
	and by definition of $G_{Q^n}(t)$,
	\begin{align}
	G_{Q^n}(\alpha t_0+(1-\alpha)t_1)&= G_{Q^n}(I(\tilde{X}^n;Y^n_1|\tilde{W}))\nonumber\\
	& \geq I(Y^n_2;\tilde{W})\nonumber\\
	&=I(Y^n_2;A)+I(Y^n_2;W_A|A)\nonumber\\
	&\geq \alpha G_{Q^n}(t_0)+(1-\alpha)G_{Q^n}(t_1).
	\end{align}
\end{proof}

We are now in a position to prove Theorem~\ref{thm:JSCCgen}. The proof is essentially a generalization of the technique developed by Reznik, Feder and Zamir for the Gaussian joint source-channel coding problem~\cite{rfz06}. Their proof relied heavily on the entropy-power inequality (EPI), which is replaced by the functions $F_{P}(t)$ and $G_{Q^n}(t)$ in the proof below. We remark that although Theorem~\ref{thm:JSCCgen} is stated and proved for channels without a cost constraint, such a constraint can be included by constraining the distribution of $X^n$ in the computation of $G_{Q^n}(t)$, in the obvious way.

\begin{proof}[Proof of Theorem~\ref{thm:JSCCgen}]
	Let $\hat{S}_1^m,\hat{S}_2^m$ be the estimates produced from the outputs $Y_1^n$ and $Y_2^n$, respectively. We have
	\begin{align}
	m\bar{R}_P(D_2)&\leq I(U^m,\hat{S}_2^m)\label{eq:RpDapp}\\
	&\leq I(U^m;Y_2^n)\label{eq:BC1}\\
	&\leq G_{Q^n}(I(X^n;Y_1^n|U^m))\label{eq:BC2}\\
	%&= n G_Q\left(\frac{I(X^n;Y_1^n|U^m)}{n} \right)\label{eq:BC3}\\
	&\leq G_{Q^n}\left(I(S^m;Y_1^n|U^m) \right)\label{eq:BC4}\\
	&\leq G_{Q^n}\left(F_{P^{\otimes m}}\left(I(S^m;Y_1^n)\right) \right)\label{eq:BC5}\\
	&= G_{Q^n}\left(m F_{P}\left(\frac{I(S^m;Y_1^n)}{m}\right) \right)\label{eq:BC6}\\
	&\leq G_{Q^n}\left(m F_{P}\left(\frac{I(S^m;\hat{S}_1^m)}{m}\right) \right)\label{eq:BC7}\\
	&\leq G_{Q^n}\left(m F_{P}\left(R(D_1)\right) \right), \nonumber
	\end{align}
	where~\eqref{eq:RpDapp} follows from Lemma~\ref{lem:Rbar}, ~\eqref{eq:BC1} follows from the data processing inequality (DPI),~\eqref{eq:BC2} from definition of $G_{Q^n}$,~\eqref{eq:BC4} from the DPI and monotonicity of $G_{Q^n}$,~\eqref{eq:BC5} from definition of $F_{P^{\otimes m}}$,~\eqref{eq:BC6} from tensorization of $F_{P^{\otimes m}}$, and~\eqref{eq:BC7} from the DPI.
\end{proof}

Note that $U^m$ plays a two-fold role here: in~\eqref{eq:BC2} we used the Markov chain $U^m-X^n-(Y^n_1,Y^n_2)$, whereas in~\eqref{eq:RpDapp} and~\eqref{eq:BC5} we used $U^m-S^m-Y^n_1$. Thus, the source functions $F_P(t)$ and $\bar{R}_P(D)$, and the broadcast function $G_{Q^n}(t)$ are \emph{coupled} via the same auxiliary variable $U^m$. This is also the main weakness of the bound above: Even though the same $U^m$, whose distribution is fixed and memoryless once we choose the channel $P_{U|S}$, appears in both Markov chains, in the transition from~\eqref{eq:BC1} to~\eqref{eq:BC2}, we have used the definition of $G_{Q^n}$, which involves a \emph{maximization} with respect to $U^m$. As will be shown in the sequel, in the special case where $Q^n$ is degraded and memoryless, the auxiliary random variables achieving the maximum in the definition of $G_{Q^n}(t)$ are of the form $(W^n,X^n)\sim P_{WX}^{\otimes n}$, i.e., $n$-letter memoryless distribution. We therefore see that for such $Q^n$, the random variables $(U^m,X^n)$, where $U^m$ is $m$-letter memoryless, cannot achieve the maximum in the definition of $G_{Q^n}(t)$, unless $m=n$. Thus, the inequality~\eqref{eq:BC2} must be strict in this case.

\section{Outer Bound for the Binary Spherical-Noise JSCC Broadcast Problem}
\label{sec:sphericalBC}

In this section we derive an explicit bound for the binary spherical-noise JSCC broadcast problem (Definition~\ref{def:JSCC_BC_binary}), namely we prove Theorem~\ref{thm:Tradeoff}. To that end, we first establish Theorem~\ref{thm:JSCC_Delta} .

\subsection{Proof of Theorem~\ref{thm:JSCC_Delta}}

For specializing Theorem~\ref{thm:JSCCgen} to the binary case, we need to evaluate or at least bound the source and channel functions. The crucial part is a bound on the (non single-letter) channel function, as follows.

\begin{lemma} 
Let $Q^n=Q_{Y^n_1,Y^n_2|X^n}(y_1^n,y_2^n|x^n)$ be the additive spherical-noise broadcast channel $Y_1^n=X^n+U_1^n$, $Y_2=X^n+U_2^n$, where $U_1^n\sim\Unif\left(\m{S}_{n\delta_1,n}\right)$ and $U_2^n\sim\Unif\left(\m{S}_{n(\delta_1*\delta_2),n}\right)$, $(U_1^n,U_2^n)\indep X^n$. Then,
\begin{align}
\frac{1}{n} G_{Q^n}(nt)\leq \log 2-h_b\left(\delta_2*h_b^{-1}\left(h_b(\delta_1)+t\right) \right) + \Gamma(n,\delta_2),
\end{align}
where
\begin{align}
\Gamma(n,\delta_2)\triangleq\sqrt{\frac{\delta_2}{n}}\log\left(\frac{n}{\delta_2} \right)+\frac{\log{n}+1}{2n}.\label{eq:epsilonndef}
\end{align}
\label{lem:GsphericalBC}
\end{lemma}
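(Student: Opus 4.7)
The plan is to dominate the intractable channel function $G_{Q^n}$ by that of an auxiliary broadcast channel $\tilde{Q}^n$ which, although still not memoryless, is stochastically degraded and hence amenable to Mrs.\ Gerber's Lemma, and then control the gap via a coupling. I define $\tilde{Q}^n$ by leaving the first user's output $Y_1^n=X^n+U_1^n$ unchanged while replacing $U_2^n$ by $\bar{U}_2^n:=U_1^n+\tilde{Z}^n$, with $\tilde{Z}^n\sim\Ber(\delta_2)^{\otimes n}$ independent of $(X^n,U_1^n)$. Under $\tilde{Q}^n$ one has $\tilde{Y}_2^n=\tilde{Y}_1^n+\tilde{Z}^n$, so the channel is degraded. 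Since the constraint $I(X^n;Y_1^n|W)\geq nt$ in the definition of $G$ depends only on the first marginal, and this marginal coincides under $Q^n$ and $\tilde{Q}^n$,
\begin{align*}
G_{Q^n}(nt)\leq G_{\tilde{Q}^n}(nt)+\sup_{P_{W,X^n}}\bigl|I(Y_2^n;W)-I(\tilde{Y}_2^n;W)\bigr|.
\end{align*}

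For the degraded channel I upper bound $I(\tilde{Y}_2^n;W)\leq n\log 2-H(\tilde{Y}_2^n|W)$. Conditioning on $W$ and invoking MGL on $\tilde{Y}_2^n=\tilde{Y}_1^n+\tilde{Z}^n$ (valid since $\tilde{Z}^n$ is independent of $(\tilde{Y}_1^n,W)$) gives $H(\tilde{Y}_2^n|W)\geq n\varphi_{\delta_2}(H(\tilde{Y}_1^n|W)/n)$. The constraint $I(X^n;\tilde{Y}_1^n|W)=H(\tilde{Y}_1^n|W)-H(U_1^n)\geq nt$, combined with Stirling's bound $H(U_1^n)=\log\binom{n}{n\delta_1}\geq nh_b(\delta_1)-\tfrac{1}{2}\log n-\mathcal{O}(1)$ and the monotonicity plus Lipschitz continuity of $\varphi_{\delta_2}$, yields
\begin{align*}
\tfrac{1}{n}G_{\tilde{Q}^n}(nt)\leq \log 2-h_b\!\left(\delta_2*h_b^{-1}(h_b(\delta_1)+t)\right)+\mathcal{O}\!\left(\tfrac{\log n}{n}\right),
\end{align*}
which absorbs the $(\log n+1)/(2n)$ term of $\Gamma(n,\delta_2)$.

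For the remaining gap, expanding $I(Y_2^n;W)-I(\tilde{Y}_2^n;W)$ in unconditional and conditional entropies reduces the task to bounding $\sup_{P_{X^n}}|H(X^n+U_2^n)-H(X^n+\bar{U}_2^n)|$. For this I use a coupling in the spirit of~\cite{pw16,op18}. Both $U_2^n$ and $\bar{U}_2^n$ are permutation invariant with mean weight $n(\delta_1*\delta_2)$, and the random weight $K:=w_H(\bar{U}_2^n)=n\delta_1+w_H(\tilde{Z}^n)-2|U_1^n\cap\tilde{Z}^n|$ satisfies $\mathbb{E}|K-n(\delta_1*\delta_2)|=\mathcal{O}(\sqrt{n\delta_2})$. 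Conditional on $K$, $\bar{U}_2^n$ is uniform on $\m{S}_{K,n}$, so I can couple it to $U_2^n\sim\Unif(\m{S}_{n(\delta_1*\delta_2),n})$ by aligning supports so that $\Delta^n:=U_2^n\oplus\bar{U}_2^n$ has weight exactly $|K-n(\delta_1*\delta_2)|$. Subadditivity of entropy then gives $|H(X^n+U_2^n)-H(X^n+\bar{U}_2^n)|\leq H(\Delta^n)\leq H(K)+\mathbb{E}\log\binom{n}{|K-n(\delta_1*\delta_2)|}$, and a sub-Gaussian tail bound for $K$ together with $\log\binom{n}{j}\leq j\log(en/j)$ yields the $\sqrt{n\delta_2}\log(n/\delta_2)$ contribution to $n\,\Gamma(n,\delta_2)$.

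The main obstacle will be the coupling step: a crude bound based on total variation between $U_2^n$ and $\bar{U}_2^n$ combined with the $n\log 2$ alphabet size loses linearly in $n$ and is useless. The key is to bound $H(\Delta^n)$ directly through the Hamming weight of the coupled difference, which is controlled in expectation by the standard deviation of $K$ and is of order $\sqrt{n\delta_2}$---well below the $\sqrt{n}$ scale of interest in Theorem~\ref{thm:main}, so that after dividing by $n$ the correction remains of order $\mathcal{O}(n^{-3/4}\log n)$ once $\delta_2=\Theta(n^{-1/2})$.
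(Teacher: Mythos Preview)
Your approach is essentially the paper's: replace $U_2^n$ by $U_1^n+\tilde{Z}^n$ with $\tilde{Z}^n\sim\Ber(\delta_2)^{\otimes n}$ to obtain a degraded channel, apply MGL to lower bound $H(\tilde{Y}_2^n|W)$ in terms of $H(Y_1^n|W)=H(U_1^n)+I(X^n;Y_1^n|W)$, insert the Stirling lower bound on $H(U_1^n)$, and control the gap between the spherical and the $U_1^n+\tilde{Z}^n$ channels by a coupling that aligns the two noises so that their XOR has expected weight at most $\sqrt{n\delta_2}$.

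One organizational difference is worth noting. You first pass to $G_{\tilde Q^n}$ and then bound $\sup_{P_{W,X^n}}|I(Y_2^n;W)-I(\tilde Y_2^n;W)|$; expanding this mutual-information gap produces \emph{two} entropy differences, $H(Y_2^n)-H(\tilde Y_2^n)$ and $H(Y_2^n|W)-H(\tilde Y_2^n|W)$, so you incur the coupling correction twice and would obtain $2\sqrt{\delta_2/n}\log(n/\delta_2)$ rather than the stated constant in $\Gamma(n,\delta_2)$. The paper avoids this by working directly with $H(Y_2^n|W)$ and writing $H(Y_2^n|W)=H(X^n+U_1^n+\tilde Z^n|W)+[H(X^n+U_2^n|W)-H(X^n+U_1^n+\tilde Z^n|W)]$, so that the coupling correction enters only once before using $H(Y_2^n)\leq n\log 2$. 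Your bound $|H(\cdot+U_2^n)-H(\cdot+\bar U_2^n)|\leq H(\Delta^n)$ and the paper's invocation of~\cite[Proposition~8]{pw16} are the same device; either yields the $\sqrt{n\delta_2}\log(n/\delta_2)$ scale, which is all that matters for Theorem~\ref{thm:main}.
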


\begin{proof}
Let $(W,X^n)$ satisfy the Markov chain $W-X^n-(Y_1^n=X^n+U_1^n,Y_2^n=X^n+U_2^n)$. We begin by writing
\begin{align}
H(Y_2^n|W)=H(X^n+U_1^n+Z_3^n|W)+\left[H(X^n+U_2^n|W)-H(X^n+U_1^n+Z_3^n|W) \right],\label{eq:condEnt}
\end{align}
where $Z_3^n\sim\Ber(\delta_2)^{\otimes n}$. We will upper bound the absolute value of the term in the square brackets via coupling. Consider the following joint distribution on $(U_2^n,U_1^n+Z_3^n)$:
\begin{itemize}
\item Let $\Pi$ be a uniform random permutation on $[n]=\{1,\ldots,n\}$.
\item Let $T=T_0+T_1$ where $T_0\sim\mathrm{Binomial}(n(1-\delta_1),\delta_2)$ and $T_1\sim\mathrm{Binomial}(n\delta_1,1-\delta_2)$ are independent.
\item Set $U_{2,\Pi(i)}=1$ for $i=1,\ldots,(\delta_1*\delta_2) n$ and $U_{2,\Pi(i)}=0$ for $i=(\delta_1*\delta_2) n+1,\ldots,n$.
\item Set $U_{1,\Pi(i)}+Z_{3,\Pi(i)}=1$ for $i=1,\ldots,T$ and $U_{1,\Pi(i)}+Z_{3,\Pi(i)}=0$ for $i=T+1,\ldots,n$.
\end{itemize}
Clearly $U_2^n$ and $U_1^n+Z_3^n$ have the correct marginal distributions. Moreover, the expected Hamming distance between these vectors satisfies
\begin{align}
\mathbb{E}w_H(U_2^n+(U_1^n+Z_3^n))&=\mathbb{E}|T-n(\delta_1*\delta_2)|\nonumber\\
&=\mathbb{E}\sqrt{\left(T-n(\delta_1*\delta_2) \right)^2}\nonumber\\
&\leq \sqrt{\Var(T)}\nonumber\\
&=\sqrt{\Var(T_1)+\Var(T_2)}\nonumber\\
&=\sqrt{n\delta_2(1-\delta_2)}\nonumber\\
&\leq\sqrt{n\delta_2},
\end{align}
where the first inequality follows from Jensen's inequality and the fact that $\mathbb{E}(T)=n(\delta_1*\delta_2)$. Now, applying~\cite[Proposition 8]{pw16} (see also~\cite{op18}), we obtain for $\delta_2<1/2$
\begin{align}
\left|H(X^n+U_2^n|W)-H(X^n+U_1^n+Z_3^n|W) \right|\leq \sqrt{n\delta_2}\log\left(\frac{n}{\delta_2} \right).
\end{align}
Thus, we can use Mrs. Gerber's Lemma (MGL) to lower bound~\eqref{eq:condEnt} as
\begin{align}
H(Y_2^n|W) & \geq H(X^n+U_1^n+Z_3^n|W)-\sqrt{n\delta_2}\log\left(\frac{n}{\delta_2} \right)\nonumber\\
&\geq n \varphi_{\delta_2} \left(\frac{H(X^n+U_1^n|W)}{n} \right) -\sqrt{n\delta_2}\log\left(\frac{n}{\delta_2} \right), \label{eq:mgl}
\end{align}
where $\varphi_{\delta_2}(x)=h_b(\delta_2*h^{-1}(x))$ is the MGL function. Further bounding, we have
\begin{align}
&H(Y_2^n|W) = n \varphi_{\delta_2} \left(\frac{H(X^n+U_1^n|X^n)+H(X^n+U_1^n|W)-H(X^n+U_1^n|X^n,W)}{n} \right) -\sqrt{n\delta_2}\log\left(\frac{n}{\delta_2} \right)\nonumber\\
&= n \varphi_{\delta_2} \left(\frac{H(U_1^n)+I(X^n;Y_1^n|W)}{n} \right) -\sqrt{n\delta_2}\log\left(\frac{n}{\delta_2} \right)\nonumber\\
&\geq n \varphi_{\delta_2} \left(\frac{n h_b(\delta_1)-\frac{1}{2}(\log{n}+1)+I(X^n;Y_1^n|W)}{n} \right)-\sqrt{n\delta_2}\log\left(\frac{n}{\delta_2} \right)\label{eq:spherent}\\
&= n \varphi_{\delta_2} \left(h_b(\delta_1)+\frac{I(X^n;Y_1^n|W)}{n}-\frac{\log{n}+1}{2n} \right)-\sqrt{n\delta_2}\log\left(\frac{n}{\delta_2} \right)\nonumber\\
&\geq n \varphi_{\delta_2} \left(h_b(\delta_1)+\frac{I(X^n;Y_1^n|W)}{n} \right)-\frac{\log{n}+1}{2}\varphi_{\delta_2}'\left(h_b(\delta_1)+\frac{I(X^n;Y_1^n|W)}{n} \right)-\sqrt{n\delta_2}\log\left(\frac{n}{\delta_2} \right),\nonumber
\end{align}
where in~\eqref{eq:spherent} we substitute~\cite[Chapter 10, Lemma 7]{ms77} to lower-bound $H(U_1^n)$, and in the last inequality we have defined  the MGL derivative $\varphi_{\delta_2}'(x)=\frac{d}{dx}\varphi_{\delta_2}(x)$, and used the convexity of $x\mapsto\varphi_{\delta_2}(x)$~\cite{wz73}. Recalling that $\varphi_{\delta_2}'(x)\leq 1$ due to~\cite[Theorem 2.6]{ww75} (or alternatively, as can be seen directly from the expression for $\varphi_{\delta_2}'(x)$ derived in~\cite{wz73}), we have obtained
\begin{align}
H(Y_2^n|W)\geq n h_b\left(\delta_2*h_b^{-1}\left(h_b(\delta_1)+\frac{I(X^n;Y_1^n|W)}{n} \right) \right)-n\Gamma(n,\delta_2),
\end{align}
and the claim now follows since $H(Y_2^n)\leq n\log 2$.
\end{proof}

The treatment of the source functions is much simpler. As it is identical to the i.i.d. binary problem, we defer the derivation to  Section~\ref{subsec:binary}, where we show that for $S\sim\Ber(p)$ and the choice $U=S+N$, where $N\sim\Ber(q)$ is statistically independent of $S$, we have:
\begin{align}
 	F_P(t) & \geq t-h_b(q*p)+h_b\left(q*h_b^{-1}\left(h_b(p)-t\right) \right)\\
	\bar{R}_P(D)&=h_b(q*p)-h_b(q*D),
   \end{align}
for $0 \leq D \leq p$. Substituting these expressions for $p=1/2$ and the channel-function bound of Lemma~\ref{lem:GsphericalBC} in Theorem~\ref{thm:JSCCgen}, Theorem~\ref{thm:JSCC_Delta} is immediately obtained.

\subsection{Local Analysis: From Theorem~\ref{thm:JSCC_Delta} to Theorem~\ref{thm:Tradeoff}}
%\subsection{Local Analysis}
\label{subsec:from4to7}

Our next goal is to manipulate the bound from Theorem~\ref{thm:JSCC_Delta} in order to obtain a lower bound on the sum-distortion. 
The proof of the following Lemma is brought in Appendix~\ref{app:D1D2lemma}, and is based on several auxiliary lemmas, which are stated and proved in Appendix~\ref{app:aux}.

\begin{lemma}
Consider the binary spherical-noise JSCC broadcast problem of Definition~\ref{def:JSCC_BC_binary}. If $(D_1,D_2)$ is achievable, then for any $\tau>0$ we have
\begin{align}
D_2-D_1&\leq\frac{1+2D_2\tau}{2D_2\tau} \frac{\rho C(\delta_1*\delta_2)-R(D_2)+\rho\Gamma(n,\delta_2)}{\log\left(\frac{1-D_2}{D_2}\right)}+(\delta_1*\delta_2-\delta_1)\frac{\log\left(\frac{1-\delta_1}{\delta_1}\right)}{\log\left(\frac{1-D_2}{D_2}\right)}\frac{\Phi(\delta_1)}{\Phi(D_2)}(1+\tau)\frac{g(D_1)}{g(D_2)},\label{eq:GapIneq3}
\end{align}
where $\Phi(t)$ is defined in~\eqref{eq:PhiDef}, $\Gamma(n,\delta_2)$ is defined in~\eqref{eq:epsilonndef}, and
\begin{align}
C(t)=R(t)&\triangleq\log{2}-h_b(t),\label{eq:RdDef}\\
g(t)&\triangleq(1-2t)\log\left(\frac{1-t}{t}\right)=(1-2t)h_b'(t).\label{eq:gtdef}
\end{align}
\label{lem:D1D2ineq}
\end{lemma}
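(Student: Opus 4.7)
The plan is to perform a purely deterministic manipulation of the inequality in Theorem~\ref{thm:JSCC_Delta} (specialized to $p=1/2$), translating its nested-nonlinear form into a bound on $D_2-D_1$. Introduce the shorthands $\Delta_i \triangleq h_b(q*D_i) - h_b(D_i) \geq 0$ and the Mrs.\ Gerber function $\varphi_{\delta_2}(s) = h_b(\delta_2 * h_b^{-1}(s))$, so that $\varphi_{\delta_2}(h_b(\delta_1))=h_b(\delta_1*\delta_2)$. Adding $h_b(D_2)$ to the left and $\rho h_b(\delta_1*\delta_2)$ to the right of Theorem~\ref{thm:JSCC_Delta}, one immediately obtains
\[
\rho C(\delta_1*\delta_2) - R(D_2) + \rho\Gamma(n,\delta_2) + \Delta_2 \;\geq\; \rho\bigl[\varphi_{\delta_2}\bigl(h_b(\delta_1) + \Delta_1/\rho\bigr) - \varphi_{\delta_2}(h_b(\delta_1))\bigr].
\]
The left side is the ``capacity--rate slack'' of user~$2$ (plus an auxiliary $\Delta_2$), and the right side encodes how much $D_1$ can be smaller than $D_2$.

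First I would lower-bound the right side. Setting $\tilde{\delta} = h_b^{-1}(h_b(\delta_1) + \Delta_1/\rho) \geq \delta_1$, so that $\varphi_{\delta_2}(h_b(\delta_1) + \Delta_1/\rho) = h_b(\delta_2 * \tilde{\delta})$, the shift $\tilde{\delta}-\delta_1$ relates to $\Delta_1/\rho$ through the concavity of $h_b$, producing a leading factor of $1/h_b'(\delta_1)=1/\log((1-\delta_1)/\delta_1)$ plus a second-order correction controlled by $\Phi(\delta_1)$. The outer shift $\delta_2*\tilde{\delta}-\delta_1*\delta_2 = (1-2\delta_2)(\tilde{\delta}-\delta_1)$ in turn translates via a tangent/secant inequality for $h_b$ at $\delta_1*\delta_2$, introducing the factor $\delta_1*\delta_2 - \delta_1 = \delta_2(1-2\delta_1)$ that appears in the second term of Lemma~\ref{lem:D1D2ineq}. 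Next I would choose the parametrization $q = 2 D_2 \tau/(1-2 D_2)$, so that $q*D_2 = D_2(1+2\tau)$ and a Taylor expansion of $h_b$ around $D_2$ gives $\Delta_2 \leq 2 D_2 \tau \cdot h_b'(D_2) + O(\tau^2)$. Dividing the rearranged inequality through by $h_b'(D_2)=\log((1-D_2)/D_2)$ and transferring the $\Delta_2$ contribution to the right converts the capacity-slack term into the prefactor $(1+2 D_2\tau)/(2 D_2\tau)$. A parallel expansion $\Delta_1 \gtrsim q \cdot g(D_1)$ supplies the $g(D_1)/g(D_2)$ ratio, while the $(1+\tau)$ factor captures the gap between a tangent slope and a secant slope for $h_b$ on the interval $[D_1,\,q*D_1]$. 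The $\Phi(\delta_1)/\Phi(D_2)$ ratio arises when the second-order Taylor remainders at $\delta_1$ and at $D_2$ are compared: $\Phi(x)$ is precisely the correction that appears when inverting $h_b(q*x)-h_b(x)$ in $q$ to second order.

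The main technical obstacle is the coordinated bookkeeping of second-order Taylor remainders across the nested composition $\varphi_{\delta_2} = h_b \circ (\delta_2*\cdot) \circ h_b^{-1}$; it is nontrivial that all the remainders combine into the single factor $\tfrac{\Phi(\delta_1)}{\Phi(D_2)}(1+\tau)$ rather than an unwieldy sum. A second subtle point is the direction of the inequalities: since $h_b$ is concave and $h_b'$ is decreasing on $(0,1/2)$, tangent-line and secant-line bounds go in opposite directions, so the upper bound used on $\Delta_2$ and the lower bound used on $\Delta_1$ must be coordinated to preserve the final $\leq$ on $D_2 - D_1$. The auxiliary lemmas promised in Appendix~\ref{app:aux} presumably package these tangent/secant estimates and their local expansions uniformly in the parameter regime of interest, so that assembling them in Appendix~\ref{app:D1D2lemma} yields the stated bound.
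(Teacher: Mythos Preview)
Your overall strategy---start from Theorem~\ref{thm:JSCC_Delta}, rewrite as a capacity-slack inequality, and extract $D_2-D_1$ by bounding the increment $\varphi_{\delta_2}(h_b(\delta_1)+\Delta_1/\rho)-\varphi_{\delta_2}(h_b(\delta_1))$---matches the paper's. However, several specifics of your plan do not line up with what actually happens, and as written the argument would not close.

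First, the bound in the lemma is an exact inequality valid for every $\tau>0$, so your reliance on Taylor expansions with $O(\tau^2)$ remainders cannot produce it. The paper never Taylor-expands: every step is a tangent-line inequality coming from convexity or concavity. Concretely, the right side is lower-bounded via the convexity of $\varphi_{\delta_2}$ (Lemma~\ref{lem:MGLlin}), which yields the exact factor $g(\delta_1*\delta_2)/g(\delta_1)$ rather than a leading-plus-remainder estimate; then convexity of $g$ (Lemma~\ref{lem:gconv}) gives $g(\delta_1*\delta_2)/g(\delta_1)\geq 1 + \delta_2(1-2\delta_1)g'(\delta_1)/g(\delta_1)$; and convexity of $\beta_q(t)=h_b(q*t)-h_b(t)$ (Lemma~\ref{lem:fconv}) gives $\beta_q(D_1)-\beta_q(D_2)\geq \phi(q,D_2)(D_2-D_1)$, from which $D_2-D_1$ is isolated directly.

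Second, your parametrization $q=2D_2\tau/(1-2D_2)$ is not the one that yields the prefactor $(1+2D_2\tau)/(2D_2\tau)$: the paper sets $\tau=q/((1-2q)D_2)$, i.e., $q=D_2\tau/(1+2D_2\tau)$, for which $1/(2q)=(1+2D_2\tau)/(2D_2\tau)$. Your choice does not even keep $q<1/2$ for all $\tau>0$.

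Third, your account of where $\Phi(\delta_1)/\Phi(D_2)$ and $(1+\tau)$ come from is off. The ratio $\Phi(\delta_1)/\Phi(D_2)$ is not a second-order Taylor artifact; it falls out of the algebraic identity $\kappa(t)/g(t)=\log\bigl(\frac{1-t}{t}\bigr)\,\Phi(t)$ (see~\eqref{eq:Phi1}), applied once at $t=\delta_1$ and once at $t=D_2$. The factor $(1+\tau)$ equals $1/\nu(q,D_2)$ and enters via the lower bound $\phi(q,D_2)\geq q\,\kappa(D_2)\,\nu(q,D_2)$ of Lemma~\ref{lem:fconv}; it has nothing to do with a secant on $[D_1,q*D_1]$.
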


We will also need the following lemma, proved in Appendix~\ref{sec:pfLemmaf}.

\begin{lemma}
	Let $\Phi(t)$ and $D(\rho,\delta)$ be as defined in~\eqref{eq:PhiDef} and~\eqref{eq:DdeltaDef}, respectively. 
	For every $\rho>1$ and $0<\delta<1/2$ for which $D(\rho,\delta)>0$, it holds that
	\begin{align*}
	f(\rho,\delta)\triangleq \frac{1}{\rho}\frac{\Phi(\delta)}{\Phi(D(\rho,\delta))}<1. %\label{eq:fdef}
	\end{align*}
	\label{lem:fdec}
\end{lemma}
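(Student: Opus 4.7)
The plan is to reformulate the inequality $f(\rho,\delta)<1$ as the strict decrease of a single-variable auxiliary function on $(0,1/2)$, which can then be attacked by a hyperbolic-trigonometric change of variable. Setting $R(t)\triangleq \log 2-h_b(t)$, the defining equation $h_b(D(\rho,\delta))=\log 2-\rho R(\delta)$ reads $R(D)=\rho R(\delta)$, so $\rho=R(D)/R(\delta)$ where $D\triangleq D(\rho,\delta)$. Substituting this into the definition of $f$,
\[
f(\rho,\delta)\;=\;\frac{1}{\rho}\frac{\Phi(\delta)}{\Phi(D)}\;=\;\frac{R(\delta)\,\Phi(\delta)}{R(D)\,\Phi(D)}\;=\;\frac{\psi(\delta)}{\psi(D)},\qquad \psi(t)\triangleq R(t)\,\Phi(t).
\]
Because $R$ is strictly decreasing on $(0,1/2]$, the hypotheses $\rho>1$ and $D>0$ force $0<D<\delta<1/2$. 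It therefore suffices to prove that $\psi$ is strictly decreasing on $(0,1/2)$: then $\psi(D)>\psi(\delta)$ and hence $f=\psi(\delta)/\psi(D)<1$.

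To prove this monotonicity I would pass to the variable $v\triangleq\tfrac12\log\frac{1-t}{t}\in(0,\infty)$, under which $t$ is strictly decreasing. A short calculation with hyperbolic identities gives the compact forms $R(v)=v\tanh v-\log\cosh v$ and $\Phi(v)=\frac{1}{v\tanh v}+\frac{\cosh^2 v}{v^2}$, from which
\[
\psi(v)\;=\;1+\frac{\sinh(2v)}{2v}-\Phi(v)\,\log\cosh v,
\]
and the goal becomes to show $\psi(v)$ is strictly increasing on $(0,\infty)$. The Taylor expansion about $v=0$ yields $\psi(v)=1+\tfrac{v^2}{6}+\tfrac{2v^4}{45}+O(v^6)$, which handles a neighborhood of $t=1/2$ and provides evidence that the full claim will hold. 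For the remainder of the range, I would differentiate the closed form above, multiply through by the common denominator $v^3\sinh^2(v)\cosh^2(v)$ to clear fractions, and verify positivity of the resulting combination of $\sinh(kv),\cosh(kv)$ by term-by-term Taylor-coefficient comparison (the relevant hyperbolic series have explicit, manifestly nonnegative coefficients, so such inequalities are amenable to direct check).

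The main obstacle is precisely this last step. The subtlety is that $\Phi(t)$ is itself \emph{not} monotone on $(0,1/2)$: it is $U$-shaped with an interior minimum near $t\approx 0.05$, so one cannot read off the monotonicity of $\psi=R\Phi$ from that of its factors. In the regime below the minimum of $\Phi$, both factors decrease and $\psi'(t)<0$ is immediate; but in the complementary regime one must quantify that the decrease of $R$ outpaces the increase of $\Phi$. The hyperbolic substitution is what keeps this cancellation algebraically tractable, reducing the problem to comparing coefficients in two explicit even power series rather than balancing logarithmic and polynomial terms in the original $t$-variable.
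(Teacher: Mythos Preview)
Your reduction is correct and coincides with the paper's: both observe that $\rho=R(D)/R(\delta)$ with $R(t)=\log 2-h_b(t)$, rewrite $f(\rho,\delta)=\psi(\delta)/\psi(D)$ for $\psi(t)\triangleq R(t)\Phi(t)$ (the paper calls this $\vartheta$), note that $\rho>1$ forces $0<D<\delta$, and reduce the claim to strict monotonicity of $\psi$ on $(0,1/2)$. Your observation that $\Phi$ itself is $U$-shaped, so the monotonicity of $\psi$ is not a triviality, is also accurate.

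Where you diverge from the paper is in the calculus step. The paper stays in the $t$-variable, writes $\vartheta=\gamma/g^2$ with $\gamma(t)=(1-2t)R(t)\kappa(t)$, differentiates, and after algebra reduces $\vartheta'(t)<0$ to positivity of an explicit expression $A(t)=A_1(t)+A_2(t)A_3(t)$; it then checks the boundary values $A(0)=\infty$, $A(1/2)=0$ and verifies that each of $A_1,A_2,A_3$ is individually decreasing. This is elementary, if tedious, and fully executed.

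Your hyperbolic substitution is elegant and the identity $\psi(v)=1+\tfrac{\sinh 2v}{2v}-\Phi(v)\log\cosh v$ is correct, but there is a concrete gap in the final plan. Differentiating $\psi$ and multiplying by any rational function of $v,\sinh v,\cosh v$ does \emph{not} yield a pure combination of $\sinh(kv),\cosh(kv)$: the term $-\Phi'(v)\log\cosh v$ persists, and $\log\cosh v$ is transcendental over that ring. Worse, its Taylor series $\log\cosh v=\tfrac{v^2}{2}-\tfrac{v^4}{12}+\tfrac{v^6}{45}-\cdots$ has alternating signs, so the appeal to ``manifestly nonnegative coefficients'' is unfounded as stated. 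One could instead try to show directly that every even Taylor coefficient of $\psi(v)$ is nonnegative (your computed terms $1+\tfrac{v^2}{6}+\tfrac{2v^4}{45}+\cdots$ are consistent with this), but that is a different and substantially harder argument than the one you outlined, and it would have to be carried out in full. The paper's route avoids this issue by never pitting the transcendental factor $R(t)$ against a positive-coefficient expansion.
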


Finally, we will also need the next proposition, which is a simple variation of the source-channel separation theorem for spherical noise. The proof is given in Appendix~\ref{app:sphericalrdf}
\begin{proposition}
	For any encoder/decoder pair $\m{D}$, $\m{E}$ and any $k\in[-n\delta,n(1-\delta)]$ it holds that
	\begin{align}
	\Psi(k)\geq D\left(\rho,\delta+\frac{k}{n}\right)+\m{O}\left(\frac{\log n}{n}\right),
	\end{align}
	\label{prop:spheseparation} 
	where $\Psi(\cdot)$ was defined in \eqref{eq:PsiDef}.
\end{proposition}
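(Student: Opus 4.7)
The plan is to run the classical one-shot converse (data processing combined with the rate-distortion theorem) for the specific channel that results when the additive noise is forced to have Hamming weight exactly $n\delta'$, where $\delta'\triangleq\delta+k/n$. Concretely, let $U^n\sim\Unif(\m{S}_{n\delta',n})$, set $X^n=\m{E}(S^m)$, $Y^n=X^n+U^n$, and $\hat{S}^m=\m{D}(Y^n)$. By the definition of $\Psi$ in~\eqref{eq:PsiDef}, $\tfrac{1}{m}\mathbb{E}d_H(S^m,\hat{S}^m)=\Psi(k)$. Since $S^m-X^n-Y^n-\hat{S}^m$ forms a Markov chain, the data processing inequality gives $I(S^m;\hat{S}^m)\le I(X^n;Y^n)$.

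For the source side, I may assume without loss of generality that $\Psi(k)\le 1/2$, because otherwise $\Psi(k)>1/2\ge D(\rho,\delta')$ already makes the desired inequality trivial for large $n$. The rate-distortion theorem for the BSS under Hamming distortion (a special case of Lemma~\ref{lem:Rbar} with $U=S$) then gives $I(S^m;\hat{S}^m)\ge m(\log 2-h_b(\Psi(k)))$. For the channel side, since $U^n$ is independent of $X^n$,
\begin{align*}
I(X^n;Y^n)=H(Y^n)-H(U^n)\le n\log 2-\log\binom{n}{n\delta'}.
\end{align*}
The same Stirling-type bound from~\cite[Ch.~10, Lemma~7]{ms77} used in the proof of Lemma~\ref{lem:GsphericalBC} yields $\log\binom{n}{n\delta'}\ge nh_b(\delta')-\tfrac{1}{2}(\log n+1)$, so $I(X^n;Y^n)\le n(\log 2-h_b(\delta'))+\tfrac{1}{2}(\log n+1)$.

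Chaining these inequalities and dividing by $m=n/\rho$ gives $\log 2-h_b(\Psi(k))\le \rho(\log 2-h_b(\delta'))+\m{O}(\log n/n)$, hence $h_b(\Psi(k))\ge \log 2-\rho(\log 2-h_b(\delta'))-\m{O}(\log n/n)$. Applying $h_b^{-1}$ (in its extended form $h_b^{-1}(t)=0$ for $t\le 0$) and using its local Lipschitz continuity near $D(\rho,\delta')$ yields $\Psi(k)\ge D(\rho,\delta')-\m{O}(\log n/n)$. The main technical subtlety, and the only delicate point, is that the Lipschitz constant of $h_b^{-1}$ blows up as its argument approaches $\log 2$, i.e.\ as $D(\rho,\delta')\downarrow 0$ (or symmetrically as $\delta'\to 1/2$). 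This is however benign in our application: in the degenerate case $D(\rho,\delta')=0$ the claim reduces to the trivial $\Psi(k)\ge 0$; and for the regime of $k$ that is actually invoked elsewhere in the proof (where $|k|$ is at most polylogarithmically large in $\sqrt{n}$, per Section~\ref{subsec:pfoutline}), $\delta'$ stays inside a fixed compact neighborhood of $\delta$ on which the Lipschitz constant is bounded uniformly in~$n$, and the implicit constant in $\m{O}(\log n/n)$ is allowed to depend on $\rho$ and $\delta$.
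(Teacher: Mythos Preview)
Your proof is essentially identical to the paper's: both run the standard data-processing plus rate-distortion converse, upper-bound $I(X^n;Y^n)$ via $H(Y^n)\le n\log 2$ and the Stirling lower bound on $\log\binom{n}{n\delta'}$ from~\cite[Ch.~10, Lemma~7]{ms77}, and then chain to obtain $\log 2-h_b(\Psi(k))\le \rho(\log 2-h_b(\delta'))+\m{O}(\log n/n)$. Your version is in fact slightly more careful than the paper's, which stops at this last display and simply says ``which yields the desired result'' without addressing the $\Psi(k)>1/2$ case or the local Lipschitz behavior of $h_b^{-1}$ that you spell out.
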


\vspace{2mm}

Using Lemma~\ref{lem:D1D2ineq}, Lemma~\ref{lem:fdec}, and Proposition~\ref{prop:spheseparation}, we can now prove Theorem~\ref{thm:Tradeoff}, which is the main result of this subsection.

\begin{proof}[Proof of Theorem~\ref{thm:Tradeoff}]
First note that for $\epsilon>0$, we can approximate $D(\rho,\delta+\varepsilon)$ as
	\begin{align}
	D(\rho,\delta+\varepsilon)=D(\rho,\delta)+\epsilon D'(\rho,\delta)+\m{O}(\varepsilon^2),\label{eq:DrLimit}
	\end{align}
	where
	\begin{align}
	D'(\rho,\delta) \triangleq \frac{\partial}{\partial \delta}D(\rho,\delta)=\rho\frac{h_b'(\delta)}{h_b'(D(\rho,\delta))}=\rho\frac{\log\left(\frac{1-\delta}{\delta}\right)}{\log\left(\frac{1-D(\rho,\delta)}{D(\rho,\delta)}\right)}.\label{eq:DrDev}
	\end{align}
	Taking $\epsilon=\frac{a}{\sqrt{n}}$, this implies that
\begin{subequations}
\begin{align}
		D(\rho,\delta) - D(\rho,\delta_1) &= \frac{a}{\sqrt{n}} D'(\rho,\delta) + \m{O}\left(\frac{1}{n}\right) \label{eq:Taylor_1} \\
			D(\rho,\delta_1*\delta_2) - D(\rho,\delta) &= \frac{a}{\sqrt{n}} D'(\rho,\delta) + \m{O}\left(\frac{1}{n}\right). \label{eq:Taylor_2}
		\end{align}
		\end{subequations}
		and thus, subtracting~\eqref{eq:Taylor_2} from~\eqref{eq:Taylor_1}, we obtain
	%	\begin{subequations}
		\begin{align} % D(\rho,\delta_1*\delta_2) + D(\rho,\delta_1) &= 2D(\rho,\delta) + \m{O}\left(\frac{1}{n}\right). \label{eq:sum_1} \\
			D(\rho,\delta_1*\delta_2)+D(\rho,\delta_1)-2D(\rho,\delta)&=\m{O}\left(\frac{1}{n}\right), \label{eq:DrDiff}
		\end{align}
	%	\end{subequations}
Now, by Proposition~\ref{prop:spheseparation} we have:
\begin{subequations}
\begin{align}
	 D_1&\geq D(\rho,\delta_1)+\m{O}\left(\frac{\log n}{n}\right) \label{eq:B1}\\
	D_2& \geq D(\rho,\delta_1*\delta_2)+ \m{O}\left(\frac{\log n}{n}\right).  \label{eq:B2}
	\end{align}
	\end{subequations}
Using \eqref{eq:Taylor_2} and \eqref{eq:B2}, 
	we can assert:
	\begin{align} 
	D_2 - D(\rho,\delta) &= (D_2 - D(\rho,\delta_1*\delta_2)) + (D(\rho,\delta_1*\delta_2) - D(\rho,\delta)) \nonumber \\ &\geq \frac{a}{\sqrt{n}}D'(\rho,\delta) + \m{O}\left(\frac{\log n}{n}\right). \label{eq:assert_D2} \end{align} 
%	Combined with \eqref{eq:sum_1}, this implies that
%\begin{align} D_1+D_2 \geq 2D(\rho,\delta)  -  \m{O}\left(\frac{\log n}{n}\right).	\end{align}
Let $\eta=\eta(\rho,\delta)$. We now claim that we can assume without loss of generality:
	\begin{subequations}
	\begin{align}
	 D_1-D(\rho,\delta_1)&<\frac{a\eta}{\sqrt{n}}\label{eq:A1}\\
	D_2-D(\rho,\delta_1*\delta_2)&<\frac{a\eta}{\sqrt{n}}, \label{eq:A2}
	\end{align} \label{eq:A}
	\end{subequations}
To see why this is true, assume to the contrary that one of them, say the first, does not hold. Then, by \eqref{eq:B2} and \eqref{eq:DrDiff},
\begin{align*} D_1+D_2 & \geq D(\rho,\delta_1) + D_2 + \frac{a\eta}{\sqrt{n}} \\
&\geq D(\rho,\delta_1) + D(\rho,\delta_1*\delta_2) + \frac{a\eta}{\sqrt{n}} + \m{O}\left(\frac{\log n}{n}\right) \\
& = 2D(\rho,\delta) + \frac{a\eta}{\sqrt{n}} + \m{O}\left(\frac{\log n}{n}\right) , \end{align*}
which is stronger than the desired bound. 
	
	We now proceed to bound the difference $D_2-D_1$ invoking Lemma~\ref{lem:D1D2ineq} and using \eqref{eq:A}. By the concavity of $t\mapsto h_b(t)$ and \eqref{eq:A2} we have that
	\begin{align}
	h_b(D_2)&=h_b\left(D(\rho,\delta_1*\delta_2)+(D_2-D(\rho,\delta_1*\delta_2))\right)\nonumber\\
	&\leq h_b(D(\rho,\delta_1*\delta_2))+\frac{a\eta}{\sqrt{n}} h_b'(D(\rho,\delta_1*\delta_2)),
	\end{align}
	which implies that
	\begin{align}
	\rho C(\delta_1*\delta_2)-R(D_2)&=\rho C(\delta_1*\delta_2)-\log 2+h_b\left(D_2)\right)\nonumber\\
	&\leq \frac{a\eta}{\sqrt{n}} h_b'(D(\rho,\delta_1*\delta_2))\nonumber\\
	&=\frac{a\eta}{\sqrt{n}}\log\left(\frac{1-D(\rho,\delta_1*\delta_2)}{D(\rho,\delta_1*\delta_2)} \right).
	\end{align}
	Substituting in Lemma~\ref{lem:D1D2ineq}, we have for any $\tau>0$:
	\begin{align}
	D_2-D_1&\leq \frac{1+2D_2\tau}{2D_2\tau} \frac{\frac{a\eta}{\sqrt{n}}\log\left(\frac{1-D(\rho,\delta_1*\delta_2)}{D(\rho,\delta_1*\delta_2)} \right)+\rho\Gamma(n,\delta_2)}{\log\left(\frac{1-D_2}{D_2}\right)}+\frac{2a}{\sqrt{n}}\frac{\log\left(\frac{1-\delta_1}{\delta_1}\right)}{\log\left(\frac{1-D_2}{D_2}\right)}\frac{\Phi(\delta_1)}{\Phi(D_2)}(1+\tau)\frac{g(D_1)}{g(D_2)}\label{eq:D1D2exact}.
	\end{align}
	The functions $t\mapsto \log\left(\frac{1-t}{t}\right)$, $t\mapsto\Phi(t)$, and $t\mapsto g(t)$ are continuous at $0<t< 1/2$. Thus, recalling that $a<\log^2 {(n)}$, by the assumption that $0<D(\rho,\delta)<1/2$, \eqref{eq:A1} and \eqref{eq:A2}, we have that
	\begin{align}
	\frac{\log\left(\frac{1-D(\rho,\delta_1*\delta_2)}{D(\rho,\delta_1*\delta_2)} \right)}{\log\left(\frac{1-D_2}{D_2}\right)}&=1+\m{O}\left(\frac{\log^2{(n)}}{\sqrt{n}}\right)\nonumber\\
	\frac{\log\left(\frac{1-\delta_1}{\delta_1}\right)} {\log\left(\frac{1-D_2}{D_2}\right)}&=\frac{\log\left(\frac{1-\delta}{\delta}\right)}{\log\left(\frac{1-D(\rho,\delta)}{D(\rho,\delta)}\right)}\left(1+\m{O}\left(\frac{\log^2{(n)}}{\sqrt{n}}\right)\right)\nonumber\\
	&= \frac{D'(\rho,\delta)}{\rho} \left(1+\m{O}\left(\frac{\log^2{(n)}}{\sqrt{n}}\right)\right) \nonumber \\
	\frac{\Phi(\delta_1)}{\Phi(D_2)}&=\frac{\Phi(\delta)}{\Phi(D(\rho,\delta))}\left(1+\m{O}\left(\frac{\log^2{(n)}}{\sqrt{n}}\right)\right)\nonumber \\
	&= \rho f(\rho,\delta)  \left(1+\m{O}\left(\frac{\log^2{(n)}}{\sqrt{n}}\right)\right)\nonumber,
	\end{align}
	where $f(\cdot,\cdot)$ is as defined in \eqref{eq:fdef}. In addition, under our assumptions on $\delta_1$ and $\delta_2$, we have that $\Gamma(n,\delta_2)=\m{O}(n^{-3/4}\log{n})$. Thus,~\eqref{eq:D1D2exact} amounts to the following upper bound on the difference $D_2-D_1$:
	\begin{align}
	D_2-D_1\leq &\frac{1}{\sqrt{n}}\left(a\eta\frac{1+2D(\rho,\delta)\tau}{2D(\rho,\delta)\tau}+2a D'(\rho,\delta) f(\rho,\delta)(1+\tau)\right)+\m{O}(n^{-3/4}\log{n}).\label{eq:SlackUB}
	\end{align}

	Combining with~\eqref{eq:assert_D2} now yields
	\begin{align}
	D_1+D_2-2D(\rho,\delta)  
	&= 2(D_2-D(\rho,\delta)) - (D_2-D_1) \nonumber \\ 
	&\geq\frac{2a}{\sqrt{n}} D'(\rho,\delta)  -\frac{1}{\sqrt{n}}\left(a\eta\frac{1+2D(\rho,\delta)\tau}{2D(\rho,\delta)\tau}+2a D'(\rho,\delta)f(\rho,\delta)(1+\tau)\right)
	+\m{O}(n^{-3/4}\log{n})\nonumber\\
	&=\frac{1}{\sqrt{n}}\left(2a D'(\rho,\delta) \left(1-f(\rho,\delta)(1+\tau)\right)-a\eta\frac{1+2D(\rho,\delta)\tau}{2D(\rho,\delta)\tau}\right)+\m{O}(n^{-3/4}\log{n}).\label{eq:D1gap}
	\end{align}
	Now, since $0<f(\rho,\delta)<1$ by Lemma~\ref{lem:fdec}, we can take
	\begin{align}
	\tau=\frac{1-f(\rho,\delta)}{2f(\rho,\delta)}>0,\label{eq:rhoconst}
	\end{align}
	for which~\eqref{eq:D1gap} becomes
	\begin{align}
	D_1+D_2-2D(\rho,\delta)\geq \frac{a\eta(\rho,\delta)}{\sqrt{n}}+\m{O}(n^{-3/4}\log{n}),
	\end{align}
	as desired.
\end{proof}

\section{Proof of Theorem~\ref{thm:main}}
\label{sec:proofmain}

We now turn back to the original point-to-point finite-blocklength JSCC problem. The outline of the proof was already given in Section~\ref{subsec:pfoutline}. Here we give the complete proof. First, without loss of generality we may restrict attention to $\delta\in\left\{\frac{0}{n},\frac{1}{n},\ldots,1\right\}$ as 
\begin{align}
D^*(n,\rho,\delta)\geq D^*\left(n,\rho,\frac{\lfloor n\delta\rfloor}{n}\right),
\end{align}
and we may therefore write
\begin{align}
\Delta^*(n,\rho,\delta)&=D^*(n,\rho,\delta)-D(\rho,\delta)\nonumber\\
&\geq D^*\left(n,\rho,\frac{\lfloor n\delta\rfloor}{n}\right)-D\left(\rho,\frac{\lfloor n\delta\rfloor}{n}\right)+\left[D\left(\rho,\frac{\lfloor n\delta\rfloor}{n}\right)-D(\rho,\delta)\right]\nonumber\\
&=\Delta^*\left(n,\rho,\frac{\lfloor n\delta\rfloor}{n}\right)+\m{O}\left(\frac{1}{n}\right),
\end{align}
where the last equality follows since $D\left(\rho,\frac{\lfloor n\delta\rfloor}{n}\right)-D(\rho,\delta)=\m{O}\left(\frac{1}{n}\right)$. 

As in Section~\ref{subsec:pfoutline}, we define the integer-valued random variable $K = w_H(Z^n) - \delta n$, and for a given encoder/decoder pair $(\m{E},\m{D})$ we 
have that
\begin{align}
\mathbb{E}\left[\frac{1}{m}d_H\left(S^m,\m{D}\left(\m{E}(S^m)+Z^n\right)\right)\right] = \mathbb{E}\left[\Psi(K) \ \right],
\label{eq:iterexpt2}
\end{align}
where 
\begin{align}
\Psi(k) =\Psi_{\m{D},\m{E}}(k)\triangleq \mathbb{E}\left[\frac{1}{m}d_H\left(S^m,\m{D}\left(\m{E}(S^m)+U_k^n\right)\right)\right] ,
\end{align}
and $U_k^n$ is uniform over $\m{S}_{\delta n+k,n}$. 
In terms of this function, we have
\[ \Delta^*(n,\rho,\delta) =\mathbb{E}\left[\Psi(K)-D(\rho,\delta)\right]. \]
We proceed by partitioning $K$ into two regimes:
\begin{align*}
\m{K}_1 &= \{K: |K| \leq \sqrt{n}\log^2(n) \} \\
%\m{K}_2 &= \{K: c \sqrt{\delta(1-\delta)n}  < |K| \leq \sqrt{n}\log^2(n) \} \\
\m{K}_2 &= \{K: \sqrt{n}\log^2(n) < |K| \} ,
\end{align*}
%where $c>0$ is some constant, 
and asserting:
\begin{align}
\Delta^*(n,\rho,\delta) = \sum_{i=1}^2 \Pr\left( K \in \m{K}_i \right) \mathbb{E}\left[\Psi(K)-D(\rho,\delta) | K \in \m{K}_i \right] . \label{eq:ABC} \end{align} 
and lower bound each of the two terms. 

To lower bound the second term it suffices to note that $\Pr\left( K \in \m{K}_2 \right)=\m{O}(\frac{1}{n})$ and that $\Phi(K)-D(\rho,\delta)$ is bounded, such that its the total contribution is at most $\m{O}(\frac{1}{n})$. 
For the first term, define for all natural $k$
\begin{align} \gamma(k) = \frac{\Pr(K=k) - \Pr(K=-k) }{\Pr(K=k) + \Pr(K=-k) }, \end{align} and write:
\begin{align}
\mathbb{E}\left[\Psi(K)-D(\rho,\delta) \  \bigg| \ |K|=k_0\right]&=\Pr(K=k_0 \ | \ |K|=k_0)\left[\Psi(k_0)-D(\rho,\delta)\right]\nonumber\\
&+\Pr(K=-k_0 \ | \ |K|=k_0)\left[\Psi(-k_0)-D(\rho,\delta)\right]\nonumber\\
&=\frac{1+\gamma(k_0)}{2}\left[\Psi(k_0)-D(\rho,\delta)\right]+\frac{1-\gamma(k_0)}{2}\left[\Psi(-k_0)-D(\rho,\delta)\right],\nonumber\\
&=\frac{1+\gamma(k_0)}{2}\left[\left(\Psi(k_0)-D\left(\rho,\delta+\frac{k_0}{n}\right)\right)+\left(D\left(\rho,\delta+\frac{k_0}{n}\right)-D(\rho,\delta)\right)\right]\nonumber\\
&+\frac{1-\gamma(k_0)}{2}\left[\left(\Psi(-k_0)-D\left(\rho,\delta-\frac{k_0}{n}\right)\right)+\left(D\left(\rho,\delta-\frac{k_0}{n}\right)-D(\rho,\delta)\right)\right]\nonumber\\
&=\frac{1+\gamma(k_0)}{2}\left[\left(\Psi(k_0)-D\left(\rho,\delta+\frac{k_0}{n}\right)\right)\right]+\frac{1-\gamma(k_0)}{2}\left[\left(\Psi(-k_0)-D\left(\rho,\delta-\frac{k_0}{n}\right)\right)\right]\nonumber\\
&+\frac{1}{2}\left[\left(D\left(\rho,\delta+\frac{k_0}{n}\right)-D(\rho,\delta)\right)+\left(D\left(\rho,\delta-\frac{k_0}{n}\right)-D(\rho,\delta)\right)\right]\nonumber\\
&+\frac{\gamma(k_0)}{2}\left[\left(D\left(\rho,\delta+\frac{k_0}{n}\right)-D(\rho,\delta)\right)-\left(D\left(\rho,\delta-\frac{k_0}{n}\right)-D(\rho,\delta)\right)\right]. \label{eq:mean_terms}
\end{align}
In order to bound these quantities, we approximate $D(\rho,\delta+t)=D(\rho,\delta)+c_1 t+c_2 t^2+\m{O}(t^3)$, for some $c_1,c_2\in\RR$, such that
\begin{align}
\left(D\left(\rho,\delta+\frac{k_0}{n}\right)-D(\rho,\delta)\right)+\left(D\left(\rho,\delta-\frac{k_0}{n}\right)-D(\rho,\delta)\right) =\m{O}\left(\frac{k_0^2}{n^2}\right),\label{eq:RDFsymdiff}
\end{align}
and that
\begin{align}
\left(D\left(\rho,\delta+\frac{k_0}{n}\right)-D(\rho,\delta)\right)-\left(D\left(\rho,\delta-\frac{k_0}{n}\right)-D(\rho,\delta)\right)=\m{O}\left(\frac{k_0}{n}\right). 
\end{align}
In addition, in Appendix~\ref{app:pfBinDev} we give Lemma~\ref{lem:bindev}, showing that\footnote{A similar result can be shown for any distribution with a finite third moment using bounds on the (unsigned) Gaussian approximation error such as a theorem by Essen which appears in \cite[Theorem 5.22]{Petrov}. However, we prefer to present the explicit calculation for the binomial distribution.}  for $k_0 < n^{2/3}$,
\begin{align} \gamma(k_0) = \m{O} \left( \max \left( \frac{k_0}{n},\frac{k_0^3}{n^2}  \right)\right) , \label{eq:bindev_order} \end{align}
which amounts to 
\begin{align}
\gamma(k_0)=\m{O}\left(\frac{\log^6{(n)}}{\sqrt{n}}\right), \ 
\forall k_0 \leq \sqrt{n}\log^2(n).\label{eq:gammaapprox}
\end{align}
Applying \eqref{eq:RDFsymdiff}-\eqref{eq:gammaapprox} under the condition $k_0 \leq \sqrt{n}\log^2(n)$, and noting that by Proposition~\ref{prop:spheseparation}  \[ \Psi(k_0)\geq D\left(\rho,\delta+\frac{k_0}{n}\right)+\m{O}\left(\frac{\log n}{n}\right), \] we see that \eqref{eq:mean_terms} amounts to: 
\begin{align}
\mathbb{E}\left[\Psi(K)-D(\rho,\delta) \  \bigg| \ |K|=k_0\right]&\geq
\frac{1-|\gamma(k_0)|}{2}\left[\left(\Psi(k_0)-D\left(\rho,\delta+\frac{k_0}{n}\right)\right)+\left(\Psi(-k_0)-D\left(\rho,\delta-\frac{k_0}{n}\right)\right)\right]\nonumber\\
&+\m{O}\left(\frac{\log^8 n}{n}\right).\label{eq:termB}
\end{align} %Applying Proposition~\ref{prop:spheseparation} again, for any $K \in \m{K}_2$, we lower bound the contribution of the term $\m{K}_2$ by $\m{O}\left(\frac{\log^8 n}{n}\right)$, thus we are left with only the contribution of $\m{K}_1$.  
Now, applying~\eqref{eq:RDFsymdiff} and~\eqref{eq:gammaapprox} again,  we see that~\eqref{eq:termB} can be further bounded as 
\begin{align}
\mathbb{E}\left[\Psi(K)-D(\rho,\delta) \  \bigg| \ |K|=k_0\right]
&\geq\frac{1}{2}\left[\left(\Psi(k_0)+\Psi(-k_0)-2D\left(\rho,\delta\right)\right)\right]\left( 1+ \m{O}\left( \frac{\log^6{(n)}}{\sqrt{n}} \right)\right)+\m{O}\left(\frac{\log^8 n}{n}\right).
\end{align}
Letting $a=k_0/\sqrt{n}$ we have that by definition of $\underline{D} (\cdot)$ \eqref{eq:DaDef}, $\Psi(k_0)+\Psi(-k_0) \geq \underline{D}(a)$. 
Thus, the contribution of $\m{K}_1$ in \eqref{eq:ABC} is at most:
\begin{align} 
\Pr \left( K \in \m{K}_1  \right)  \mathbb{E} \left[ \underline D\left( \frac{K}{\sqrt{n}} \right) - 2D\left(\rho,\delta\right) \ \bigg|  \ K \in \m{K}_1 \right] \left( 1+ \m{O}\left( \frac{\log^6{(n)}}{\sqrt{n}} \right)\right) + \m{O}\left(\frac{\log^8 n}{n}\right).
\label{eq:DeltaBoundInter}
\end{align}
Now, using Theorem~~\ref{thm:Tradeoff}, we have that for any fixed $0<a<\log^2{(n)}$,
\begin{align}
 \underline{D}(a)\geq 2D(\rho,\delta)+\frac{a}{\sqrt{n}}\eta(\rho,\delta)+\m{O}(n^{-3/4}\log{n}) 
\end{align}
where $\eta(\rho,\delta)$ is as defined in~\eqref{eq:etaDef}. Substituting in~\eqref{eq:DeltaBoundInter}, we are left with:
\begin{align}
&  \frac{\eta(\rho,\delta)}{\sqrt{n}} \Pr \left( K \in \m{K}_1  \right)  \mathbb{E} \left[ \frac{|K|}{\sqrt{n}}  \bigg|  \ K \in \m{K}_1 \right] \left( 1+ \m{O}\left( \frac{\log^6{(n)}}{\sqrt{n}} \right)\right) + \m{O}\left(\frac{\log^8 n}{n}\right)+\m{O}(n^{-3/4}\log{n}) \nonumber \\ & = \frac{\sqrt{\delta(1-\delta)}\eta(\rho,\delta)}{\sqrt{n}} \Pr \left(|W| < \frac{\log^2{(n)}}{\sqrt{\delta(1-\delta)}}\right)  \mathbb{E} \left[ |W| \bigg|  |W| < \frac{\log^2{(n)}}{\sqrt{\delta(1-\delta)}} \right]  \left( 1+ \m{O}\left( \frac{\log^6{(n)}}{\sqrt{n}} \right)\right) + \m{O}(n^{-3/4}\log{n}) , \end{align}
where the random variable $W=\frac{1}{\sqrt{\delta(1-\delta)}}\frac{K}{\sqrt{n}}$ has zero mean and unit variance. Now let $W_G$ be a standard Gaussian. By using non-uniform bounds on the rate of convergence in the central limit theorem such as the the theorem of Bikelis (see \cite[5.10.4]{Petrov}) , we have that
\begin{align}
\Pr \left(|W| < \frac{\log^2{(n)}}{\sqrt{\delta(1-\delta)}}\right)  \mathbb{E} \left[ |W| \bigg|  |W| < \frac{\log^2{(n)}}{\sqrt{\delta(1-\delta)}} \right] &= \Pr \left(|W_G| < \frac{\log^2{(n)}}{\sqrt{\delta(1-\delta)}}\right)  \mathbb{E} \left[ |W_G| \bigg|  |W_G| < \frac{\log^2{(n)}}{\sqrt{\delta(1-\delta)}} \right]\nonumber\\
& + \m{O} \left( \frac{1}{\sqrt{n}} \right)\nonumber\\
&=\sqrt{\frac{2}{\pi}}+\m{O} \left( \frac{1}{\sqrt{n}} \right) . 
\end{align}
%This  Gaussian quantity tends to $\sqrt\frac{2}{\pi}$ as $c$ grows. By continuity arguments we can get for finite $c$ any value smaller that that, e.g. $1/2$. 
Thus we obtained
\begin{align}
\Delta^*(n,\rho,\delta)\geq \sqrt{\frac{\delta(1-\delta)}{2\pi n}}\eta(\rho,\delta)+\m{O}(n^{-3/4}\log{n}),
\end{align}
as desired.

%Finally, since $g(D_1)$ is decreasing for $0<D_1<1/2$, and since $D_1>D_1^*=h_b^{-1}\left(\log{2}-\rho C_1 \right)$, where $C_1=\log{2}-h_b(\delta_1)$, we obtain
%\begin{align}
%D_1&\leq D_2- \frac{\rho C_2-R(D_2)+\Gamma(n,\delta_2)}{2q\log\left(\frac{1-D_2}{D_2}\right)}-\frac{c(\delta_1)\delta_2 g(h_b^{-1}\left(\log{2}-\rho C_1 \right))}{\phi(q,D_2)/q}\\
%&=D_2- \frac{\rho C_2-R(D_2)+\Gamma(n,\delta_2)}{2q\log\left(\frac{1-D_2}{D_2}\right)}-(\delta_1*\delta_2-\delta_1)\frac{c(\delta_1)}{1-2\delta_1} \frac{g(h_b^{-1}\left(\log{2}-\rho C_1 \right))}{\phi(q,D_2)/q}
%\end{align}
%

\section{Auxiliary Results: Degraded Memoryless Channels}
\label{sec:aux}

We now consider the special case where the channel $Q^n$ is degraded and memoryless. i.e., $Q^n_{Y_1^n,Y_2^n|X^n}(y_1^n,y_2^n|x^n)=\prod_{i=1}^{n}Q_{Y_1|X}(y_{1i}|x_i)Q_{Y_2|Y_1}(y_{2i}|y_{1i})$. Although this case is not required for our main result, we bring it as a demonstration of the power of Theorem~\ref{thm:JSCCgen}.

We start with the the following lemma, which can be viewed as a restatement of the degraded broadcast channel converse theorem.
\begin{lemma} \label{lem:tensor}
If $Q^n$ is a degraded memoryless broadcast channel, the function $G_{Q^{\otimes n}}$ tensorizes, i.e.,
	\begin{align*}
	G_{Q^{\otimes n}}(nt)  =nG_Q (t).
	\end{align*}
\end{lemma}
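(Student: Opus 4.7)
My plan is to establish the equality by proving both inequalities, with the upper bound amounting to a repackaging of the standard converse for the degraded broadcast channel, repurposed through the properties of $G_Q$ already established in Lemma~\ref{lem:no_tensor}.

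The easy direction ``$\geq$'' is handled by lifting a single-letter (near-)optimizer $(W^\star, X^\star)$ of $G_Q(t)$ to its $n$-letter i.i.d.\ product. By the memorylessness of $Q$ and the independence of the copies, $I(X^n; Y_1^n | W^n) = n\, I(X^\star; Y_1 | W^\star) \geq nt$ and $I(W^n; Y_2^n) = n\, I(W^\star; Y_2) = n G_Q(t)$, so $(W^n, X^n)$ is feasible in the $n$-letter problem and witnesses $G_{Q^{\otimes n}}(nt) \geq n G_Q(t)$.

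For the harder direction ``$\leq$'', I will fix any feasible $(W, X^n)$ in the $n$-letter problem (so $W - X^n - (Y_1^n, Y_2^n)$ and $I(X^n;Y_1^n|W) \geq nt$) and introduce auxiliary variables $U_i \triangleq (W, Y_2^{i-1})$. Memorylessness of $Q$ yields the single-letter Markov chain $U_i - X_i - (Y_{1,i}, Y_{2,i})$, so each $(U_i, X_i)$ is admissible in the single-letter definition of $G_Q$. Standard chain-rule manipulations then give the two sum bounds
\begin{align}
I(W; Y_2^n) \leq \sum_{i=1}^n I(U_i; Y_{2,i}), \qquad I(X^n; Y_1^n|W) \leq \sum_{i=1}^n I(X_i; Y_{1,i}|U_i). \label{eq:plan_sums}
\end{align}
Setting $t_i \triangleq I(X_i; Y_{1,i}|U_i)$ and $\bar t \triangleq \frac{1}{n}\sum_i t_i$, the definition of $G_Q$ gives $I(U_i; Y_{2,i}) \leq G_Q(t_i)$; the concavity of $G_Q$ from Lemma~\ref{lem:no_tensor} and Jensen's inequality then yield $\sum_i G_Q(t_i) \leq n G_Q(\bar t)$. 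The second inequality in \eqref{eq:plan_sums} combined with the feasibility hypothesis forces $\bar t \geq t$, and since $G_Q$ is non-increasing (also Lemma~\ref{lem:no_tensor}), $G_Q(\bar t) \leq G_Q(t)$. Chaining these bounds with the first inequality of \eqref{eq:plan_sums} and taking a supremum over $(W, X^n)$ yields $G_{Q^{\otimes n}}(nt) \leq n G_Q(t)$.

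The main obstacle I anticipate is the second inequality in \eqref{eq:plan_sums}, which is precisely where the memoryless degraded hypothesis must enter. After expanding $I(X^n; Y_1^n|W) = \sum_i[H(Y_{1,i}|W,Y_1^{i-1}) - H(Y_{1,i}|X_i)]$, I will need to replace the ``forward'' conditioning on $Y_1^{i-1}$ by the ``backward'' conditioning on $Y_2^{i-1}$. The memoryless degraded structure makes $Y_1^{i-1}$ a sufficient statistic for $Y_2^{i-1}$ relative to $(W, Y_{1,i})$ (i.e., the Markov chain $Y_2^{i-1} - Y_1^{i-1} - (W, Y_{1,i})$ holds), which upgrades to $H(Y_{1,i}|W, Y_1^{i-1}) \leq H(Y_{1,i}|W, Y_2^{i-1})$ and completes the sum bound. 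Everything else is bookkeeping on top of the convexity/monotonicity properties of $G_Q$ already in hand.
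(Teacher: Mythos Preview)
Your proposal is correct. Both directions are sound: the $\geq$ direction via i.i.d.\ lifting matches the paper, and for the $\leq$ direction your chain of inequalities goes through once the Markov chain $Y_2^{i-1}-(W,Y_1^{i-1})-Y_{1,i}$ (which follows from the memoryless degraded structure exactly as you anticipate) is used to pass from conditioning on $Y_1^{i-1}$ to conditioning on $Y_2^{i-1}$.

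The route, however, differs from the paper's. The paper argues by induction on $n$: it peels off the last coordinate via $I(X^n;Y_1^n|W)=I(X^{n-1};Y_1^{n-1}|W)+I(X_n;Y_{1,n}|W,Y_1^{n-1})$, applies concavity of $G_Q$ and the induction hypothesis to the first term, and uses the single-letter auxiliary $(W,Y_1^{n-1})$ for the last coordinate; the degraded structure enters through the Markov chain $Y_{2,n}-(W,Y_1^{n-1})-Y_2^{n-1}$, which yields $I(Y_{2,n};W,Y_1^{n-1})\geq I(Y_{2,n};W|Y_2^{n-1})$ and lets the sum telescope to $I(Y_2^n;W)$. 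By contrast, you identify all $n$ single-letter auxiliaries at once as $U_i=(W,Y_2^{i-1})$, obtain the two sum bounds directly, and invoke concavity and monotonicity of $G_Q$ in one shot. Your approach is essentially Gallager's classical converse for the degraded broadcast channel recast in the $G_Q$ language; it is somewhat more direct and makes the role of the auxiliary more explicit. The paper's inductive proof, on the other hand, parallels the tensorization proof of $F_P$ in Lemma~\ref{lem:F_P} and needs only the $n=1$ definition of $G_Q$ at each step. Both are standard and equally valid.
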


\begin{proof}
	We use induction. For any $(W,X^n)$ satisfying the Markov chain $W-X^n-Y_1^n-Y_2^n$ we have
	\begin{align}
	&I(X^n;Y_1^n|W)%I(Y_1^{n-1},Y_{1,n};X^{n-1},X_n|W)\nonumber\\
	%&=I(Y_1^{n-1};X^{n-1},X_n|W)+I(Y_{1,n};X_n,X^{n-1}|W,Y_1^{n-1})\nonumber\\
	%&=I(Y_1^{n-1};X^{n-1}|W)+I(Y_1^{n-1};X_n|X^{n-1},W)\nonumber\\
	%&+I(Y_{1,n};X_n|W,Y_1^{n-1})+I(Y_{1,n};X^{n-1}|X_n,W,Y_1^{n-1})\nonumber\\
	=I(Y_1^{n-1};X^{n-1}|W)+I(X_n;Y_{1,n}|W,Y_1^{n-1}).\nonumber
	\end{align}
	Consequently,
	\begin{align}
	&G_Q\left(\frac{I(X^n;Y_1^n|W)}{n}\right)\nonumber\\
	&=G_Q\left(\frac{I(Y_1^{n-1};X^{n-1}|W)+I(X_n;Y_{1,n}|W,Y_1^{n-1})}{n}\right)\nonumber\\
	&=G_Q\left(\frac{n-1}{n}\frac{I(Y_1^{n-1};X^{n-1}|W)}{n-1}+\frac{I(X_n;Y_{1,n}|W,Y_1^{n-1})}{n}\right)\nonumber\\
	&\geq \frac{n-1}{n}G_Q\left(\frac{I(Y_1^{n-1};X^{n-1}|W)}{n-1}\right)&\nonumber\\
	&+\frac{1}{n}G_Q\left(I(X_n;Y_{1,n}|W,Y_1^{n-1}) \right)\label{eq:Gconcave}
	\end{align}
	where we have used the concavity of $t\mapsto G_Q(t)$ in the last inequality. Invoking the induction hypothesis, we get
	\begin{align}
	&G_Q\left(\frac{I(X^n;Y_1^n|W)}{n}\right)\nonumber\\
	&=\frac{G_{Q^{\otimes(n-1)}}\left(I(Y_1^{n-1};X^{n-1}|W)\right)+G_Q\left(I(X_n;Y_{1,n}|W,Y_1^{n-1})\right)}{n}\nonumber\\%\label{eq:Ginduc}\\
	&\geq \frac{I(Y_2^{n-1},W)+I(Y_{2,n};W,Y_1^{n-1})}{n},\label{eq:Gmark}
	\end{align}
	where the last inequality follows from the definition of $G_{Q^{\otimes(n-1)}}(t)$ and $G_Q(t)$ and the fact that $W-X^n-Y_1^{n-1}-Y_2^{n-1}$ and $(W,Y_1^{n-1})-X_n-Y_{1,n}-Y_{2,n}$ are indeed Markov chains.
	Note that we have the Markov chain $Y_{2,n}-(W,Y_1^{n-1})-Y_2^{n-1}$, and therefore
	\begin{align}
	I(Y_{2,n};W,Y_1^{n-1})&\geq I(Y_{2,n};W,Y_2^{n-1})\nonumber\\
	&\geq I(Y_{2,n};W|Y_2^{n-1}).\label{eq:IY2ineq}
	\end{align}
	Substituting~\eqref{eq:IY2ineq} into~\eqref{eq:Gmark} gives
	\begin{align}
	nG_Q\left(\frac{I(X^n;Y_1^n|W)}{n}\right)&\geq I(Y_2^{n-1},W)+I(Y_{2,n};W|Y_2^{n-1})\nonumber\\
	&=I(Y_2^n;W),\nonumber
	\end{align}
	such that $G_{Q^{\otimes n}}(nt)\leq n G_{Q}(t)$. On the other hand,
	\begin{align} 
G_{Q^{\otimes n}}(nt)& \triangleq\max_{\substack{ {W,X^n \ : \ W-X^n-(Y^n_1,Y^n_2)} \\ { I(X^n;Y^n_1|W)\geq nt}      } }
I(Y^n_2;W) \nonumber\\
&\geq \max_{\substack{ {W^n,X^n \ : \ W^n-X^n-(Y^n_1,Y^n_2)} \\ { I(X^n;Y^n_1|W^n)\geq nt}      } }
I(Y^n_2;W^n) \label{eq:Gqmem}\\
&=n G_Q(t),
\end{align}
where the maximization in~\eqref{eq:Gqmem} is with respect all i.i.d. $(W^n,X^n)\sim P_{WX}^{\otimes n}$.
	%
	%We next show that $I(Z_n;W,Y^{n-1})\geq I(Z_n;W|Z^{n-1})$. To see this write
	%\begin{align}
	%I(Z_n;W,Y^{n-1})- I(Z_n;W|Z^{n-1})&=H(Z_n)-H(Z_n|Z^{n-1})+H(Z_n|W,Z^{n-1})-H(Z_n|W,Y^{n-1})\nonumber\\
	%&=I(Z_n;Z^{n-1})+H(Z_n|W,Z^{n-1})-H(Z_n|W,Y^{n-1},Z^{n-1})\nonumber\\
	%&=I(Z_n;Z^{n-1})+I(Z_n;Y^{n-1}|W,Z^{n-1})\nonumber\\
	%&\geq 0.
	%\end{align}
	%Now, using the monotonicity and concavity of $G_P(\cdot)$, we can further bound~\eqref{eq:G3bound} as
	%\begin{align}
	%I(X^n;Y^n|W)&\leq (n-1)G_P\left(\frac{I(Z^{n-1};W)}{n-1}\right)+G_P(I(Z_n;W|Z^{n-1}))\nonumber\\
	%&\leq n G_P\left(\frac{I(Z^{n-1};W)+I(Z_n;W|Z^{n-1})}{n} \right)\nonumber\\
	%&=n G_P\left(\frac{I(W;Z^n)}{n}\right).
	%\end{align}
\end{proof}

The following corollary is an immediate consequence of Theorem~\ref{thm:JSCCgen} and Lemma~\ref{lem:tensor}.

\begin{corollary} 
	Consider the degraded memoryless JSCC broadcast problem. If $(D_1,D_2)$ is achievable, then for any $P=P_S P_{U|S}$, defined by a
	choice of an auxiliary channel $P_{U|S}$,
	\begin{align}
	\bar R_P(D_2) \leq \rho \cdot G_Q\left(\frac{F_P\left(R(D_1) \right) }{\rho} \right).\label{eq:JSCCboundDegMem}
	\end{align}
	\label{thm:JSCCdegraded}
\end{corollary}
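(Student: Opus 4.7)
My plan is that the corollary follows by direct substitution from Theorem~\ref{thm:JSCCgen} combined with the tensorization property of Lemma~\ref{lem:tensor}. The essential observation is that in the degraded memoryless case, the $n$-letter channel function $G_{Q^{\otimes n}}$ collapses to $n$ copies of the single-letter $G_Q$, which is precisely the content of Lemma~\ref{lem:tensor}.

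Concretely, I would start by invoking Theorem~\ref{thm:JSCCgen} to obtain
\begin{align*}
\bar R_P(D_2) \leq \frac{1}{m} \cdot G_{Q^{\otimes n}}\left(m F_P\left(R(D_1)\right)\right),
\end{align*}
where I am already using that the channel here is $Q^{\otimes n}$ because it is memoryless. Next, I would rewrite the argument of $G_{Q^{\otimes n}}$ using the bandwidth expansion identity $n = \rho m$, so that $m F_P(R(D_1)) = n \cdot \tfrac{F_P(R(D_1))}{\rho}$. Lemma~\ref{lem:tensor} then gives
\begin{align*}
G_{Q^{\otimes n}}\!\left(n \cdot \tfrac{F_P(R(D_1))}{\rho}\right) = n \, G_Q\!\left(\tfrac{F_P(R(D_1))}{\rho}\right),
\end{align*}
and dividing by $m$ turns the prefactor $n/m$ into $\rho$, yielding exactly~\eqref{eq:JSCCboundDegMem}.

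Since the hard work — proving Theorem~\ref{thm:JSCCgen} via the RFZ-style mutual information chain and establishing the single-letterization in Lemma~\ref{lem:tensor} via the degraded broadcast converse — has already been done, there is no real obstacle here; the only thing to be careful about is the bookkeeping of the factor $\rho = n/m$ inside and outside of $G_Q$, making sure the argument is scaled by $1/\rho$ while the outer multiplicative factor is $\rho$. I would therefore keep the proof to two or three lines, presenting the chain of equalities/inequalities above without further commentary.
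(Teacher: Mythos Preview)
Your proposal is correct and matches the paper's approach exactly: the paper simply states that the corollary is an immediate consequence of Theorem~\ref{thm:JSCCgen} and Lemma~\ref{lem:tensor}, and your two-line derivation (substitute $m F_P(R(D_1)) = n\cdot F_P(R(D_1))/\rho$, apply tensorization, then use $n/m=\rho$) is precisely the intended argument.
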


This bound can be obtained as a special case of~\cite[Theorem 5]{kc15} (see also~\cite{kc16}), by observing that the boundary of the degraded memoryless broadcast channel $Q$ (without common message) is given by $(C_1,G_Q(C_1))$~\cite[Theorem 5.2]{ElGamalKim}.\footnote{In fact, the techniques developed in~\cite{kc15} should suffice to establish our Theorem~\ref{thm:JSCCgen}. We nevertheless found it more convenient to prove the theorem using properties of the general functions $F_P(t)$, $\bar{R}_P(D)$, and $G_{Q^n}(t)$, as those functions have a major role in other problems in network information theory, see~\cite{kop18} for more details.}

It is not difficult to see that the separation bounds for the extreme cases where only one distortion is of interest are obtained by setting $U=\emptyset$ or $U=S$ for $D_1$ and $D_2$, respectively. 
When $\rho=1$ and the ``not to code'' conditions \cite{grv03} hold, these choices give the tightest bound possible. Otherwise, other choices can give tighter bounds, as demonstrated in the examples below.

\subsection{Quadratic Gaussian Case}

Let $S\sim\m{N}(0,\sigma^2)$, and $d(S_j,\hat S_j) = (S_j - \hat S_j)^2$. We choose $U$ that is the output of an AWGN channel with input $S$ and noise that is Gaussian $(0,\delta^2)$. Using the EPI, one can verify that the corresponding source functions satisfy
\begin{align*}
F_P(t) &= t-\frac{1}{2}\log\left(\frac{\delta^2+\sigma^2}{\delta^2+\sigma^2 e^{-t}}\right) \\
\bar R_P(D) &= \frac{1}{2}\log\left(\frac{\delta^2+\sigma^2}{\delta^2+D}\right), \end{align*}
where $F_P$ is attained by taking $V$ that is the output of an AWGN with input $S$. Furthermore, let $Q^n$ be the (memoryless degarded) AWGN broadcast channel, $Y_1=X+Z_1$, $Y_2=Y_1+Z_2$, where $Z_1\sim\m{N}(0,N_1)$, $Z_2\sim\m{N}(0,N_2)$, $(X,Z_1,Z_2)$ mutually independnet, where the channel input is subject to a quadratic cost constraint $P$. Using the EPI again, one can verify that
\[ G_Q(t)=\frac{1}{2}\log\left(\frac{P+N_1+N_2}{N_1 e^{2t}+N_2} \right), \]
where the function is attained by $(W,X)$ that are jointly Gaussian. Combining with the source functions above and with the quadratic-Gaussian rate-distortion function, and applying Corollary~\ref{thm:JSCCdegraded}, we recover the Reznic, Feder, Zamir outer bound~\cite[Theorem 1]{rfz06}: For all $\delta$,
\[
\frac{\delta^2+\sigma^2}{\delta^2+D_2}\leq \left(1+\frac{P}{N_1+N_2}\right)^{\rho}\left[\frac{N_1+N_2}{N_1\left(\frac{\sigma^2}{D1}\cdot\frac{\delta^2+D_1}{\delta^2+\sigma^2}\right)^{\frac{1}{\rho}}+N_2} \right]^\rho. \]

\subsection{Binary-Hamming Case}
\label{subsec:binary}

We now address the case where $S$ is a Bernoulli($p$) source, and $d(S_j, \hat S_j)$ is the Hamming distortion measure.

ֿ\subsubsection{The Source Functions}

We define $P_{U|S}$ by taking $U=S+ N$, where $N\sim\Ber(q)$ is 
independent of $S$.

\begin{proposition}
	For $0\leq t\leq h_b(p)$
	\begin{align}
	F_P(t) & \geq t-h_b(q*p)+h_b\left(q*h^{-1}\left(h_b(p)-t\right) \right),
	\end{align}
	with equality for $p=1/2$.
	\label{prop:Fbinary}
\end{proposition}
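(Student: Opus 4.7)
The plan is to use the Markov chain $U-S-V$ to write $I(S;V|U) = I(S;V) - I(U;V)$ and then lower bound the right-hand side over all $V$ satisfying $I(S;V)\geq t$. Since $H(U) = h_b(q*p)$ is fixed, this amounts to lower bounding $H(U|V)$ subject to a constraint on $H(S|V) = h_b(p) - I(S;V)$.

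The key ingredient is a conditional form of Mrs.\ Gerber's Lemma. For each realization $V=v$, $S$ is binary with posterior $p_v = \Pr(S=1\mid V=v)$, and because $N\sim\Ber(q)$ is independent of $(S,V)$, we have $U\mid V=v \sim \Ber(q*p_v)$. Defining $\varphi_q(x) \triangleq h_b(q*h_b^{-1}(x))$ --- which is unambiguously defined on $[0,\log 2]$ since both $h_b(\cdot)$ and $q*(\cdot)$ are symmetric about $1/2$ --- gives $H(U\mid V=v) = \varphi_q(H(S\mid V=v))$. MGL states that $\varphi_q$ is convex, so Jensen's inequality yields $H(U\mid V) \geq \varphi_q(H(S\mid V))$, and substituting leads to
\[ I(S;V|U) \;\geq\; \psi(I(S;V)), \qquad \psi(t') \triangleq t' - h_b(q*p) + \varphi_q(h_b(p)-t'). \]

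Next, I would show that $\psi$ is non-decreasing on $[0,h_b(p)]$, so that the infimum over $I(S;V)\geq t$ is attained at $t'=t$. This reduces to $\psi'(t') = 1 - \varphi_q'(h_b(p)-t') \geq 0$, which follows from $\varphi_q'(x) \leq 1-2q \leq 1$ --- the same bound on the MGL derivative already invoked in the proof of Lemma~\ref{lem:GsphericalBC} (via \cite[Theorem~2.6]{ww75}). Evaluating $\psi(t)$ gives the claimed bound.

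For the equality claim at $p=1/2$, I would take $V = S\oplus W$ with $W\sim\Ber(\theta)$ independent of $(S,N)$ and $\theta = h_b^{-1}(\log 2 - t)$. The Bernoulli$(1/2)$ symmetry forces $P(S=1\mid V=v)\in\{\theta,1-\theta\}$, so both $H(S\mid V=v)$ and $H(U\mid V=v)$ are constant in $v$, making Jensen's inequality tight throughout; a direct computation then confirms $I(S;V)=t$ and $I(S;V\mid U)=\psi(t)$. The only conceptually nontrivial step above is recognizing that because $S$ is a scalar binary variable, the conditional form of MGL already absorbs all the work --- no $n$-letter extension of the kind used in Lemma~\ref{lem:GsphericalBC} is required, and the remaining derivation is bookkeeping.
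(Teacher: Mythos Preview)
Your proof is correct and follows essentially the same route as the paper: rewrite $I(S;V|U)=I(S;V)-H(U)+H(U|V)$ via the Markov chain, apply the conditional form of Mrs.\ Gerber's Lemma (Jensen over $\varphi_q$) to lower-bound $H(U|V)$, and exhibit $V=S\oplus W$ with $W\sim\Ber(\theta)$ for tightness at $p=1/2$. The one place you are actually more careful than the paper is in explicitly verifying that $\psi(t')$ is non-decreasing (via $\varphi_q'\leq 1$), which is needed to pass from the pointwise bound at $I(S;V)=t'$ to the minimum over $t'\geq t$; the paper leaves this step implicit.
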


\begin{proof}
	By the Markov structure, we have that $I(S;V)=I(U;V)+I(S;V|U)$. Thus,
	\begin{align*}
	I(S&;V|U) = I(S;V) - H(U) + H(U|V) \\
	&\geq I(S;V) - H(U) + h_b(q*h^{-1}(H(S|V))) \\
	&= I(S;V) - H(U) + h_b(q*h^{-1}(H(S)-I(S;V))) \\
	&= I(S;V) - h_b(q*p) + h_b(q*h^{-1}(h_b(p)-I(S;V))),
	\end{align*}
	where the inequality follows from Mrs. Gerber's Lemma~\cite
	{wz73}. Note that equality holds iff $H(S|V=v)=H(S|V)$ for all $v\in\m{V}$, which is the case for $p=1/2$ and $V=S+ A$, where $A\sim\Ber(h_b^{-1}(1-I(S;V)))$.  \end{proof}

%It is easy to verify that this bound is increasing and concave.

\begin{proposition}
	For $0\leq D\leq p$
	\begin{align}
	\bar{R}_P(D)=h_b(q*p)-h_b(q*D).
	\end{align}
	\label{prop:RbarBin}
\end{proposition}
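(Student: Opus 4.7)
The plan is to establish matching converse and achievability bounds on
\[
\bar{R}_P(D) = \min_{\hat{S}:\ U-S-\hat{S},\ \Pr(S\neq\hat{S})\leq D} I(U;\hat{S}),
\]
exploiting the additive structure $U = S \oplus N$ with $N \sim \Ber(q)$ independent of $S$. Without loss of generality I would assume $p \leq 1/2$ (by symmetry under flipping $S$).

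For the converse, I would fix any admissible $\hat{S}$ and set $V \triangleq S \oplus \hat{S}$, so $\Pr(V=1) = D' \leq D$, then use the identity $U \oplus \hat{S} = N \oplus V$. The key structural fact is that $N \indep (S,\hat{S})$: indeed, $N \indep S$ by the channel definition, and $N \indep \hat{S} \mid S$ by the Markov chain $U-S-\hat{S}$, and these together force joint independence. In particular $N \indep V$, so $N \oplus V \sim \Ber(q*D')$, and therefore
\[
H(U \mid \hat{S}) = H(U \oplus \hat{S} \mid \hat{S}) \leq H(U \oplus \hat{S}) = h_b(q*D') \leq h_b(q*D),
\]
where the last inequality uses the monotonicity of $t \mapsto h_b(q*t)$ on $[0,1/2]$. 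Combined with $H(U) = h_b(q*p)$, this yields $I(U;\hat{S}) \geq h_b(q*p) - h_b(q*D)$.

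For achievability I would construct the optimal test channel by reverse coupling. Take $\pi \triangleq (p-D)/(1-2D)$ (well-defined for $D < 1/2$; the boundary case is handled by continuity), draw $\hat{S} \sim \Ber(\pi)$ and $V \sim \Ber(D)$ independently, and set $S \triangleq \hat{S} \oplus V$. A direct computation verifies $S \sim \Ber(p)$ and $\Pr(S \neq \hat{S}) = D$, so the induced kernel $P_{\hat{S}|S}$ is admissible; combining with the given $P_{U|S}$ via $P_{S,U,\hat{S}} = P_S P_{U|S} P_{\hat{S}|S}$ ensures the Markov chain $U-S-\hat{S}$ by construction. Under this joint law $V \indep \hat{S}$ (by construction) and $N \indep (\hat{S},V)$ (by the converse argument), so $V \oplus N \sim \Ber(q*D)$ is independent of $\hat{S}$, and
\[
H(U \mid \hat{S}) = H(\hat{S} \oplus V \oplus N \mid \hat{S}) = H(V \oplus N) = h_b(q*D),
\]
achieving equality.

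The only subtle point is arranging the reverse-coupling construction to be simultaneously consistent with the prescribed marginal $P_S$ and the Markov structure; the correct choice of $\pi$ is forced by the marginal constraint, after which the Markov chain and achievability of equality are automatic. Unlike Proposition~\ref{prop:Fbinary}, which relies on Mrs.\ Gerber's Lemma and is tight only for $p=1/2$, here both bounds match for every $0 \leq D \leq p$, essentially because we need an \emph{upper} bound on $H(U \mid \hat{S})$, which the data-processing-type inequality $H(U\mid \hat{S}) \leq H(U \oplus \hat{S})$ provides cleanly and tightly.
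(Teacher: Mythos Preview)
Your proof is correct and follows essentially the same approach as the paper: both use the identity $U\oplus\hat{S}=N\oplus(S\oplus\hat{S})$ together with the independence $N\indep(S,\hat{S})$ for the converse, and the reverse test channel $S=\hat{S}\oplus V$ with $V\sim\Ber(D)$ independent of $(\hat{S},N)$ for achievability. You spell out a few steps (the explicit value of $\pi$, the justification of $N\indep(S,\hat{S})$) that the paper leaves implicit, but the argument is the same.
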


\begin{proof}
	For every $P_{\hat{S}|S}$ satisfying the constraint $\mathbb{E}(S+\hat{S})\leq D$, we must have that
	\begin{align}
	&I(U;\hat{S})=H(U)-H(U|\hat{S})\nonumber\\
	&=H(U)-H(U+\hat{S}|\hat{S})\nonumber\\
	&\geq H(U)-H(U+\hat{S})\nonumber\\
	&= H(U)-H(N+ S+ \hat{S})\nonumber\\
	&= h_b(q*p)-h_b(q*\mathbb{E}(S+\hat{S}))\nonumber\\
	&\geq h_b(q*p)-h_b(q*D).\nonumber
	\end{align}
	To see that this lower bound is tight, take the reverse test channel $S=\hat{S}+ V$ where $V\sim\Ber(D)$ is statistically independent of $(\hat{S},N)$.
\end{proof}

\subsubsection{Erasure Channel}

Consider first the case where $Q^n$ is a (memoryless degraded) erasure broadcast channel, i.e., $Y_i$ is $X$ w.p. $1-\epsilon_i$ and erased otherwise, for $i=1,2$, where $\epsilon_2\geq\epsilon_1$, and the source is i.i.d. Bernoulli ($p$), and the Hamming distortion measure is used. One can verify that:
\[ G_Q(t) = \frac{1-\epsilon_2}{1-\epsilon_1}(\log 2-\epsilon_1-t). \]
Combining with Propositions \ref{prop:Fbinary} and \ref{prop:RbarBin} and substituting in Corollary~\ref{thm:JSCCdegraded}, one obtains the bound (for $p=1/2$):
\[ \frac{\log2 -h_b(D_2*q)}{(1-\epsilon_2)\log2} + \frac{h_b(D_1*q)-h_b(D_1)}{(1-\epsilon_1)\log2} \leq \rho, \]
which recovers the bound of \cite{tks13} (which was also recovered in~\cite{kc16}).

\subsubsection{Binary Symmetric Channel}

Next, we consider the (memoryless degraded) binary symmetric channel, $Y_1=X+ Z_1$, and $Y_2=Y_1+Z_2$, where $Z_1\sim\Ber(\delta_1)$, $Z_2\sim\Ber(\delta_2)$, and $(X,Z_1,Z_2)$ are mutually independent.

\begin{proposition}
	For the binary symmetric degraded channel
	\begin{align}
	G_Q(t)=\log 2-h_b\left(\delta_2*h_b^{-1}\left(h_b(\delta_1)+t\right)\right),
	\end{align}
	for $0\leq t\leq \log2-h_b(\delta_1)$.
	\label{prop:Gbin}
\end{proposition}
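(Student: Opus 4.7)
The plan is to establish matching upper and lower bounds on $G_Q(t)$, both via Mrs.\ Gerber's Lemma (MGL) combined with the chain $W-X-Y_1-Y_2$ specific to this channel.

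For the \emph{upper bound}, I would fix any admissible pair $(W,X)$ satisfying $W-X-Y_1-Y_2$ with $I(X;Y_1|W)\geq t$, and expand $I(Y_2;W)=H(Y_2)-H(Y_2|W)\leq \log 2-H(Y_2|W)$. Since $Y_2=Y_1+Z_2$ with $Z_2\sim\Ber(\delta_2)$ independent of $(W,X,Z_1)$, the single-letter conditional MGL yields
\[
H(Y_2|W)=H(Y_1+Z_2|W)\geq \varphi_{\delta_2}\!\bigl(H(Y_1|W)\bigr),
\]
where $\varphi_{\delta_2}(x)=h_b(\delta_2*h_b^{-1}(x))$. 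The key relation is $H(Y_1|W)=I(X;Y_1|W)+H(Y_1|X,W)=I(X;Y_1|W)+h_b(\delta_1)$, using that $Y_1=X+Z_1$ with $Z_1\sim\Ber(\delta_1)$ independent of $(W,X)$. Combined with the monotonicity of $\varphi_{\delta_2}$ and the assumption $I(X;Y_1|W)\geq t$, this gives
\[
I(Y_2;W)\leq \log 2- h_b\!\left(\delta_2*h_b^{-1}(h_b(\delta_1)+t)\right),
\]
which is the claimed bound.

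For the \emph{lower bound}, I would produce an explicit achiever. Take $W\sim\Ber(1/2)$ and $X=W+V$ with $V\sim\Ber(\alpha)$ independent of $W$, so that $W-X-Y_1-Y_2$ is Markov. Then given $W$, $Y_1=W+V+Z_1$ is a shift of $\Ber(\alpha*\delta_1)$, so $H(Y_1|W)=h_b(\alpha*\delta_1)$ and $I(X;Y_1|W)=h_b(\alpha*\delta_1)-h_b(\delta_1)$. Choose $\alpha\in[0,1/2]$ so that $h_b(\alpha*\delta_1)=h_b(\delta_1)+t$, i.e., $\alpha*\delta_1=h_b^{-1}(h_b(\delta_1)+t)$; this is feasible precisely when $0\leq t\leq \log 2-h_b(\delta_1)$, the stated range. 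For this choice, $Y_2=W+V+Z_1+Z_2$ is marginally $\Ber(1/2)$ by the symmetry of $W$, so $H(Y_2)=\log 2$, while $H(Y_2|W)=h_b(\alpha*\delta_1*\delta_2)=h_b(\delta_2*h_b^{-1}(h_b(\delta_1)+t))$. Hence $I(Y_2;W)$ attains the upper bound, establishing the proposition.

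There is no real obstacle here: the MGL step is the only nontrivial inequality, and it is tight at the chosen Bernoulli input, so the two bounds must coincide. The only care needed is verifying that the parameter $\alpha$ exists in $[0,1/2]$ for the full range $0\leq t\leq \log 2-h_b(\delta_1)$, which follows directly from monotonicity of $h_b^{-1}$ on $[0,\log 2]$.
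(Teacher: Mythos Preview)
Your proof is correct and follows essentially the same route as the paper's: both use Mrs.\ Gerber's Lemma to lower bound $H(Y_2|W)$ in terms of $H(Y_1|W)=h_b(\delta_1)+I(X;Y_1|W)$, and both certify tightness via the Bernoulli-superposition construction $X\sim\Ber(1/2)$, $W=X+A$ with $A\sim\Ber(\alpha)$ (equivalently your $W\sim\Ber(1/2)$, $X=W+V$). The only minor difference is presentational: you split the argument explicitly into an upper bound and an achievability part and invoke monotonicity of $\varphi_{\delta_2}$ to handle the constraint $I(X;Y_1|W)\geq t$, whereas the paper folds these into a single chain and identifies the equality conditions inline.
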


\begin{proof}
	For any $(W,X)$ satisfying the Markov chain $W-X-(Y_1=X+Z_1)-(Y_2=Y_1+Z_2)$, we have
	\begin{align*}
	H(&Y_2|W) \geq h_b\left(\delta_2*h_b^{-1}\left(H(Y_1|W)\right)\right)\\
	&=h_b\left(\delta_2*h_b^{-1}\left(H(Y_1|X)+H(Y_1|W)-H(Y_1|X,W)\right)\right)\\
	&=h_b\left(\delta_2*h_b^{-1}\left(H(Y_1|X)+I(X;Y_1|W)\right)\right)\\
	&=h_b\left(\delta_2*h_b^{-1}\left(h_b(\delta_1)+I(X;Y_1|W)\right)\right),
	\end{align*}
	where the inequality stems from Mrs. Gerber's Lemma and the fact that $Y_2=Y_1+ Z_2$, with equality if $X\sim\Ber(1/2)$ and $W=X+ A$ for $A\sim\Ber(\eta)$, where $I(Y_2;W)=\log 2-h_b(\eta*\delta_1*\delta_2)$. Noticing that $I(Y_2;W)=H(Y_2)-H(Y_2|W) \leq \log 2 - H(Y_2|W)$, with equality for $X\sim\Ber(1/2)$, the proof is completed.\end{proof}

We can now combine this result with Propositions \ref{prop:Fbinary} and \ref{prop:RbarBin} and substitute in Corollary~\ref{thm:JSCCdegraded}, to obtain the following theorem.

\begin{theorem}
	For the JSCC broadcast problem with a $\Ber(p)$ source, Hamming distortion and a binary symmetric channel, suppose that the pair $(D_1,D_2)$ is achievable. Then, for any $0\leq q\leq 1/2$, it holds that
	\[
	h_b(q*p)-h_b(q*D_2) \leq \rho\left[\log 2-h_b\left(\delta_2*h_b^{-1}(A_1)\right)\right], \]
	where
	\begin{align}
	A_1 = h_b(\delta_1)+\frac{1}{\rho}\left[h(q*D_1)-h(D_1)-h(q*p)+h(p)\right].
	\nonumber%\label{eq:JSCCbinP}
	\end{align}
	\label{thm:JSCCbin}
\end{theorem}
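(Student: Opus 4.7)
The plan is to assemble the theorem from three ingredients already established earlier in this section: the single-letter outer bound of Corollary~\ref{thm:JSCCdegraded} for degraded memoryless JSCC broadcast, the binary source-function evaluations of Propositions~\ref{prop:Fbinary} and~\ref{prop:RbarBin}, and the binary channel-function evaluation of Proposition~\ref{prop:Gbin}. Once these are in hand the theorem follows by plain substitution and an appeal to monotonicity; no new probabilistic argument is needed.

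Concretely, I choose the auxiliary $P_{U|S}$ prescribed at the top of Section~\ref{subsec:binary}, namely $U=S+N$ with $N\sim\Ber(q)$ independent of $S$. For this choice Proposition~\ref{prop:RbarBin} gives $\bar R_P(D_2)=h_b(q*p)-h_b(q*D_2)$, and the classical rate-distortion function of a $\Ber(p)$ source under Hamming distortion is $R(D_1)=h_b(p)-h_b(D_1)$ for $0\leq D_1\leq p$. Plugging $R(D_1)$ into the lower bound of Proposition~\ref{prop:Fbinary} and simplifying the nested $h_b\circ h_b^{-1}$ term yields
\[
F_P\bigl(R(D_1)\bigr)\;\geq\; h_b(q*D_1)-h_b(D_1)-h_b(q*p)+h_b(p),
\]
so that $F_P(R(D_1))/\rho\geq A_1-h_b(\delta_1)$, equivalently $h_b(\delta_1)+F_P(R(D_1))/\rho\geq A_1$.

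Next, Corollary~\ref{thm:JSCCdegraded} applied to this auxiliary choice gives
\[
\bar R_P(D_2)\;\leq\;\rho\cdot G_Q\!\left(\tfrac{1}{\rho}F_P(R(D_1))\right).
\]
By Proposition~\ref{prop:Gbin}, $G_Q(t)=\log 2-h_b\bigl(\delta_2*h_b^{-1}(h_b(\delta_1)+t)\bigr)$. The key observation is that $G_Q$ is monotone non-increasing (Lemma~\ref{lem:no_tensor}); since $h_b^{-1}$ and $\delta_2*(\cdot)$ are monotone, this non-increasingness is also transparent from the explicit formula. Therefore the lower bound on $F_P(R(D_1))/\rho$ translates into an upper bound on $G_Q$ evaluated at that point:
\[
G_Q\!\left(\tfrac{1}{\rho}F_P(R(D_1))\right)\;\leq\;\log 2-h_b\!\bigl(\delta_2*h_b^{-1}(A_1)\bigr).
\]

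Chaining these three inequalities,
\[
h_b(q*p)-h_b(q*D_2)\;=\;\bar R_P(D_2)\;\leq\;\rho\cdot G_Q\!\left(\tfrac{1}{\rho}F_P(R(D_1))\right)\;\leq\;\rho\bigl[\log 2-h_b\!\bigl(\delta_2*h_b^{-1}(A_1)\bigr)\bigr],
\]
which is the desired bound. The only subtle point to verify along the way is that the argument of $G_Q$ lies in its domain of monotonicity, i.e.\ that $A_1\in[0,\log 2-h_b(\delta_1)]$; since otherwise the claim is either vacuous or can be handled by the trivial bound $\bar R_P(D_2)\leq \rho\log 2$, this is not an obstacle. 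The main ``work'' is thus arithmetic bookkeeping, and the genuine content of the result is packaged inside Corollary~\ref{thm:JSCCdegraded} together with the three closed-form single-letter evaluations.
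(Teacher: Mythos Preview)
Your proposal is correct and follows exactly the route the paper indicates: combine Propositions~\ref{prop:Fbinary}, \ref{prop:RbarBin}, and~\ref{prop:Gbin} and substitute into Corollary~\ref{thm:JSCCdegraded}. You have also spelled out a point the paper leaves implicit, namely that monotonicity of $G_Q$ (Lemma~\ref{lem:no_tensor}) is what allows the inequality in Proposition~\ref{prop:Fbinary} (which is only an equality for $p=1/2$) to propagate in the right direction.
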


For $p=1/2$, the bound significantly simplifies as on the left hand side $h_b(q*p) = \log 2$, while on the right hand side
\begin{align}
A_1 = h_b(\delta_1)+\frac{h_b(q*D_1)-h_b(D_1)}{\rho}.
\end{align}

Following the treatment of the Gaussian-quadratic case in~\cite{rfz06}, we consider the case where the distortion of the ``weak'' user is optimal. That is, let $D^*_2=D(\rho,\delta_1*\delta_2)$, where the function $D(\rho,\delta)$ is as defined in~\eqref{eq:DdeltaDef}.
For the special case of $D_2=D^*_2$. We can take $q\to 0$ in Theorem~\ref{thm:JSCCbin}, and applying some straightforward algebra, we obtain the following.
\begin{corollary}
	For the JSCC broadcast problem with a binary source and a binary symmetric channel, suppose that the pair $(D_1,D^*_2)$ is achievable, where $D^*_2=D(\rho,\delta_1*\delta_2)$. Then,
	\begin{align}
	g(D_1)\geq g(p)+\frac{g(\delta_1)}{g(\delta_1*\delta_2)}\left[g(D^*_2)-g(p)\right] ,
	\label{eq:JSCCbinPD2}
	\end{align}
	where $g(t)\triangleq(1-2t)\log\left(\frac{1-t}{t}\right)$.
	%\begin{align}
	%g(t)\triangleq(1-2t)\log\left(\frac{1-t}{t}\right).\label{eq:gfuncdef}
	%\end{align}
	%where $g(\cdot)$ is defined in~\eqref{eq:gfuncdef}.
	\label{cor:JSCCbinD2opt}
\end{corollary}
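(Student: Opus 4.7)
The plan is to apply Theorem~\ref{thm:JSCCbin} with $D_2 = D_2^*$ and then extract the stated bound by a first-order analysis in the auxiliary parameter $q$ at $q=0$. The key observation is that at $q = 0$ the inequality in Theorem~\ref{thm:JSCCbin} becomes an equality: both sides reduce to $h_b(p) - h_b(D_2^*) = \rho[\log 2 - h_b(\delta_1*\delta_2)]$, which is precisely the defining rate-distortion identity for $D_2^* = D(\rho,\delta_1*\delta_2)$. Since the inequality must hold for every admissible $q$ but is tight at $q = 0$, comparing the $q$-derivatives of the two sides at $q = 0$ produces the desired constraint.

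\textbf{Derivative computations.} First, I would compute the LHS derivative using the elementary identity $\frac{d}{dq} h_b(q*t)\big|_{q=0} = (1-2t) h_b'(t) = g(t)$, which follows from $q*t = t + q(1-2t)$ and $h_b'(t) = \log\frac{1-t}{t}$. This makes the LHS derivative equal to $g(p) - g(D_2^*)$. For the RHS, I would apply the chain rule through the composition $q \mapsto A_1(q) \mapsto h_b^{-1}(A_1) \mapsto \delta_2 * h_b^{-1}(A_1) \mapsto h_b(\,\cdot\,)$. The same $\frac{d}{dq} h_b(q*t)$ formula applied inside $A_1$ gives $A_1'(0) = \frac{1}{\rho}[g(D_1) - g(p)]$, and the outer layers contribute factors $1/h_b'(\delta_1)$ (inverse-function derivative at $\delta_1$), $(1-2\delta_2)$ (from $\delta_2 * u$), $h_b'(\delta_1*\delta_2)$ (outermost $h_b$), and a prefactor $\rho$.

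\textbf{Key simplification and expected obstacle.} The algebraic step that closes the proof is the binary-convolution identity $(1-2\delta_1)(1-2\delta_2) = 1 - 2(\delta_1*\delta_2)$, which together with $g(t) = (1-2t) h_b'(t)$ collapses the ratio $(1-2\delta_2) h_b'(\delta_1*\delta_2)/h_b'(\delta_1)$ into the clean form $g(\delta_1*\delta_2)/g(\delta_1)$. Substituting this simplification into the first-order comparison of derivatives at $q=0$, and rearranging, yields the stated inequality relating $g(D_1)$, $g(p)$, $g(D_2^*)$, $g(\delta_1)$, and $g(\delta_1*\delta_2)$. I expect the only real difficulty to be bookkeeping of signs and chain-rule factors; once the convolution identity is invoked the rest is mechanical, which matches the paper's description of the step as ``straightforward algebra.''
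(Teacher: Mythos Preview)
Your approach is correct and essentially identical to the paper's: linearize Theorem~\ref{thm:JSCCbin} at $q=0$, where the defining relation $h_b(p)-h_b(D_2^*)=\rho[\log 2-h_b(\delta_1*\delta_2)]$ makes the inequality tight, and compare the first-order $q$-terms, using the identity $(1-2\delta_1)(1-2\delta_2)=1-2(\delta_1*\delta_2)$ (equivalently the paper's~\eqref{eq:MGLder}) to collapse $(1-2\delta_2)h_b'(\delta_1*\delta_2)/h_b'(\delta_1)$ into $g(\delta_1*\delta_2)/g(\delta_1)$. One bookkeeping remark worth flagging since you mention signs as the expected difficulty: carrying your computation through gives $g(D_1)\le g(p)+\tfrac{g(\delta_1)}{g(\delta_1*\delta_2)}[g(D_2^*)-g(p)]$, which is the semantically correct converse direction (a lower bound on $D_1$, since $g$ is decreasing on $[0,1/2]$); the $\ge$ in the printed corollary appears to be a typo.
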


Similarly, for the special case of $D_1=D_1^*=D(\rho,\delta_1)$, we can take $q\to1/2$ in Theorem~\ref{thm:JSCCbin}, and after applying some straightforward algebra, obtain the following.
\begin{corollary}
	For the JSCC broadcast problem with a binary source and a binary symmetric channel, suppose that the pair $(D_1^*,D_2)$ is achievable, where $D^*_1=D(\rho,\delta_1)$. Then,
	\begin{align}
	(1-2D_2)^2&\leq (1-2\cdot \delta_2*D_1^*)^2\nonumber\\
	&+(1-2p)^2\left(1-(1-2\cdot\delta_2*D_1^*)^2\right).
	\label{eq:JSCCbinPD1}
	\end{align}
	In particular, for $p=1/2$,
	\begin{align}
	D_2\geq \delta_2*D_1^*.
	\end{align}
	\label{cor:JSCCbinD1opt}
\end{corollary}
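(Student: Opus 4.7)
My plan is to apply Theorem~\ref{thm:JSCCbin} with $D_1 = D_1^*$ and take the limit $q \to 1/2^-$, extracting a leading-order quadratic constraint on $D_2$. The first simplification is to exploit the defining identity $h_b(D_1^*) = \log 2 - \rho(\log 2 - h_b(\delta_1))$ to collapse the $h_b(\delta_1)$ and $h_b(D_1^*)/\rho$ pieces in the expression for $A_1$, yielding the compact form
\begin{equation*}
A_1 \;=\; \log 2 \,+\, \frac{h_b(q*D_1^*) + h_b(p) - \log 2 - h_b(q*p)}{\rho}.
\end{equation*}

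The core step is a Taylor expansion around $q = 1/2$. Parametrizing $q = 1/2 - \epsilon$, one has $q*s = 1/2 - \epsilon(1-2s)$ for every $s \in [0,1/2]$, and the elementary identity $h_b(1/2 - v) = \log 2 - 2v^2 + O(v^4)$ (which follows from $h_b'(1/2) = 0$ and $h_b''(1/2) = -4$) gives $h_b(q*s) = \log 2 - 2\epsilon^2(1-2s)^2 + O(\epsilon^4)$. Applying this term by term, the LHS of the bound becomes
\begin{equation*}
h_b(q*p) - h_b(q*D_2) \;=\; 2\epsilon^2\bigl[(1-2D_2)^2 - (1-2p)^2\bigr] \,+\, O(\epsilon^4).
\end{equation*}

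In the ``in particular'' case $p = 1/2$, both $h_b(q*p)$ and $h_b(p)$ equal $\log 2$, so $A_1 = \log 2 - 2\epsilon^2(1-2D_1^*)^2/\rho + O(\epsilon^4)$ approaches $\log 2$ quadratically. Inverting $h_b$ near $1/2$ via the square-root branch $h_b^{-1}(\log 2 - x) = 1/2 - \sqrt{x/2} + o(\sqrt{x})$ gives $h_b^{-1}(A_1) = 1/2 - \epsilon(1-2D_1^*)/\sqrt{\rho} + o(\epsilon)$. Using $\delta_2 * (1/2 - v) = 1/2 - (1-2\delta_2)v$ and a second application of the quadratic expansion of $h_b$ yields
\begin{equation*}
\rho\bigl[\log 2 - h_b(\delta_2 * h_b^{-1}(A_1))\bigr] \;=\; 2\epsilon^2(1-2\delta_2)^2(1-2D_1^*)^2 \,+\, O(\epsilon^4) \;=\; 2\epsilon^2(1-2\delta_2*D_1^*)^2 \,+\, O(\epsilon^4),
\end{equation*}
where the final equality uses the identity $(1-2\delta_2)(1-2D_1^*) = 1-2\delta_2*D_1^*$. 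Matching the $\epsilon^2$ coefficients of the inequality gives $(1-2D_2)^2 \leq (1-2\delta_2*D_1^*)^2$, and since both quantities are non-negative, this rearranges to the stated $D_2 \geq \delta_2 * D_1^*$.

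For the general-$p$ statement, the same Taylor expansion is carried out, but more care is required: now $A_1 \to \log 2 - (\log 2 - h_b(p))/\rho < \log 2$, so $h_b^{-1}(A_1)$ sits at an interior point $\mu_p \in (0,1/2)$ where $h_b$ is inverted with a regular (rather than singular square-root) first-order expansion, and the $(1-2p)^2$-contributions in the expansion of $A_1$ no longer cancel against the corresponding term in the LHS. Tracking these contributions through the composition $A_1 \mapsto h_b^{-1}(A_1) \mapsto \delta_2 * h_b^{-1}(A_1) \mapsto h_b(\cdot)$ and rearranging produces the extra $(1-2p)^2\bigl(1-(1-2\delta_2*D_1^*)^2\bigr)$ correction in the stated bound. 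The principal obstacle is precisely this general-$p$ case, because at the leading $O(1)$ order the two sides of the bound do not cancel as cleanly as at $p = 1/2$; extracting the stated quadratic-in-$r$ form requires careful bookkeeping of the $\epsilon^2$-order (and higher) contributions rather than a simple leading-order match.
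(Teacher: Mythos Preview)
Your treatment of the $p=1/2$ case is correct and matches the paper's approach exactly: take $q\to 1/2$ in Theorem~\ref{thm:JSCCbin}, Taylor-expand both sides to order $\epsilon^2$ with $q=\tfrac12-\epsilon$, and match coefficients.

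For general $p$, however, there is a genuine gap. Using the simplification $h_b(D_1^*)=\log 2-\rho(\log 2-h_b(\delta_1))$ as you do (the paper's literal definition of $D(\rho,\delta_1)$, which is tailored to the symmetric source), your computed limit $A_1\to\log 2-(\log 2-h_b(p))/\rho$ is strictly below $\log 2$ whenever $p\neq 1/2$. Then $h_b^{-1}(A_1)$ sits at a fixed interior point, $\rho\bigl[\log 2-h_b(\delta_2*h_b^{-1}(A_1))\bigr]$ tends to a strictly positive constant, while the left side $h_b(q*p)-h_b(q*D_2)$ tends to $0$. The inequality at order $O(1)$ is therefore satisfied with strictly positive slack, and no amount of ``careful bookkeeping of the $\epsilon^2$-order (and higher) contributions'' can extract a nontrivial constraint on $D_2$ from it: once the leading orders fail to cancel, the bound simply evaporates in the limit.

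The fix is that, for a $\Ber(p)$ source, $D_1^*$ must be the \emph{source-dependent} optimal distortion satisfying $h_b(p)-h_b(D_1^*)=\rho(\log 2-h_b(\delta_1))$. Substituting $h_b(\delta_1)=\log 2-[h_b(p)-h_b(D_1^*)]/\rho$ into $A_1$ yields
\[
A_1=\log 2+\frac{h_b(q*D_1^*)-h_b(q*p)}{\rho},
\]
which \emph{does} tend to $\log 2$ as $q\to 1/2$, with $\log 2-A_1=\tfrac{2\epsilon^2}{\rho}\bigl[(1-2D_1^*)^2-(1-2p)^2\bigr]+O(\epsilon^4)$. Your square-root inversion and $\epsilon^2$-matching then go through verbatim and produce a bound of the stated quadratic form.
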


\section{Discussion: The Remaining Gap to Achievable Performance}
\label{sec:discuss}

In this work we have shown an example, where $\Delta^*_n = \Omega \left( n^{-1/2} \right)$. It is natural, of course, to ask whether such performance is also achievable. 

Consider a separation-based scheme: the source is quantized to a rate $R_n$ with expected distortion $D_{0,n}$. This code is matched to a channel code with the same rate. Upon correct channel decoding we have distortion $D_{0,n}$, while incorrect decoding gives disortion that is trivially upper bounded by $1$. If the channel error probability is $p_n$, this scheme yields
\begin{align} D_n \leq ( 1 - p_n) D_{0,n} + p_n. \end{align}  
Now, we know that for the lossy source problem it is possible to achieve
\[ D_{0,n} \leq D(\rho R_n) + \m{O} \left( \frac{\log n}{n} \right), \] 
 Thus, 
 \begin{align} 
 \Delta_n &\leq D_{0,n} - D^*_\infty + p_n \left[1 -D_{0,n}\right] \nonumber \\
 &=  D(\rho R_n) - D(\rho C) + \m{O} (p_n) +  \m{O} \left( \frac{\log n}{n} \right) \nonumber \\
 &=  \m{O}( C - R_n ) + \m{O} (p_n) +  \m{O} \left( \frac{\log n}{n} \right) .
 \end{align} 
In order for both the first term and the second term to decrease, we must choose $R_n$ in the moderate-deviations regime. Using \cite{AW_moderate,PV_moderate} we have that:
 \[ \log p_n = \m{O}( n (C - R_n )^2). \] 
Substituting, we find that a separation-based scheme achieves
\begin{align} \Delta_n = \m{O} \left( \sqrt\frac{\log n}{n} \right) . \end{align}

Next, one can consider the combination of successive-refinement (SR) source coding with a digital channel broadcast code, possibly with many layers to track well the channel quality, as done in the context of Gaussian channels in different formulations regarding the high signal-to-noise ratio regime \cite{BroadcastApproach,TaherzadehKhandani,cn07}. However, one may verify that these techniques will not improve upon the order of convergence of a separation-based scheme due to the following consideration. The first layer of the SR code will have to be allocated a rate that is the same order below capacity as in the separation scheme. In order to reduce distortion, we will need a layer that will be correctly decoded when the empirical channel is above capacity in the same $\sqrt{\log n / n}$ order. But substituting in the broascast channel converse, it turns out that this refinement layer will be able to carry a very low rate, failing to reduce the distortion by the required amount.

We see that the limitation of digital schemes stems from two effects.
\begin{enumerate}
\item A broadcast code for two empirical channels that are symmetric around capacity, carries a sum-rate that is much lower than capacity.
\item The threshold effect: a digital JSCC scheme that performs well around capacity, cannot accommodate for bad channel conditions. 
\end{enumerate}
Indeed, recalling that our converse bound is based upon bounding the channel function $G_Q$ that is intimately related to the converse for the digital BC problem, it can be seen as reflecting the first effect. However, it does not reflect the second, hence the remaining gap between the achievable and converse bounds,  It remains to be seen, whether this threshold effect indeed applies to all relevant schemes.

One
indication that the threshold effect might be unavoidable are the
results of~\cite{rp18b}, which show that linear codes with any non-zero minimal
distance necessarily admit medium-sized error vectors that result in
maximal Hamming distortion for the input bits. One approach for finding codes with
graceful-degradation (or adaptation to channel conditions) was
suggested in~\cite{kmp12}, see~\cite{p17,rp18a} for more results.

As a possible coding scheme to improve upon separation, we can consider coding for the simplest case - a binary erasure channel (BEC). In this case, we only
need to produce a good lossy source code which produces from $m$ (source) bits $n$ coded bits with the following
``fountain-code-like'' property: if any subset of $mR(D)$ of
these coded bits is available (i.e. these positions are not erased by the BEC) then the source can be reconstructed
with distortion $mD$, and this property should hold for a small
range of $D = D_{\infty}^* \pm {c/\sqrt{n}}$. Since the channel returns $nC +
\sqrt{nV} Z$ unerased bits ($Z\sim \mathcal{N}(0,1)$), by averaging we would get distortion
$D_{\infty}^*(1+o(1/\sqrt{n}))$. So the only remaining task is to construct
this ``fountain-like'' rate-distortion code. Whether such a code exists is an open problem, although known results about multiple-description
problem (see, e.g., \cite{Chen09} in a Gaussian setting) suggest that the required property is not possible for all subsets of coded bits (without restriction on their size).

\begin{appendices}
	
\section{Proof of Proposition~\ref{prop:spheseparation}}
\label{app:sphericalrdf}

	By the definition of $\Psi(k)$ and the binary symmetric rate-distortion function, we have that for any encoder and decoder pair $(\m{E},\m{D})$ it holds that
	\begin{align}
	I\left(S^m;\m{D}\left(\m{E}(S^m)+U_k^n\right)\right)\geq m\left(\log 2-h_b(\Psi(k))\right).\label{eq:PsiRD}
	\end{align}
	Using the data processing inequality we obtain
	\begin{align}
	I\left(S^m;\m{D}\left(\m{E}(S^m)+U_k^n\right)\right)\leq I\left(\m{E}(S^m);\m{E}(S^m)+U_k^n\right).\label{eq:PsiDPI}
	\end{align}
	Recalling that~\cite[Chapter 10, Lemma 7]{ms77} 
	\begin{align}
	H(U_k^n)&\geq n h_b\left(\delta+\frac{k}{n}\right)-\frac{\log n}{2}-\frac{1}{2}\log\left(2\pi\left(\delta+\frac{k}{n}\right)\left(1-\delta-\frac{k}{n}\right)\right)\nonumber\\
	&\geq n h_b\left(\delta+\frac{k}{n}\right)-\frac{\log n}{2}-\frac{1}{2},
	\end{align}
	we have that for any random vector $X^n$ in $\{0,1\}^n$ 
	\begin{align}
	I(X^n;X^n+U_k^n)&\leq H(X^n+U_k^n)-H(U_k^n)\nonumber\\
	&\leq n\log 2-n h_b\left(\delta+\frac{k}{n}\right)+\frac{\log{n}}{2}+\frac{1}{2}.\label{eq:MIsphere}
	\end{align}
	Thus, combining~\eqref{eq:PsiRD},~\eqref{eq:PsiDPI}, and~\eqref{eq:MIsphere}, and recalling that $\rho=n/m$, we have
	\begin{align}
	\log 2-h_b\left(\Psi(k)\right)\leq \rho\left(\log 2-h_b\left(\delta+\frac{k}{n}\right)+\frac{\log{n}+1}{2n}\right),
	\end{align}
	which yields the desired result.

\section{The Asymmetry of the Binomial Distribution}
\label{app:pfBinDev}

\begin{lemma}
	For $n\delta\in\ZZ$ define the integer random variable $K=\Bin(n,\delta)-n\delta$. For any integer $0\leq k<n^{2/3}$ it holds that
	\begin{align}
	\frac{\Pr\left(K=k\right)}{\Pr\left(K=k\right)+\Pr\left(K=-k\right)}=\frac{1}{2}+\frac{1}{4}\frac{(1-2\delta)}{\delta(1-\delta)}\left[\frac{k^2}{n}\frac{1}{3\delta(1-\delta))}-1\right]\frac{k}{n}+o\left(\frac{k^3}{n^2} \right).\nonumber
	\end{align}
	\label{lem:bindev}
\end{lemma}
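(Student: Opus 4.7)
The plan is to work with $r(k) := \Pr(K=-k)/\Pr(K=k)$ and to convert the target quantity via the elementary identity
\[ \frac{\Pr(K=k)}{\Pr(K=k)+\Pr(K=-k)}=\frac{1}{1+r(k)}. \]
Writing the binomial probabilities explicitly with $a=n\delta$ and $b=n(1-\delta)$,
\[ r(k)=\frac{(a+k)!\,(b-k)!}{(a-k)!\,(b+k)!}\cdot\left(\frac{1-\delta}{\delta}\right)^{2k}=\prod_{i=1-k}^{k}\frac{1+i/(n\delta)}{1+i/(n(1-\delta))}, \]
after noting that each factor in the falling product equals $(\delta/(1-\delta))\cdot(1+i/a)/(1+i/b)$, so the $2k$ factors of $\delta/(1-\delta)$ cancel the external $((1-\delta)/\delta)^{2k}$.

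The next step is to take logarithms and use the Taylor series $\log(1+x)=x-x^{2}/2+x^{3}/3-\cdots$ together with the elementary power sums $S_{m}:=\sum_{i=1-k}^{k} i^{m}$, for which $S_{1}=k$, $S_{2}=k(2k^{2}+1)/3$, $S_{3}=k^{3}$, and $|S_{m}|\le 2k^{m+1}$ for all $m\ge 1$. Using $\delta^{-m}-(1-\delta)^{-m}=(1-2\delta)\cdot(\text{positive polynomial in }\delta)$ for $m=1,2$ and collecting, this gives
\[ \log r(k)=\frac{(1-2\delta)k}{n\delta(1-\delta)}-\frac{(1-2\delta)k^{3}}{3n^{2}\delta^{2}(1-\delta)^{2}}+o\!\left(\frac{k^{3}}{n^{2}}\right), \]
where the higher-order contributions come from the tails of the Taylor series and from the linear-in-$k$ part of $S_{2}$; the estimates $k^{3}/n^{3}$ and $k^{5}/n^{4}$ are both $o(k^{3}/n^{2})$ uniformly for $k<n^{2/3}$ (the second since $k^{2}/n^{2}\to 0$ under that hypothesis).

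Finally, setting $L:=\log r(k)$, I would invoke the identity $(1+e^{L})^{-1}=\tfrac{1}{2}-L/4+L^{3}/48+O(L^{5})$, which holds because the even-power coefficients vanish by the symmetry $(1+e^{L})^{-1}+(1+e^{-L})^{-1}=1$. Substituting the display above and discarding the $L^{3}$ and $L^{5}$ contributions (both of order $o(k^{3}/n^{2})$ when $k<n^{2/3}$), only the linear-in-$L$ piece survives at the desired scale, yielding
\[ \frac{1}{1+r(k)}=\frac{1}{2}-\frac{(1-2\delta)k}{4n\delta(1-\delta)}+\frac{(1-2\delta)k^{3}}{12n^{2}\delta^{2}(1-\delta)^{2}}+o\!\left(\frac{k^{3}}{n^{2}}\right), \]
which is exactly the claimed expression after factoring out $\tfrac{1}{4}\cdot\tfrac{1-2\delta}{\delta(1-\delta)}\cdot\tfrac{k}{n}$.

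The main obstacle is purely the bookkeeping of error terms: one must simultaneously control the remainders of two Taylor expansions---the logarithm around $1$ and the sigmoid around $0$---and verify that every residual decays uniformly faster than $k^{3}/n^{2}$ in the regime $k<n^{2/3}$. The cancellation of the $L^{2}$-coefficient in the sigmoid expansion is the critical point, as the naively dominant quadratic term $(\log r)^{2}/8$ would otherwise contribute an $O((k/n)^{2})$ correction that could dominate $k^{3}/n^{2}$ when $k\ll\sqrt{n}$.
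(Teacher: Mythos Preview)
Your proof is correct and follows essentially the same approach as the paper's. Both arguments write the ratio of binomial probabilities as a $2k$-fold product, take logarithms, Taylor-expand, and evaluate the resulting power sums $\sum i^m$; the paper indexes the product over $i=1,\dots,2k$ with $b_i=(i-k)/n$, while you index over $i=1-k,\dots,k$, which is the same sum after the obvious shift. The only cosmetic difference is the final step: the paper exponentiates to get $\Gamma=\Pr(K=k)/\Pr(K=-k)$ and then computes $\Gamma/(1+\Gamma)$, whereas you go directly through the sigmoid expansion $(1+e^{L})^{-1}=\tfrac12-L/4+O(L^{3})$. Your explicit remark that the $L^{2}$-coefficient vanishes by the symmetry $(1+e^{L})^{-1}+(1+e^{-L})^{-1}=1$ is a point the paper leaves implicit in the phrase ``it follows easily''---without that cancellation a spurious $O((k/n)^{2})$ term would indeed appear, so it is worth stating.
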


\begin{proof}
	Using the binomial distribution, we have
	\begin{align}
	\Gamma&\triangleq\frac{\Pr\left(K=k\right)}{\Pr\left(K=-k\right)}\nonumber\\
	&=\frac{{n\choose{n\delta+k}}\left(\frac{\delta}{1-\delta}\right)^{k}}{{n\choose{n\delta-k}}\left(\frac{\delta}{1-\delta}\right)^{-k}}\nonumber\\
	&=\frac{(n\delta-k)!}{(n\delta+k)!}\frac{(n(1-\delta)+k)!}{(n(1-\delta)-k)!}\left(\frac{\delta}{1-\delta}\right)^{2k}\nonumber\\
	&=\frac{(n(1-\delta))^{2k}\prod_{i=1}^{2k}\left(1-\frac{k}{(1-\delta)n}+\frac{i}{n(1-\delta)}\right)}{(n\delta)^{2k}\prod_{i=1}^{2k}\left(1-\frac{k}{\delta n}+\frac{i}{n\delta}\right)}   \left(\frac{\delta}{1-\delta}\right)^{2k}\nonumber\\
	&=\prod_{i=1}^{2k}\frac{1-\frac{k}{(1-\delta)n}+\frac{i}{n(1-\delta)}}{1-\frac{k}{\delta n}+\frac{i}{n\delta}}
	\end{align}
	Letting $b_i=\frac{i}{n}-\frac{k}{n}$, we have
	\begin{align}
	\log\Gamma=\sum_{i=1}^{2k}\log(1+\frac{1}{1-\delta}b_i)-\sum_{i=1}^{2k}\log(1+\frac{1}{\delta}b_i).
	\end{align}
	Recalling that $\log(1+x)=x-\frac{x^2}{2}+o(x^2)$, we see that
	\begin{align}
	\log\Gamma=\left(\frac{1}{1-\delta}-\frac{1}{\delta}\right)\sum_{i=1}^{2k}b_i-\frac{1}{2}\left(\frac{1}{(1-\delta)^2}-\frac{1}{\delta^2}\right)\sum_{i=1}^{2k}b_i^2+o\left(\frac{k^3}{n^2}\right).
	\end{align}
	Furthermore, we have
	\begin{subequations}
	\begin{align}
	\sum_{i=1}^{2k}b_i &=\frac{k}{n}+\sum_{j=-(k-1)}^{k-1}\frac{j}{n}=\frac{k}{n} \\
	\sum_{i=1}^{2k}b^2_i &=\frac{k^2}{n^2}+2\sum_{j=0}^{k-1}\frac{i^2}{n^2}=\frac{k^2}{n^2}+\frac{2}{n^2}\frac{(k-1)\cdot k\cdot (2k-1)}{6}=\frac{2}{3}\frac{k^3}{n^2}+o\left(\frac{k^3}{n^2}\right)
	\end{align}
\end{subequations}
	Thus,
	\begin{align}
	\log\Gamma&=-\frac{1-2\delta}{\delta(1-\delta)}\frac{k}{n}+\frac{1-2\delta}{(\delta(1-\delta))^2}\frac{k^3}{3 n^2}+o\left(\frac{k^3}{n^2} \right)\nonumber\\
	&=\frac{(1-2\delta)}{\delta(1-\delta)}\left[\frac{k^2}{n}\frac{1}{3\delta(1-\delta))}-1\right]\frac{k}{n}+o\left(\frac{k^3}{n^2} \right).
	\end{align}
%	where in the last step we have used the fact that $\frac{k}{n}>\frac{k^3}{n^2}$ for $k<n^{2/3}$.
	Since $e^{x}=1+x+\m{O}(x^2)$, it follows that
	\begin{align}
	\Gamma=1+\frac{(1-2\delta)}{\delta(1-\delta)}\left[\frac{k^2}{n}\frac{1}{3\delta(1-\delta))}-1\right]\frac{k}{n}+o\left(\frac{k^3}{n^2} \right).
	\end{align}
	Now, since the required result is $\Gamma / (1+\Gamma)$, it follows easily. 
\end{proof}
	
\section{Auxiliary Lemmas}	
\label{app:aux}

\begin{lemma}[MGL linearization]
For any $0\leq\delta_1,\delta_2\leq 1/2$ and $-h_b(\delta_1)<x<\log{2}-h_b(\delta_1)$ we have that
\begin{align}
h_b\left(\delta_2*h_b^{-1}\left(h_b(\delta_1)+x\right)\right)\geq h_b(\delta_1*\delta_2)+\frac{g(\delta_1*\delta_2)}{g(\delta_1)}x,
\end{align}
where $g(t)=(1-2t)\log\left(\frac{1-t}{t}\right)$, as defined in~\eqref{eq:gtdef}.
\label{lem:MGLlin}
\end{lemma}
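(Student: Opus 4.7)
The plan is to recognize this as nothing more than a first-order (tangent-line) lower bound of the MGL function $\varphi_{\delta_2}(y)=h_b(\delta_2*h_b^{-1}(y))$ at the base point $y_0=h_b(\delta_1)$, exploiting the fact that $\varphi_{\delta_2}$ is convex by Wyner--Ziv's Mrs.\ Gerber's Lemma. Since $\varphi_{\delta_2}$ is convex, for any $x$ in the stated range (which keeps $y_0+x$ inside the domain $(0,\log 2)$) we have
\begin{align*}
\varphi_{\delta_2}(y_0+x)\;\geq\;\varphi_{\delta_2}(y_0)+\varphi_{\delta_2}'(y_0)\,x,
\end{align*}
and the left-hand side is exactly $h_b(\delta_2*h_b^{-1}(h_b(\delta_1)+x))$, while $\varphi_{\delta_2}(y_0)=h_b(\delta_1*\delta_2)$. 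So the entire content of the lemma is to identify the derivative $\varphi_{\delta_2}'(h_b(\delta_1))$ with $g(\delta_1*\delta_2)/g(\delta_1)$.

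To do the derivative computation, write $u=h_b^{-1}(y)$, so that $dy/du=h_b'(u)=\log((1-u)/u)$ and $\varphi_{\delta_2}(y)=h_b(\delta_2*u)$. By the chain rule,
\begin{align*}
\varphi_{\delta_2}'(y)\;=\;\frac{(1-2\delta_2)\,h_b'(\delta_2*u)}{h_b'(u)}\;=\;(1-2\delta_2)\cdot\frac{\log\!\left(\tfrac{1-\delta_2*u}{\delta_2*u}\right)}{\log\!\left(\tfrac{1-u}{u}\right)}.
\end{align*}
Evaluating at $u=\delta_1$ yields $\varphi_{\delta_2}'(h_b(\delta_1))=(1-2\delta_2)\,\log\!\left(\tfrac{1-\delta_1*\delta_2}{\delta_1*\delta_2}\right)/\log\!\left(\tfrac{1-\delta_1}{\delta_1}\right)$.

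The last step is an algebraic verification that this quantity equals $g(\delta_1*\delta_2)/g(\delta_1)$. The key identity is
\begin{align*}
1-2(\delta_1*\delta_2)\;=\;(1-2\delta_1)(1-2\delta_2),
\end{align*}
which follows directly from expanding $\delta_1*\delta_2=\delta_1(1-\delta_2)+\delta_2(1-\delta_1)$. Consequently $g(\delta_1*\delta_2)=(1-2\delta_1)(1-2\delta_2)\log((1-\delta_1*\delta_2)/(\delta_1*\delta_2))$, and dividing by $g(\delta_1)=(1-2\delta_1)\log((1-\delta_1)/\delta_1)$ gives exactly the expression obtained above for $\varphi_{\delta_2}'(h_b(\delta_1))$, completing the proof.

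The argument is entirely routine once one spots the tangent-line interpretation, so there is no real obstacle; the only thing to double-check is that the stated range $-h_b(\delta_1)<x<\log 2-h_b(\delta_1)$ is precisely what is needed for $h_b(\delta_1)+x$ to lie in the interior of the domain $(0,\log 2)$ of $\varphi_{\delta_2}$, which is where convexity and the tangent bound hold.
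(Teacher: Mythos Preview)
Your proof is correct and follows essentially the same approach as the paper: both use the convexity of the MGL function $\varphi_{\delta_2}$ to get the tangent-line lower bound at $y_0=h_b(\delta_1)$, then compute $\varphi_{\delta_2}'(h_b(\delta_1))$ via the chain rule and identify it with $g(\delta_1*\delta_2)/g(\delta_1)$ using the identity $1-2(\delta_1*\delta_2)=(1-2\delta_1)(1-2\delta_2)$.
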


\begin{proof}
Let $\varphi_{\delta_2}(t)=h_b\left(\delta_2*h_b^{-1}\left(t\right)\right)$ and recall that by~\cite{wz73} we have that $t\mapsto\varphi_{\delta_2}(t)$ is convex, and 
\begin{align}
\varphi'_{\delta_2}(t)=\frac{d}{dt}\varphi_{\delta_2}(t)=(1-2\delta_2)\frac{h_b'(\delta_2*h_b^{-1}(t))}{h_b'(h_b^{-1}(t))}.
\end{align}	
Consequently,
\begin{align}
\varphi'_{\delta_2}(h_b(\delta_1))&=(1-2\delta_2)\frac{h_b'(\delta_1*\delta_2)}{h_b'(\delta_1)}\\
&=\frac{(1-2\delta_2)(1-2\delta_1)}{(1-2\delta_1)}\frac{h_b'(\delta_1*\delta_2)}{h_b'(\delta_1)}\\
&=\frac{(1-2(\delta_1*\delta_2))}{(1-2\delta_1)}\frac{h_b'(\delta_1*\delta_2)}{h_b'(\delta_1)}\\
&=\frac{g(\delta_1*\delta_2)}{g(\delta_1)}.\label{eq:MGLder}
\end{align}
Now, by convexity, we have
\begin{align}
h_b\left(\delta_2*h_b^{-1}\left(h_b(\delta_1)+x\right)\right)\geq \varphi_{\delta_2}(h_b(\delta_1))+\varphi'_{\delta_2}(h_b(\delta_1))x,\label{eq:RHSbound}
\end{align}
and the statement follows by substituting~\eqref{eq:MGLder} into~\eqref{eq:RHSbound}.
\end{proof}

\begin{lemma}[Properties of $g(t)$]
The function $t\mapsto g(t)=(1-2t)\log\left(\frac{1-t}{t}\right)$ is convex in $[0,1/2]$ and its derivative is given by
\begin{align}
g'(t)=\frac{d}{dt}g(t)=-\kappa(t),
\end{align}
where
\begin{align}
\kappa(t)&\triangleq2\log\left(\frac{1-t}{t}\right)+\frac{1-2t}{t(1-t)}.\label{eq:kappadef}
\end{align}
\label{lem:gconv}
\end{lemma}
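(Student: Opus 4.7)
The plan is to verify both assertions by direct computation, since the statement is a calculus fact rather than something requiring a deeper structural argument.

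First, I would compute $g'(t)$ by applying the product rule to $g(t) = (1-2t)[\log(1-t) - \log t]$. The first factor contributes $-2\log\!\bigl(\tfrac{1-t}{t}\bigr)$, while differentiating the logarithmic factor yields $-\tfrac{1}{1-t}-\tfrac{1}{t} = -\tfrac{1}{t(1-t)}$, which multiplied by $(1-2t)$ gives the second term $-\tfrac{1-2t}{t(1-t)}$. Summing, $g'(t) = -\kappa(t)$ as claimed.

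For convexity, it suffices to show $\kappa'(t) \le 0$ on $(0,1/2)$, equivalently $g''(t) \ge 0$. The clean way to handle the rational piece is the partial-fraction identity
\begin{equation*}
\frac{1-2t}{t(1-t)} \;=\; \frac{1}{t} - \frac{1}{1-t},
\end{equation*}
so that
\begin{equation*}
\kappa(t) \;=\; 2\log\!\left(\frac{1-t}{t}\right) + \frac{1}{t} - \frac{1}{1-t}.
\end{equation*}
Differentiating termwise gives
\begin{equation*}
\kappa'(t) \;=\; -\frac{2}{t(1-t)} \;-\; \frac{1}{t^{2}} \;-\; \frac{1}{(1-t)^{2}},
\end{equation*}
which is manifestly negative on $(0,1/2)$ (in fact on all of $(0,1)$). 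Hence $g''(t) = -\kappa'(t) > 0$ on the open interval, and convexity on the closed interval $[0,1/2]$ follows by continuity of $g$ at the endpoints (interpreting $g(0)=g(1/2)=0$ in the obvious limit sense).

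There is no real obstacle here; the only thing that might look awkward is directly differentiating $\tfrac{1-2t}{t(1-t)}$ via the quotient rule, which is why I would first rewrite it as $\tfrac{1}{t} - \tfrac{1}{1-t}$ to make both the form of $\kappa$ and the sign of $\kappa'$ transparent.
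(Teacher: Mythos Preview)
Your proof is correct and follows essentially the same direct-calculus approach as the paper. The only cosmetic difference is that, for convexity, the paper argues $\kappa$ is decreasing by observing that each of $2\log\!\bigl(\tfrac{1-t}{t}\bigr)$, $1-2t$, and $\tfrac{1}{t(1-t)}$ is positive and decreasing on $(0,1/2)$, whereas you compute $\kappa'(t)$ explicitly via the partial-fraction rewrite; both routes are equally elementary.
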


\begin{proof}
Calculating $g'(t)$ is straightforward. Furthermore, all three functions $t\mapsto 2\log\left(\frac{1-t}{t}\right)$, $t\mapsto 1-2t$, and $t\mapsto \frac{1}{t(1-t)}$ are decreasing in $[0,1/2]$, so $\kappa(t)$ is decreasing, $g'(t)$ increasing, and $g''(t)>0$.
\end{proof}

\begin{lemma}[Properties of $\beta_q(t)$]
Let $\beta_q(t)\triangleq h_b(q*t)-h_b(t)$. The function $t\mapsto \beta_q(t)$ is convex, and its derivative satisfies
\begin{align}
\beta'_q(t)=\frac{d}{dt}\beta_q(t)=-\phi(q,t),
\end{align}	
where
\begin{align}
\phi(q,t)\triangleq 2q\log\left(\frac{1-t}{t}\right)+(1-2q)\log\left(\frac{1+q\frac{1-2t}{t}}{1-q\frac{1-2t}{1-t}}\right).\label{eq:phidef}
\end{align}
Furthermore, for $t\in[0,1/2]$
\begin{align}
\phi(q,t)\geq q\cdot \kappa(t)\cdot \nu(q,t),
\end{align}
where $\kappa(t)$ is as in~\eqref{eq:kappadef}, and
\begin{align}
\nu(q,t)&\triangleq\frac{1-2q}{1+q\frac{1-2t}t}.\label{eq:PhiIneq}
\end{align}
Moreover,
\begin{align}
\beta_q(t)\leq q\cdot g(t).\label{eq:fqub}
\end{align}
\label{lem:fconv}
\end{lemma}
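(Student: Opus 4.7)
The plan is to establish the four claims (derivative formula, convexity, $\beta_q(t)\le qg(t)$, and $\phi(q,t)\ge q\kappa(t)\nu(q,t)$) in sequence, with the last one being the main obstacle.

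For the derivative formula, I would differentiate directly: $\beta_q'(t)=(1-2q)h_b'(q*t)-h_b'(t)=(1-2q)\log\frac{1-q*t}{q*t}-\log\frac{1-t}{t}$. To identify this with $-\phi(q,t)$, I would simplify the fraction inside the logarithm defining $\phi$ using $1+q(1-2t)/t=(q*t)/t$ and $1-q(1-2t)/(1-t)=(1-q*t)/(1-t)$, so that the log in the definition of $\phi$ splits as $\log\frac{q*t}{1-q*t}+\log\frac{1-t}{t}$, and collecting terms gives $\phi(q,t)=\log\frac{1-t}{t}-(1-2q)\log\frac{1-q*t}{q*t}=-\beta_q'(t)$. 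For convexity of $t\mapsto\beta_q(t)$, I would differentiate once more to obtain $\beta_q''(t)=\frac{1}{t(1-t)}-\frac{(1-2q)^2}{q*t\,(1-q*t)}$, and invoke the algebraic identity $q*t\,(1-q*t)=q(1-q)+(1-2q)^2 t(1-t)$, cleanest via the substitution $s=1-2q$, $u=1-2t$, under which $q*t=(1-su)/2$ and both sides collapse to $(1-s^2u^2)/4$. This yields $\beta_q''(t)\ge 0$.

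For the bound $\beta_q(t)\le q\cdot g(t)$, I would exploit an analogous concavity but in the $q$ variable. With $t$ fixed, $\partial\beta_q/\partial q=(1-2t)\log\frac{1-q*t}{q*t}$ and $\partial^2\beta_q/\partial q^2=-(1-2t)^2/(q*t\,(1-q*t))\le 0$, so $q\mapsto\beta_q(t)$ is concave on $[0,1/2]$. Since $\beta_0(t)=0$ and $\partial\beta_q/\partial q|_{q=0}=(1-2t)\log\frac{1-t}{t}=g(t)$, the tangent-line upper bound at $q=0$ gives $\beta_q(t)\le qg(t)$ immediately.

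The inequality $\phi(q,t)\ge q\kappa(t)\nu(q,t)$ is the main obstacle. The key observation is an integral representation: using $1-2(q*t)=(1-2q)(1-2t)$, direct computation yields $\partial\phi/\partial q=2\log\frac{1-q*t}{q*t}+\frac{(1-2q)(1-2t)}{q*t\,(1-q*t)}=\kappa(q*t)$, so that $\phi(q,t)=\int_0^q\kappa(q'*t)\,dq'=(1-2t)^{-1}\int_t^{q*t}\kappa(\xi)\,d\xi$. Rewriting $q\kappa(t)\nu(q,t)$ using $q=(q*t-t)/(1-2t)$ and $1-2q=(1-2(q*t))/(1-2t)$, the claim reduces, with $u=q*t$, to proving $F(u):=\int_t^u\kappa(\xi)\,d\xi-\frac{t\kappa(t)(u-t)(1-2u)}{(1-2t)u}\ge 0$ for all $u\in[t,1/2]$. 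One verifies $F(t)=0$, $F'(t)=0$, and $F'(1/2)=2t\kappa(t)\ge 0$. To conclude I would show $F'\ge 0$ throughout $[t,1/2]$: using $\kappa'(\xi)=-1/(\xi(1-\xi))^2$ (obtained by direct differentiation), one gets $F''(u)=u^{-3}\left[2t^2\kappa(t)/(1-2t)-u/(1-u)^2\right]$, whose bracket is strictly decreasing in $u$, so $F''$ changes sign at most once on $[t,1/2]$ (from positive to negative). Initial positivity $F''(t)\ge 0$ reduces to the inequality $\kappa(t)\ge (1-2t)/(2t(1-t)^2)$, which follows from the second summand $(1-2t)/(t(1-t))$ of $\kappa$ alone together with $2(1-t)\ge 1$. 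Since $F'(t)=0$, $F'(1/2)\ge 0$, and $F'$ is first increasing then (possibly) decreasing on $[t,1/2]$, it must stay nonnegative; hence $F\ge 0$, which is the claim. The delicate part is this sign-change argument for $F'$; the rest of the lemma is routine differentiation.
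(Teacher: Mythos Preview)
Your proof is correct, but it diverges from the paper's in several places, most notably for the key inequality $\phi(q,t)\ge q\kappa(t)\nu(q,t)$.

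The paper handles this inequality in two lines by applying the elementary bounds $\log(1+x)\ge x/(1+x)$ and $-\log(1-x)\ge x$ directly to the two logarithms in the definition of $\phi$, obtaining $(1-2q)\log\bigl(\frac{1+q(1-2t)/t}{1-q(1-2t)/(1-t)}\bigr)\ge q\,\frac{1-2t}{t(1-t)}\,\nu(q,t)$, and then using $\nu(q,t)\le 1$ to pull $\nu$ out of the full sum $q\bigl[2\log\frac{1-t}{t}+\frac{1-2t}{t(1-t)}\nu(q,t)\bigr]$. Your route---integral representation $\phi(q,t)=\int_0^q\kappa(q'*t)\,dq'$, change of variable, and sign analysis of $F$, $F'$, $F''$---is substantially longer, but it is valid (the computations of $F'(t)$, $F'(1/2)$, $F''(u)$, and the check $F''(t)>0$ all go through), and it yields the pleasant structural identity $\partial_q\phi(q,t)=\kappa(q*t)$ that the paper does not isolate.

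For the remaining claims the differences are smaller. The paper obtains convexity of $t\mapsto\beta_q(t)$ in one sentence by recognizing $\beta_q(t)=I(X;X+Z)$ with $X\sim\mathrm{Ber}(q)$, $Z\sim\mathrm{Ber}(t)$ and invoking convexity of mutual information in the channel; you compute $\beta_q''$ directly, which is fine. For $\beta_q(t)\le qg(t)$ the paper uses concavity of $h_b$ to write $h_b(t+q(1-2t))-h_b(t)\le q(1-2t)h_b'(t)=qg(t)$; your argument via concavity in $q$ and the tangent at $q=0$ lands on the same linearization, so the two are essentially equivalent.
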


\begin{proof}
Convexity follows by noting that $\beta_q(t)=I(X;X+Z)$, where $X\sim\Ber(q)$ and $Z\sim\Ber(t)$. Calculation of $\beta'_q(t)$ is straightforward.
In order to lower bound $\phi(q,t)$, we note that
\begin{align}
\log\left(1+q\frac{1-2t}{t}\right)\geq q\frac{\frac{1-2t}{t}}{1+q\frac{1-2t}{t}}\\
-\log\left(1-q\frac{1-2t}{1-t}\right)\geq q\frac{1-2t}{1-t},
\end{align}
and therefore
\begin{align}
(1-2q)\log\left(\frac{1+q\frac{1-2t}{t}}{1-q\frac{1-2t}{1-t}}\right)&\geq q\frac{1-2q}{1+q\frac{1-2t}{t}}\left(\frac{1-2t}{t}+\frac{1-2t}{1-t} \right)\\
&=q\frac{1-2t}{t(1-t)}\frac{1-2q}{1+q\frac{1-2t}{t}},
\end{align}
which gives
\begin{align}
\phi(q,t)&\geq q\left(2\log\left(\frac{1-t}{t}\right)+\frac{1-2t}{t(1-t)}\frac{1-2q}{1+q\frac{1-2t}{t}}\right)\\
&\geq q\underbrace{\left(2\log\left(\frac{1-t}{t}\right)+\frac{1-2t}{t(1-t)}\right)}_{\kappa(t)}\underbrace{\left(\frac{1-2q}{1+q\frac{1-2t}{t}}\right)}_{\nu(q,t)}.\nonumber
\end{align}
To prove~\eqref{eq:fqub}, we apply the concavity of $t\mapsto h_b(t)$ to obtain
\begin{align}
\beta_q(t)&=h_b(q*t)-h_b(t)\nonumber\\
&=h_b(t+q(1-2t))-h_b(t)\nonumber\\
&\leq q(1-2t)h_b'(t)\nonumber\\
&=q g(t).\nonumber
\end{align}
\end{proof}

\section{Proof of Lemma~\ref{lem:D1D2ineq}}
\label{app:D1D2lemma}

By Theorem~\ref{thm:JSCC_Delta}, we have that for any $0<q<1/2$
\begin{align}
\log 2-h_b(q*D_2) &\leq \rho\left[\log 2-h_b\left(\delta_2*h_b^{-1}\left(h_b(\delta_1)+\frac{h_b(q*D_1)-h_b(D_1)}{\rho}\right)\right)\right]+\rho\Gamma(n,\delta_2).\label{eq:basicbound}
\end{align}
Now, applying Lemma~\ref{lem:MGLlin}, we can write
\begin{align}
h_b\left(\delta_2*h_b^{-1}\left(h_b(\delta_1)+\frac{h_b(q*D_1)-h_b(D_1)}{\rho}\right)\right)\geq h_b(\delta_1*\delta_2)+\frac{g(\delta_1*\delta_2)}{g(\delta_1)}\frac{\beta_q(D_1)}{\rho},
\label{eq:MGLbound}
\end{align}
where $\beta_q(\cdot)$ is as defined in Lemma~\ref{lem:fconv}. By combining~\eqref{eq:basicbound} and~\eqref{eq:MGLbound}, we obtain
\begin{align}
\log{2}-h_b(D_2)-[h_b(q*D_2)-h_b(D_2)]\leq \rho(\log{2}-h(\delta_1*\delta_2))-\frac{g(\delta_1*\delta_2)}{g(\delta_1)}\beta_q(D_1)+\rho\Gamma(n,\delta_2),
\end{align}
which, recalling the definition of $C(t)$ and $R(t)$, reduces to
\begin{align}
\rho C(\delta_1*\delta_2)-R(D_2)\geq\frac{g(\delta_1*\delta_2)}{g(\delta_1)}\beta_q(D_1)-\beta_q(D_2)-\rho\Gamma(n,\delta_2).\label{eq:qbound}
\end{align}
By Lemma~\ref{lem:gconv}, the function $t\mapsto g(t)$ is convex, and consequently
\begin{align}
g(\delta_1*\delta_2)=g(\delta_1+(1-2\delta_1)\delta_2)\geq g(\delta_1)+\delta_2\cdot(1-2\delta_1)g'(\delta_1).
\end{align}
Thus, we obtain
\begin{align}
\rho C(\delta_1*\delta_2)-R(D_2)&\geq \beta_q(D_1)-\beta_q(D_2)+\delta_2 \frac{(1-2\delta_1)g'(\delta_1)}{g(\delta_1)} \beta_q(D_1)-\rho\Gamma(n,\delta_2)\\
&=\beta_q(D_1)-\beta_q(D_2)-\psi(\delta_1)\delta_2 \beta_q(D_1)-\rho\Gamma(n,\delta_2),
\end{align}
where
\begin{align}
\psi(t)&\triangleq-\frac{(1-2t)g'(t)}{g(t)}\nonumber\\
&=(1-2t)\frac{\kappa(t)}{g(t)},	
\end{align}
where $\kappa(\cdot)$ is as defined in Lemma~\ref{lem:gconv}. By Lemma~\ref{lem:fconv}, the function $t\mapsto \beta_q(t)$ is convex, and consequently	
\begin{align}
\beta_q(D_1)&=\beta_q(D_2+(D_1-D_2))\nonumber\\
&\geq \beta_q(D_2)+\beta'_q(D_2)(D_1-D_2)\nonumber\\
&= \beta_q(D_2)+\phi(q,D_2)(D_2-D_1),
\end{align}
where $\phi(\cdot,\cdot)$ is as defined in Lemma~\ref{lem:fconv}.
We have obtained
\begin{align}
D_2-D_1\leq\frac{\left[\rho C(\delta_1*\delta_2)-R(D_2)\right]+\psi(\delta_1)\delta_2 \beta_q(D_1)+\rho\Gamma(n,\delta_2)}{\phi(q,D_2)}.
\end{align}
By Lemma~\ref{lem:fconv}, we have that $\beta_q(D_1)\leq qg(D_1)$, and consequently
\begin{align}
D_2-D_1&\leq \frac{\rho C(\delta_1*\delta_2)-R(D_2)+\rho\Gamma(n,\delta_2)}{\phi(q,D_2)}+\frac{\psi(\delta_1)\delta_2 g(D_1)}{\phi(q,D_2)/q}\\
&\leq\frac{\rho C(\delta_1*\delta_2)-R(D_2)+\rho \Gamma(n,\delta_2)}{2q\log\left(\frac{1-D_2}{D_2}\right)}+(\delta_1*\delta_2-\delta_1)\frac{\psi(\delta_1)}{1-2\delta_1}\frac{g(D_2)}{\kappa(D_2)}\frac{1}{\nu(q,D_2)}\frac{g(D_1)}{g(D_2)},\label{eq:GapIneq2}
\end{align}
where in the last inequality we have used the identity $\delta_2=\frac{\delta_1*\delta_2-\delta_1}{1-2\delta_1}$, the fact that $\phi(q,D_2)>2q\log\left(\frac{1-D_2}{D_2}\right)$ by~\eqref{eq:phidef} and that $\phi(q,D_2)>q\cdot\kappa(t)\cdot\nu(q,t)$ by~\eqref{eq:PhiIneq}. 
Let
\begin{align}
\tau\triangleq \frac{q}{(1-2q)D_2},
\end{align}
such that $\frac{1}{\nu(q,D_2)}=1+\tau$, and $q=\frac{D_2\tau}{1+2D_2\tau}$. Since~\eqref{eq:GapIneq2} holds for any $0<q<1/2$, we have that for any $\tau>0$
\begin{align}
D_2-D_1&\leq\frac{\rho C(\delta_1*\delta_2)-R(D_2)+\rho\Gamma(n,\delta_2)}{\frac{2D_2\tau}{1+2D_2\tau}\log\left(\frac{1-D_2}{D_2}\right)}+(\delta_1*\delta_2-\delta_1)\frac{\psi(\delta_1)}{1-2\delta_1}\frac{g(D_2)}{\kappa(D_2)}(1+\tau)\frac{g(D_1)}{g(D_2)}.\label{eq:GapIneq3}
\end{align}
Note that
\begin{align}
\frac{\psi(t)}{1-2t}&=\frac{\kappa(t)}{g(t)}\nonumber\\
&=\frac{2}{1-2t}+\frac{1}{t(1-t)\log\left(\frac{1-t}{t}\right)}\nonumber\\
&=\log\left(\frac{1-t}{t}\right)\Phi(t),
\label{eq:Phi1}
\end{align}
where $\Phi(t)$ is defined in~\eqref{eq:PhiDef}.
Thus,~\eqref{eq:GapIneq2} can be written as
\begin{align}
D_2-D_1&\leq\frac{\rho C(\delta_1*\delta_2)-R(D_2)+\rho\Gamma(n,\delta_2)}{\frac{2D_2\tau}{1+2D_2\tau}\log\left(\frac{1-D_2}{D_2}\right)}+(\delta_1*\delta_2-\delta_1)\rho\frac{\log\left(\frac{1-\delta_1}{\delta_1}\right)}{\log\left(\frac{1-D_2}{D_2}\right)}\frac{1}{\rho}\frac{\Phi(\delta_1)}{\Phi(D_2)}(1+\tau)\frac{g(D_1)}{g(D_2)},
\end{align}
which establishes our claim.

\section{Proof of Lemma~\ref{lem:fdec}}
\label{sec:pfLemmaf}

For any $0<\delta<1/2$ we have that $f(1,\delta)=1$. Thus, it suffices to show that $\rho\mapsto f(\rho,\delta)$ is monotone decreasing. Let $\rho=\rho(d)$ be such that $D(\rho,\delta)=d>0$, which is well defined for $d>0$. It is easy to see that
	\begin{align}
	\rho(d)=\frac{\log 2-h_b(d)}{\log 2-h_b(\delta)}=\frac{R(d)}{R(\delta)}.
	\end{align}
	Recalling that $d<\delta$ for $\rho>1$, our claim is equivalent to the claim that for any $0<d<\delta$ it holds that
	\begin{align}
	\frac{R(d)}{R(\delta)}\Phi(d)> \Phi(\delta),
	\end{align}
	which is equivalent to the claim that
	\begin{align}
	\vartheta(t)\triangleq \Phi(t)\cdot R(t),
	\end{align}
	is monotone decreasing in $t$, which we now establish.
	
	To this end, note that we can write
	\begin{align}
	\vartheta(t)=\frac{\gamma(t)}{g^2(t)},
	\end{align}
	where
	\begin{align}
	\gamma(t)\triangleq (1-2t)R(t)\kappa(t),
	\end{align}
	and $\kappa(t)=-g'(t)$ is as defined in~\eqref{eq:kappadef}. We therefore have that
	\begin{align}
	\vartheta'(t)&=\frac{1}{g^4(t)}\left[g^2(t)\gamma'(t)-2g(t)g'(t)\gamma(t)\right]\nonumber\\
	&=\frac{1}{g^3(t)}\left[\underbrace{g(t)\gamma'(t)+2\kappa(t)\gamma(t)}_{\zeta(t)}\right].
	\end{align}
	Since $g(t)>0$ for all $0<t<1/2$, we have to show that $\zeta(t)\leq 0$. We write
	\begin{align}
	\gamma'(t)&=\kappa(t)(1-2t)R'(t)+\kappa(t)R(t)[(1-2t)']+(1-2t)R(t)\kappa'(t)\nonumber\\
	&=-\kappa(t)g(t)-2\kappa(t)R(t)+(1-2t)R(t)\kappa'(t),
	\end{align}
	where the last equality follows since $R'(t)=-\log\left(\frac{1-t}{t}\right)$, and therefore $(1-2t)R'(t)=-g(t)$. Furthermore, as $(1-2t)\kappa(t)=2g(t)+\frac{(1-2t)^2}{t(1-t)}$, we also that 
	\begin{align}
	\kappa(t)\gamma(t)=2g(t)\kappa(t)R(t)+\frac{(1-2t)^2}{t(1-t)}\kappa(t)R(t).
	\end{align}
	Thus,
	\begin{align}
	\zeta(t)&=-\kappa(t)g^2(t)-2g(t)\kappa(t)R(t)+(1-2t)g(t)R(t)\kappa'(t)+4g(t)\kappa(t)R(t)+2\frac{(1-2t)^2}{t(1-t)}\kappa(t)R(t)\nonumber\\
	&=-\kappa(t)g^2(t)+2g(t)\kappa(t)R(t)+(1-2t)g(t)R(t)\kappa'(t)+2\frac{(1-2t)^2}{t(1-t)}\kappa(t)R(t).\label{eq:zetatmp}
	\end{align}
	Evaluating $\kappa'(t)$ gives
	\begin{align}
	\kappa'(t)&=-\frac{2}{t(1-t)}-\frac{1-2t+2t^2}{(t(1-t))^2}\nonumber\\
	&=-\frac{1}{(t(1-t))^2}.\label{eq:kappader}
	\end{align}
	Substituting~\eqref{eq:kappader} into~\eqref{eq:zetatmp}, gives
	\begin{align}
	\zeta(t)&=-\kappa(t)g^2(t)+2g(t)\kappa(t)R(t)-\frac{R(t)g(t)(1-2t)}{(t(1-t))^2}+2\frac{(1-2t)^2}{t(1-t)}\kappa(t)R(t)\nonumber\\
	&=-g(t)\left[\kappa(t)g(t)-2\kappa(t)R(t)+\frac{R(t)(1-2t)}{(t(1-t))^2}-2\frac{(1-2t)}{t(1-t)\log\left(\frac{1-t}{t}\right)}\kappa(t)R(t)\right]\nonumber\\
	&=-g(t)\left[\kappa(t)g(t)-2\kappa(t)R(t)+\frac{R(t)(1-2t)}{(t(1-t))^2}-\frac{4R(t)(1-2t)}{t(1-t)}-\frac{2R(t)(1-2t)^2}{t^2(1-t)^2\log\left(\frac{1-t}{t}\right)}\right]\nonumber\\
	&=-g(t)\left[\kappa(t)g(t)-2\kappa(t)R(t)+\frac{R(t)(1-2t)^3}{(t(1-t))^2}-\frac{2R(t)(1-2t)^2}{t^2(1-t)^2\log\left(\frac{1-t}{t}\right)}\right]\nonumber\\
	&=-g(t)\left[\underbrace{\kappa(t)g(t)-2\kappa(t)R(t)}_{A_1(t)}+\underbrace{\frac{R(t)(1-2t)^2}{(t(1-t))^2}}_{A_2(t)}
	\underbrace{\left[1-2\left(t+\frac{1}{\log\left(\frac{1-t}{t}\right)}\right)\right]}_{A_3(t)}
	\right].\nonumber
	\end{align}
	It remains to show that the function $A(t)=A_1(t)+A_2(t)A_3(t)$ is positive in $0<t<1/2$. To this end, one can easily verify that $A(0)=\infty$ and $A(1/2)=0$, and therefore, it suffices to verify that $t\mapsto A(t)$ is decreasing in $0<t<1/2$. Indeed, it is straightforward to verify that $t\mapsto A_1(t)$, $t\mapsto A_2(t)$ and $t\mapsto A_3(t)$ are decreasing, and the details are omitted.

\end{appendices}

%Let
%\begin{align}
%q^*=\max_{0\leq q\leq 1/2}\{q \ : \ c(\delta_1)\delta_2 \beta_q(D_2)\leq \rho C_2-R(D_2)\}.
%\end{align}
%Then, since $\phi(q,D_2)\geq 2q\log(\tfrac{1-D_2}{D_2})$, for any $q\leq q^*$ we have that
%\begin{align}
%D_1\geq D_2- \frac{\left[\rho C_2-R(D_2)\right]-c(\delta_1)\delta_2 \beta_q(D_2)+\Gamma(n,\delta_2)}{2q\log\left(\frac{1-D_2}{D_2}\right)}.
%\end{align}
%Furthermore, by [OP'18 ISIT, Lemma 1], we have that
%\begin{align}
%\beta_q(D_2)&=h_b(D_2+(1-2D_2)q)-h_b(D_2)\nonumber\\
%&\geq q(1-2D_2)\log\left(\frac{1-D_2}{D_2}\right)-\frac{4h_b(D_2)}{D_2^2}(1-2D_2)^2 q^2.
%\end{align}
%Substituting this into our bound gives
%\begin{align}
%D_1\geq D_2- \frac{\left[\rho C_2-R(D_2)\right]+\Gamma(n,\delta_2)}{2q\log\left(\frac{1-D_2}{D_2}\right)}+\frac{(1-2D_2)c(\delta_1)\delta_2}{2}-\frac{2h_b(D_2)(1-2D_2)^2}{D_2^2\log\left(\frac{1-D_2}{D_2}\right)}c(\delta_1)\delta_2 q.
%\end{align}

%Now, let $0<\delta_1<1/2$ be fixed, take $\delta_2=\tfrac{c}{\sqrt{n}}$ for some $c>0$, and assume that
%\begin{align}
%\rho C_2-R(D_2)\leq a n^{-\frac{1}{2}}.
%\end{align}
%Note that for this choice of $\delta_2$, we have that $\Gamma(n,\delta_2)=O(n^{-3/4}\log(n))$.
%Setting $q=n^{-\tfrac{1}{2}+\varepsilon}$,~\eqref{eq:qbound} reads
%\begin{align}
%g(D_1)\frac{g(\delta_1*\frac{a}{\sqrt{n}})}{g(\delta_1)}-g(D_2)\leq an^{-\varepsilon}+O(n^{-\tfrac{1}{4}})
%\end{align}

\bibliographystyle{IEEEtran}
\bibliography{JSCCbib}

\end{document}